\documentclass[11pt]{article}
\usepackage[utf8]{inputenc}
\usepackage{amsmath,amsfonts,amssymb}
\usepackage{amsthm}
\usepackage{latexsym}
\usepackage[noadjust]{cite}
\usepackage{graphicx}
\usepackage{xcolor}
\usepackage{dirtytalk}
\usepackage{mathtools}
\usepackage[margin=1in]{geometry}
\usepackage{thm-restate}
\usepackage{enumitem}
\usepackage[linesnumbered,ruled]{algorithm2e}
\usepackage[colorlinks=true, allcolors=blue]{hyperref}
\usepackage[nameinlink,capitalise]{cleveref}
\hypersetup{
    citecolor={violet}
}

\newtheorem{theorem}{Theorem}[section]
\newtheorem{corollary}[theorem]{Corollary}
\newtheorem{lemma}[theorem]{Lemma}
\newtheorem{observation}[theorem]{Observation}

\newtheorem{claim}[theorem]{Claim}
\theoremstyle{definition}
\newtheorem{definition}[theorem]{Definition}
\newtheorem{remark}[theorem]{Remark}
\newtheorem{fact}[theorem]{Fact}

\newcommand{\F}{\mathbb{F}}
\newcommand{\N}{\mathbb{N}}
\newcommand{\E}{\mathbb{E}}
\newcommand{\R}{\mathbb{R}}

\newcommand{\rank}{\mathrm{rank}}

\newcommand{\zo}{\{0, 1\}}

\newcommand{\eps}{\epsilon}

\newcommand{\M}{\mathcal{M}}
\newcommand{\I}{\mathcal{I}}
\newcommand{\Ind}{\mathrm{Ind}}

\newcommand{\Hyp}{\mathrm{Hyp}}

\newcommand{\dKL}{d_\mathrm{KL}}

\title{A Near-Optimal Parallel Algorithm for Finding Matroid Bases}
\date{\today}

\author{Sanjeev Khanna\thanks{Courant Institute, New York University, New York, NY, USA.  Supported in part by NSF award CCF-2625203 and AFOSR award FA9550-25-1-0107. Email: {\tt sanjeev.khanna@nyu.edu}.} \and Aaron Putterman\thanks{School of Engineering and Applied Sciences, Harvard University, Cambridge, Massachusetts, USA. Supported in part by the Simons Investigator Awards of Madhu Sudan and Salil Vadhan and AFOSR award FA9550-25-1-0112. Email: \texttt{aputterman@g.harvard.edu}.} \and Junkai Song\thanks{Courant Institute, New York University, New York, NY, USA. Supported in part by NSF award CCF-2625203. Email: \texttt{junkaisong@nyu.edu}.}}

\begin{document}

\pagenumbering{gobble}

\maketitle

\begin{abstract}
We settle the classic question of the parallel complexity of computing a matroid basis, as first posed in the seminal work of Karp, Upfal, and Wigderson (FOCS 1985, JCSS 1988). Our algorithm runs in $O(n^{1/3}\log^{1/3}n)$ rounds, matching the lower bound of KUW up to a $\log^{2/3}(n)$ factor.
\end{abstract}

\clearpage

\pagebreak  

\tableofcontents

\pagebreak

\pagenumbering{arabic}

\section{Introduction}

\paragraph{Background}

In this work, we study the parallel complexity of finding matroid bases. A matroid $\M$ is given by a \emph{ground set} of $n$ elements $E$, along with a set $\I \subseteq 2^E$ called the set of independent sets. This set $\I$ satisfies many convenient properties:
\begin{enumerate}
    \item (Non-empty) $\emptyset \in \I$.
    \item (Downward Closure) For a set $S \subseteq E$, if $S \in \I$, then for any $S' \subseteq S$, $S' \in \I$.
    \item (Extension Property) If $S \in \I$ and $T \in \I$, with $|T| > |S|$, then there exists an element $e \in T \setminus S$ such that $S \cup \{e\} \in \I$.
\end{enumerate}
Although the above definition is abstract, matroids are an elegant generalization of many basic combinatorial objects like cycle-free sets of edges in graphs, and linearly independent sets of vectors in a vector space. 

We will be particularly interested in \emph{bases} of matroids. A basis of a matroid is any set $S \subseteq E$ such that $S \in \I$ and $S$ is \emph{maximal}, in the sense that no element $e \in E - S$ can be added to $S$ while retaining its independence. In graphs, a basis corresponds with a spanning forest, while in a set of vectors, a basis corresponds with the typical notion of a basis (a linearly independent set of spanning vectors). 

Unfortunately, due to the broad spectrum of objects which are captured by matroids, the \emph{number} of matroids on a ground set of $n$ elements grows super-exponentially large; indeed, the number of matroids grows as $2^{2^{\Omega(n)}}$ \cite{BPV15}. For this reason, when designing algorithms for efficient computation on matroids, we use oracle access to the underlying matroid. The most basic form of oracle access (and the oracle that we will use throughout this work) is the notion of an \emph{independence oracle}. Given a matroid $\M$, an independence oracle $\Ind: 2^E \rightarrow \zo$, where for $S \subseteq E$, $\Ind(S) = \mathbf{1}[S \in \I]$.

It is not hard to see that an independence oracle suffices for computing a matroid basis in only $n$ queries; one can simply greedily build a basis by trying to add each element $e \in E$ one at a time. Unfortunately, this algorithm is \emph{not} parallelizable. Due to its greedy nature, when implemented as a parallel algorithm, this procedure requires $n$ adaptive rounds of queries.

Nevertheless, because of the broad applicability of matroid basis finding, Karp, Upfal, and Wigderson posed the following question in their seminal work \cite{KUW85, KUW88}:

\begin{center}
    \emph{What is the parallel complexity of finding a basis of an arbitrary matroid?}
\end{center}

Formally, \cite{KUW85, KUW88} ask for an algorithm which, given any matroid $\M = (E, \I)$, makes rounds of queries to the independence oracle, where each round is permitted $\mathrm{poly}(n)$ many queries in parallel, and finally outputs a basis of $\M$.

Because of the generality of matroids, this question is central in the study of parallel combinatorial optimization, with connections to seminal work on maximal independent sets~\cite{Lub86} and graph matchings~\cite{Lov79, KUW86, FGT16, ST17}, and more recent attention paid to submodular function minimization~\cite{BS20, CCK21, CGJS22}, submodular function maximization~\cite{BS18a,BS18b,BRS19a,BRS19b,CQ19a,CQ19b,EN19,ENV19,FMZ19a,FMZ19b,KMZ+19,CFK19,BBS20,LLV20}, and matroid intersection~\cite{GGR22, GT17, Bli22, BT25}.

\paragraph{Our Results}

Already in the work of \cite{KUW85, KUW88}, it was shown that there are algorithms which far outperform the greedy algorithm. Indeed, \cite{KUW85, KUW88} presented an algorithm which makes $\mathrm{poly}(n)$ queries to the independence oracle, yet computes a basis  of the matroid in only $O(\sqrt{n})$ many adaptive rounds. They paired this result with a lower bound, showing that there are classes of matroids (specifically, a class called partition matroids) which require $\Omega \left (\frac{n^{1/3}}{\log^{1/3}n} \right )$ many rounds of $\mathrm{poly}(n)$ queries to the independence oracle to find a basis. 

This gap between $\sqrt{n}$ and $\frac{n^{1/3}}{\log^{1/3}n}$ persisted for the ensuing four decades, until the recent work of Khanna, Putterman, and Song \cite{KPS25, KPS26}. In a sequence of works, \cite{KPS25} first showed parallel basis finding algorithms with complexity $\widetilde{O}(n^{7/15})$ rounds, and \cite{KPS26} presented a refinement which improved the round complexity to $\widetilde{O}(n^{3/7})$ rounds. Nevertheless, the polynomial gap between upper and lower bounds for this problem still remained.

As our primary contribution, we essentially close this gap:
\begin{theorem}\label{thm:1-3roundsintro}
There is an algorithm which for any arbitrary matroid $\mathcal{M}$ on $n$ elements, makes polynomially many independence queries per round and recovers a basis of $\mathcal{M}$ in $O(n^{1/3}\log^{1/3}n)$ rounds with high probability. 
\end{theorem}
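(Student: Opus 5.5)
The plan is to maintain a current independent set $B$ and a set $R$ of remaining candidate elements: those $e \notin B$ not yet known to be spanned by $B$. Each phase should either shrink $R$ or grow $B$ substantially, so that the whole run takes $O(n^{1/3}\log^{1/3} n)$ rounds.

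\textbf{The basic primitive.} In one round, for a random ordering $e_1,\dots,e_m$ of $R$ and a prefix length $k$, query $\Ind(B\cup\{e_1,\dots,e_i\})$ for all $i$. The largest $i$ for which the answer is $1$ gives a prefix we can add to $B$ in a single round. Alongside this, query $\Ind(B\cup\{e\})$ for every $e \in R$ and discard the elements already spanned by $B$. This is the KUW idea: a prefix that is too long fails, a prefix that is too short adds few elements.

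\textbf{Potential argument.} The KUW analysis gives $O(\sqrt n)$ rounds. The improvement has to come from the following observation. When a random prefix of length $k$ fails at position $i$, the failure certifies a dependence, and a random subsequent element is then spanned by $B$ with noticeable probability. After adding the prefix, a single filtering round removes a constant fraction of the elements that the added set spans.

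I would define a potential $\Phi = |R| + (\text{rank deficit})\cdot n^{1/3}$ and choose $k \approx n^{1/3}$ adaptively, using parallel probes over a geometric range of $k$. The claim to aim for is: in each round, either about $n^{1/3}$ elements are added to $B$, or about $|R|/n^{1/3}$ elements are eliminated from $R$.
\begin{itemize}
\item The first case can occur at most $n^{2/3}$ times in total.
\item Tracking both measures together gives $O(n^{1/3}\log^{1/3}n)$ rounds, provided the elimination step loses only a $\log$ factor from a union bound over random orderings.
\end{itemize}
To make each round productive, I would run $\mathrm{poly}(n)$ independent random orderings in parallel and keep the best outcome. Concentration comes from hypergeometric and KL-divergence tail bounds, which is presumably why the paper defines $\Hyp$ and $\dKL$.

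\textbf{Main obstacle.} The hard step is the structural lemma. Take the independent prefix found by the random ordering, of length about the threshold $t$ where random prefixes stop being independent. One must show that contracting it spans a $t/n$-type fraction of the remaining elements of $R$, even in adversarial matroids where spans are highly non-uniform.

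My first attempt would be an averaging argument over random orderings. For a uniformly random $e\in R$ and a random independent prefix $P$, the probability that $e$ lies in $\mathrm{span}(B\cup P)$ equals the probability that the prefix fails when extended by $e$. That probability is tied to the threshold $t$ through the distribution of the stopping time.

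If a single-level analysis does not balance to exactly $n^{1/3}\log^{1/3} n$, I would fall back on a multi-scale version. Here elements are bucketed by their "criticality level", meaning the prefix length at which each becomes spanned, and each phase processes all $O(\log n)$ levels in parallel. This is the refinement I expect the earlier $n^{7/15}$ and $n^{3/7}$ works were approaching.
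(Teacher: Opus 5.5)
\textbf{There is a genuine gap: the round accounting does not close, and the flaw is structural.} Your per-round dichotomy says that either about $n^{1/3}$ elements join $B$, or about $|R|/n^{1/3}$ elements are eliminated. That is essentially the Karp--Upfal--Wigderson one-round tradeoff (an independent set of size $k$, or about $n/k$ redundant elements) at $k=n^{1/3}$.
\begin{itemize}
\item As you note yourself, on a matroid of rank $\Theta(n)$ the first case can recur $\Theta(n^{2/3})$ times, each costing a round. So the best this yields is $\Theta(n^{2/3})$ rounds, which is worse than KUW's $O(\sqrt n)$.
\item Your potential shows the same thing: $\Phi$ can start at $\Theta(n^{4/3})$ and drops by at most about $n^{2/3}$ per round.
\item More generally, any ``add $k$ or delete about $n/k$ per round'' argument costs on the order of $n/k+k$ rounds up to logarithmic factors, which is never below roughly $\sqrt n$. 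A per-round argument reaching $\tilde O(n^{1/3})$ would need $\tilde\Omega(n^{2/3})$ progress in every round, and nothing in your plan supplies that.
\end{itemize}
The structural lemma you single out is also false as stated. In the graphic matroid of $n/3$ disjoint triangles, the first circuit of a random ordering appears at $t=\Theta(n^{2/3})$. The maximal independent prefix typically spans only $\Theta(n^{1/3})$ further edges (one per triangle with exactly two chosen edges), not a $t/n$ fraction of $R$. The multi-scale fallback is too unspecified to assess.

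\textbf{The missing idea is amortization over many rounds.} The paper does not make progress in every round.
\begin{itemize}
\item \emph{Peeling phase.} It spends $O(n^{1/3}/\log^{2/3}n)$ rounds peeling off greedily-optimal sets $S_1,\dots,S_k$. In each $S_i$, every element lies in the first circuit of a random permutation of $S_i$ with probability at least $n^{-9}$. The number of peels is bounded because, among peeled sets of comparable size, the ratio $\alpha(S_j)/|S_j|$ must grow at a quantified rate: after rescaling to a common size, by an additive term of order $\sqrt{\alpha\log n}$. This is where the hypergeometric point-probability bounds are used.
\item \emph{Contraction case.} If some $\alpha(S_k)/|S_k|\ge 1/(100\log n)$, one extra round finds and contracts a random independent set of size $\Omega(n/\log n)$.
\item \emph{Deletion case.} Otherwise, for each $S_i$ in parallel, it samples $S_i'\subseteq S_i$ of size $\lceil(1-1/\log n)|S_i|\rceil$. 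It then certifies elements of $S_i\setminus S_i'$ as spanned by maximal independent prefixes of random orderings of $S_i'$.
\end{itemize}
KL divergence enters not as a tail bound but through the entropic contraction of the up-operator, $d_{\mathrm{KL}}(U_{i\to m}\mu,\mathcal{U}_m)\le\frac{n-m}{n-i}\,d_{\mathrm{KL}}(\mu,\mathcal{U}_i)$. This shows that the density (at least $n^{-10}$) of independent $g_x$-subsets of $S_i-x$ spanning $x$ survives the mild subsampling with constant probability. Hence a constant fraction of $S_i\setminus S_i'$ is deleted, for $\Omega(n/\log n)$ deletions in total. The recurrence $T(n)=T(n-\Omega(n/\log n))+O(n^{1/3}/\log^{2/3}n)$ then gives $O(n^{1/3}\log^{1/3}n)$.
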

Note that our result matches the lower bound of $\frac{n^{1/3}}{\log^{1/3}n}$ to within a $\log^{2/3}n$ factor, thereby showing that the right complexity for this problem is indeed $\widetilde\Theta(n^{1/3})$ and largely settling this line of research.\footnote{It is tempting to ask whether one can provide a tighter analysis of the lower bound instance of \cite{KUW85} to actually \emph{match} the upper bound we provide here. It turns out that this is not possible, as we show in \cref{sec:partitionMatroids} that there is an algorithm which finds bases of any partition matroid in $O(n^{1/3}/\log^{1/3}n)$ many rounds.}

On a technical level, our work introduces a new perspective for understanding how dependences emerge in a matroid under random sampling of the ground set. Our techniques are inspired by recent work on the mixing times of the matroid basis exchange walk (see, for instance, \cite{ALGV19, anari2021entropic, Lee24}), and provide a \emph{far tighter} understanding of the rates at which one can randomly sample without altering the structure of the matroid. We provide a more detailed overview of our techniques in the technical overview (see \cref{sec:techOverview}).

Our result has other immediate consequences too: one important quantity of interest in matroids is the \emph{rank} of the matroid, which is equivalent to the size of a basis. Thus, \cref{thm:1-3roundsintro} immediately implies that there is an $O(n^{1/3}\log^{1/3}n)$ round algorithm for computing the rank of any matroid. Via a simple, well-known reduction (see, for instance, \cite{BT25}), this then allows for efficient computation of the maximum or minimum weight basis in weighted matroids. More formally, a weighted matroid is given by $\M = (E, \I, w)$, where $w: E \rightarrow \R$. For a basis $B \subseteq E$, we say that the weight of $B$ is $w(B) = \sum_{e \in B} w(e)$. A basis has maximum (or minimum) weight if it has the largest (or smallest) weight among all bases of the matroid. As a formal corollary, we then have:

\begin{corollary}
    There is an algorithm which for any arbitrary, weighted matroid $\mathcal{M} = (E, \I, w)$ on $n$ elements, makes polynomially many independence queries per round and recovers a maximum (or minimum) weight basis of $\mathcal{M}$ in $O(n^{1/3}\log^{1/3}n)$ rounds with high probability. 
\end{corollary}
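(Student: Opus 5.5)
The plan is to reduce weighted basis finding to a single call of \cref{thm:1-3roundsintro}, run on an auxiliary matroid whose independence oracle we can simulate with one round of queries to $\Ind$. We treat the maximum weight case; the minimum weight case follows by replacing $w$ with $-w$.

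\textbf{Step 1 (sorting and prefixes).} Sort the elements so that $w(e_1) \ge w(e_2) \ge \dots \ge w(e_n)$, breaking ties by index. Let $E_i = \{e_1, \dots, e_i\}$ for each $i$. The classical greedy characterization says that the set
\[ B^* = \{e_i : \rank(E_i) > \rank(E_{i-1})\} \]
is a maximum weight basis. It is independent, it has size $\rank(E)$, and it agrees with the greedy algorithm run in sorted order. So it suffices to compute $\rank(E_i)$ for all $i$.

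\textbf{Step 2 (computing all prefix ranks).} Each restriction $\M|_{E_i}$ is a matroid. Its independence oracle is just $\Ind$ restricted to subsets of $E_i$, so \cref{thm:1-3roundsintro} applies to it. It returns a basis of $\M|_{E_i}$, and the size of that basis is $\rank(E_i)$.

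We run these $n$ instances simultaneously. In each round we merge their queries, which is still $\mathrm{poly}(n)$ queries per round. All $n$ instances then finish within $O(n^{1/3}\log^{1/3} n)$ rounds. By a union bound over the $n$ instances, all of them succeed with high probability, provided the per-instance failure probability is $1/\mathrm{poly}(n)$ with a large enough polynomial; this can be arranged by repetition, or is already given by the ``with high probability'' guarantee.

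\textbf{Step 3 (output).} Output $B^*$, which is read off directly from the computed prefix ranks. Once the ranks are known, this needs no further queries, so at most one extra round is used.

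\textbf{Main obstacle.} The only real subtlety is proving that $B^*$ is a maximum weight basis. This is the standard matroid greedy theorem. Write $g_1, g_2, \dots$ for the elements of $B^*$ in sorted order. Independence holds because each $g_j$ raises the rank, so $g_j$ lies outside the span of the earlier ones. Optimality follows from the exchange property, since for every $k$ the greedy prefix $\{g_1, \dots, g_k\}$ dominates the $k$ heaviest elements of any other basis. Everything else is a direct parallel composition of \cref{thm:1-3roundsintro}.
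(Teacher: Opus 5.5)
Your proposal is correct and follows the same route as the paper. The paper only observes that basis finding yields the rank and then invokes the standard reduction (see \cite{BT25}, and the proof of \cref{lem:generate-random-sequence}): sort by weight, compute every prefix rank with $n$ parallel calls to \cref{thm:1-3roundsintro}, and keep the elements where the rank increases. Your boosting-by-repetition for short prefixes also works, since one extra round of independence queries lets you keep the largest independent output among the copies.
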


Our result can also be derandomized for many natural classes of matroids. One such prominent class of matroids is the class of \emph{linear matroids}. These are matroids $\mathcal{M} = (E, \I)$ such that there exists a field $\F_q$ and a set of vectors $E = \{v_1, \dots v_n\}$, such that a set $S \subseteq [n]$ is independent if and only if $v_i:i\in S$ is linearly independent. For this class of matroids, we show:

\begin{theorem}\label{thm:1-3roundsLinearIntro}
There is a (non-uniform) deterministic algorithm which for any linear matroid $\mathcal{M}$ on $n$ elements, makes polynomially many independence queries per round and recovers a basis of $\mathcal{M}$ in $O(n^{1/3}\log^{1/3}n)$ rounds.
\end{theorem}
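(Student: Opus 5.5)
The plan is to derandomize the algorithm behind \cref{thm:1-3roundsintro}. Its only randomness is random sampling of the ground set: in each round it draws subsets of the remaining (contracted) elements at several rates, queries independence of these samples and of prefixes/extensions of them, and uses the answers to add many elements to the partial basis or to discard spanned elements. For linear matroids I would show that a fixed, polynomial-size family of samples works for \emph{every} matroid of this class, and then hardwire that family as non-uniform advice. The result is a non-uniform deterministic algorithm with the same round and query complexity.

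\textbf{Step 1: a bad-event formulation.} Fix one round. I would restate the round's guarantee as follows. Let $\mathcal{S}$ be the random multiset of samples (polynomially many, at polynomially many rates). I would isolate the property ``$\mathcal{S}$ is good for the current contracted matroid $\M'$,'' meaning the progress guarantee the analysis needs holds for $\M'$ (e.g. many elements get added or eliminated). The analysis of \cref{thm:1-3roundsintro} gives $\Pr[\mathcal{S}\text{ bad for }\M'] \le 1/2$, and likely much smaller. By running $t$ independent copies of the sample family in parallel within the same round and taking the best outcome, I can drive the failure probability to $2^{-t}$ with only polynomial overhead per round. Best-of-copies is legitimate because progress is monotone: the algorithm can check which copy certified the most progress using extra queries in that round or in $O(1)$ more rounds.

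\textbf{Step 2: counting linear matroids.} A union bound then needs the number of relevant matroids to be at most $2^{\mathrm{poly}(n)}$. This is where linearity enters, and I expect it to be the main obstacle. Over an arbitrary (possibly infinite) field, the matroid represented by $n$ vectors is determined by its set of independent sets, which is a subset of $2^{[n]}$. Naively that gives $2^{2^n}$ possibilities. However, by classical results (Warren / Milnor--Thom style bounds on sign patterns or zero patterns of polynomials, as used by Alon and others for counting representable matroids), the number of linear matroids on $n$ labelled elements is at most $2^{O(n^3)}$. Any matroid arising mid-algorithm is a contraction of a linear matroid and is therefore again linear, possibly restricted to a subset. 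So all matroids ever encountered lie in a family of size $2^{O(n^3)}$.

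\textbf{Step 3: union bound over the whole execution.} Taking $t = \Theta(n^3)$ copies, there is a single fixed choice of sample families, one per round (there are $O(n^{1/3}\log^{1/3} n)$ rounds), that is good for every linear matroid on any subset of $[n]$ simultaneously. I would hardwire this choice. Then for every input linear matroid, the algorithm follows the deterministic analysis: each round it uses the fixed samples, which are good for the current contraction. So it finishes in $O(n^{1/3}\log^{1/3}n)$ rounds with $\mathrm{poly}(n)$ queries per round.

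\textbf{Main risk.} The care needed is in Step 1. The goodness event must depend only on $\M'$ and the samples, and not on earlier random choices in a way that blows up the count. Conditioning on the current state, which is itself a linear matroid in the counted family, handles this, since the union bound is taken per round over all possible states.
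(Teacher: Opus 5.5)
Your route differs from the paper's. The paper derandomizes globally and treats the randomized algorithm as a black box. It runs $n^4$ independent copies of \cref{alg:findBasis} in parallel, each succeeding with probability at least $1/2$ within the round budget. So for a fixed matroid, all copies fail with probability at most $2^{-n^4}$. A union bound over the at most $2^{n^3/4}$ linear matroids (\cref{clm:linearMatroidBound}) then yields one random string that works for every linear matroid. One extra round queries the $n^4$ outputs and keeps the largest independent one. That output is a basis, since any independent set at least as large as a basis is itself a basis. Selection among copies is thus needed only once, at the end, where it is trivially checkable.

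You instead fix the randomness round by round and union-bound over intermediate states. This is viable in outline: every intermediate matroid is a minor of the input and hence linear. The extra bookkeeping in the state (the current phase and the peeled sets $S_1,\dots,S_k$) adds only a $2^{O(n\log n)}$ factor to the count.

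However, your Step 1 amplification fails for the decomposition rounds. A call to FindGreedilyOptimal, or an $\alpha$-estimation round, adds or deletes nothing. Its good event is that the empirical estimates $\widehat q_S$ are within $n^{-10}$ of $q_S$. No query answer certifies this, so there is no observable ``progress'' by which to pick the best of $t$ copies. In those rounds you must amplify by pooling samples instead. With $n^{24}$ permutations, Hoeffding plus a union bound over all $S$ gives failure probability $2^{-\Omega(n^4)}$ directly. Best-of-copies is legitimate only in the contraction and deletion rounds, where independence of a sample and $|R|$ are observable.

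With that repair your argument goes through. It still depends on the per-round structure of the analysis, whereas the paper's version needs only the end-to-end success probability and one final verification round.
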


Note that the lower bound of \cite{KUW85} does hold for the class of linear matroids even against randomized algorithms. The above upper bound shows that this lower bound can essentially be matched even with a deterministic algorithm.

\paragraph{Applications}

Naturally, because matroids have a host of applications in theoretical computer science, having faster parallel algorithms for finding matroid bases leads to better complexities in downstream applications. As remarked in \cite{KPS25, KPS26} one immediate application of more efficient basis finding is to the \emph{matroid intersection} problem. In this problem, one is given two matroids $\M_1 = (E, \I_1)$, $\M_2 = (E, \I_2)$, which are defined over the same ground set of elements. The goal is to find a set of elements $S \subseteq E$ which is independent in \emph{both} $\M_1$ and $\M_2$ and is of maximal size; i.e., $S \in \I_1 \cap \I_1$, and $|S|$ is as large as possible.

Matroid intersection has a long history of study. In the sequential (non-parallel) setting, \cite{Edm70} presented the first polynomial query algorithm, with recent improvements in the recent work of Chakrabarty et. al. \cite{CLS+19} which gave an $O(n \sqrt{r})$-rank-query algorithm for this task, and Blikstad \cite{Bli21} which gave an $O(n r^{3/4})$-independence-query algorithm. This problem has also seen significant work in the parallel setting. The work of Chakrabarty, Chen, and Khanna \cite{CCK21} established an $\Omega(n^{1/3})$-round lower bound for this problem, even in the more powerful setting of \emph{rank oracles}. \cite{Bli22} gave the first $o(n)$ round \emph{parallel} algorithms for this problem, with later improvements by \cite{BT25}, which yielded $O(n^{2/3})$ round rank-query algorithms for this problem, and $O(n^{5/6})$ round independence query algorithms.

By plugging in their improved basis-finding algorithms, \cite{KPS25, KPS26} further improved on this complexity by providing an $\widetilde{O}(n^{17/21})$ round independence query algorithm. In this same vein, our much improved basis finding algorithm yields a better algorithm for this problem:

\begin{theorem}\label{thm:matroidIntersection}
    There is an algorithm which for any arbitrary matroids $\mathcal{M}_1, \M_2$ on the same ground set of $n$ elements, makes polynomially many independence queries per round and solves the matroid intersection problem in $\tilde O(n^{7/9})$ rounds with high probability. 
\end{theorem}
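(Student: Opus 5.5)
The plan is to reduce to the framework of \cite{BT25}, as refined in \cite{KPS25, KPS26}. There, matroid intersection is solved by a parallel augmenting-path algorithm whose round complexity is governed by two quantities: the number of rounds needed to compute a basis (or, more generally, a maximal independent subset of a given set under a given contraction), which is invoked as a subroutine; and the number of phases of augmentation. Concretely, I would take the statement from \cite{KPS26} that, given any parallel basis-finding algorithm with round complexity $\widetilde{O}(n^{\alpha})$, one obtains an independence-query intersection algorithm with round complexity $\widetilde{O}\bigl(\max_{p}(\text{cost of phase structure})\bigr)$. Here the cost arises from balancing three terms.

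The first term comes from an approximation stage that gets within $n/p$ of the optimum, or within $k$ additive distance. In \cite{BT25} this is done with about $p$ rounds of parallel augmentation over shortest augmenting paths of length $O(p)$, each layer requiring basis-type computations of cost $n^{\alpha}$. The second term is the cleanup: the remaining $k$ augmentations are done one path at a time, each using $\widetilde{O}(\sqrt{n})$ or similar rounds via a binary-search or "exchange graph" exploration.

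I would first re-derive this trade-off explicitly. Using the \cite{KPS26} accounting, their bound $\widetilde{O}(n^{17/21})$ corresponds to $\alpha = 3/7$. I expect a relation of the form $\text{rounds} = \widetilde{O}(n^{(1+\alpha)/(\ldots)})$; I would check that it is consistent, for instance that the exponent $\frac{\alpha+ c}{d}$ maps $3/7 \mapsto 17/21$ and $1/3 \mapsto 7/9$. Indeed the affine map $\alpha \mapsto \frac{1}{2}\alpha + \frac{13}{21}\cdot$ is not obviously right, so I would solve for the map directly. With exponent $f(\alpha) = a\alpha + b$, the two data points $f(3/7) = 17/21$ and $f(1/3) = 7/9$ give $a(3/7 - 1/3) = 17/21 - 7/9$, i.e., $a \cdot 2/21 = 2/63$, so $a = 1/3$ and $b = 2/3$. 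Thus the target is $f(\alpha) = \frac{2+\alpha}{3}$. This matches a balance in which the heavy phase costs $n^{\alpha}\cdot p$ and the remaining phase costs $n/p\cdot(\ldots)$, optimized at $p = n^{(1-\alpha)/3}$-type choices.

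Then the proof is mechanical. I would invoke the intersection reduction of \cite{KPS26} (their theorem parameterized by the basis-finding round complexity), plug in \cref{thm:1-3roundsintro} with $\alpha = 1/3$ and polylog factors absorbed, and obtain $\widetilde{O}(n^{(2+1/3)/3}) = \widetilde{O}(n^{7/9})$ rounds. Polynomial queries per round are preserved, since each subroutine call is a polynomial-query algorithm and only polynomially many run in parallel. The high-probability guarantee follows by a union bound over the $\mathrm{poly}(n)$ subroutine invocations, after boosting each to failure probability $n^{-c}$.

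The main obstacle is confirming that the reduction in \cite{KPS26} is stated generically in terms of the basis-finding exponent, and that the subroutine it needs (bases of contracted and restricted matroids, i.e., maximal independent subsets relative to a fixed independent set) is exactly what \cref{thm:1-3roundsintro} provides. The latter holds because contractions and restrictions of matroids are matroids, and their independence oracles are simulated with one query each. If the reduction also uses a more refined primitive, such as finding many disjoint exchanges simultaneously, I would check that it too is implemented through black-box basis calls.
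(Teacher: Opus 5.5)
Your top-level plan matches the paper's: treat the \cite{BT25} reduction as a black box parameterized by the basis-finding round complexity $T(n)$, and plug in $T(n)=\widetilde O(n^{1/3})$. That reduction is already stated generically in $T(n)$, so the obstacle you flag is not the real one. The real content of this corollary is the parameter balancing that produces $7/9$, and your proposal does not establish it. You obtain $f(\alpha)=(2+\alpha)/3$ by fitting an affine map through $f(3/7)=17/21$ and $f(1/3)=7/9$. The second data point is exactly the statement being proved, so the argument is circular, and the affine form is itself assumed. The trade-off you give as justification also does not produce this map. A heavy phase costing $n^{\alpha}p$ against a cleanup costing $n/p$ balances at $p=n^{(1-\alpha)/2}$ with total $n^{(1+\alpha)/2}$, which is $n^{2/3}$ at $\alpha=1/3$, not $n^{7/9}$. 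And if each cleanup augmentation really cost $\widetilde O(\sqrt n)$ rounds, the cleanup would be $n^{3/2}/p$ and the bound would be far worse.

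The missing ingredient is the precise form of the \cite{BT25} lemma. Given a $T(n)$-round basis algorithm, one finds a common independent set of size at least $r-(\varepsilon r+\Delta)$ in $\widetilde O\bigl(nT(n)/(\varepsilon\Delta)\bigr)$ rounds. Each subsequent augmentation, or certificate of maximality, then takes a \emph{single} round. Choose $\varepsilon=n^{4/9}r^{-2/3}$ and $\Delta=\varepsilon r=n^{4/9}r^{1/3}$. The first stage costs $\widetilde O\bigl(n^{4/3}/(n^{8/9}r^{-1/3})\bigr)=\widetilde O(n^{4/9}r^{1/3})$, and the cleanup costs $O(\varepsilon r+\Delta)=O(n^{4/9}r^{1/3})$. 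Since $r\le n$, the total is $\widetilde O(n^{7/9})$. In your notation ($\varepsilon=1/p$, $\Delta=\varepsilon r$, $r\approx n$), the heavy phase is $n^{\alpha}p^{2}$, not $n^{\alpha}p$, and the cleanup is $n/p$ at one round per augmentation. Balancing gives $p=n^{(1-\alpha)/3}$ and total $n^{(2+\alpha)/3}$. This is the map you guessed, but it has to be derived this way rather than read off from the answer.
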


Another setting where matroid basis finding is a key subroutine is for submodular optimization under a matroid constraint. In this problem, one is given a submodular function $f: 2^E \rightarrow \R_{\geq 0}$ along with a matroid $\M = (E, \I)$. The goal is to find the set $S \in \I$ which \emph{maximizes} $f(S)$. This problem naturally is NP-hard, and so works instead focus on \emph{approximately} maximizing this objective function. 

For sequential algorithms, the first polynomial time $(1 - 1/e)$-approximation came in the work of \cite{NWF78} for the restricted setting of \emph{cardinality constraints} on the set $S$. This was later generalized to arbitrary matroid constraints in the work of Vondr\'ak \cite{Von08}, and is known to be optimal among algorithms making $\mathrm{poly}(n)$ queries to the function $f$. 

In the parallel (sometimes called adaptive) setting, the study of this problem was initiated by Balkanski and Singer \cite{BS18a}. A sequence of works \cite{BRS19a, CQ19a, CQ19a, EN19, FMZ19a, FMZ19b, ENV19} established that there is a $\mathrm{poly}(\log(n), 1 / \eps)$-round $(1 - 1/e - \eps)$ approximation algorithm for this task, when the matroid is accessed via \emph{rank} queries. When the matroid is given via independence queries, the best complexity was originally due to the work of \cite{BRS19b} which presented an $O(\sqrt{n}/\eps^3)$ round algorithm. This was later improved in the work of \cite{KPS26} which provided an $\widetilde{O}(n^{3/7}/\eps^3)$ round algorithm.

As observed in \cite{KPS26}, any improvements in parallel basis finding immediately yield improvements for submodular optimization. By plugging in our improved basis finding algorithm into the framework of \cite{BRS19b}, we immediately obtain the following theorem:

\begin{theorem}\label{thm:MonotoneSubmodularOpt}
    For any $\eps > 0$, there is an $\widetilde{O}(n^{1/3}/ \eps^3)$ round parallel algorithm which, for any matroid $\M = (E, \I)$ and any monotone submodular function $f: 2^E \rightarrow \R_{\geq 0}$, makes polynomially many independence queries to $\M$ and polynomially many queries to the function $f$, and with high probability, outputs a $(1 - 1/e - \eps)$ approximation to the maximum of $f$ under $\M$.
\end{theorem}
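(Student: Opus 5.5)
The plan is to use the parallel greedy framework of \cite{BRS19b}, as done in \cite{KPS26}, and treat our basis-finding algorithm (\cref{thm:1-3roundsintro}) as a black box. The algorithm of \cite{BRS19b} for monotone submodular maximization under a matroid constraint with independence queries runs in $O(\log(n)/\eps^2)$ (more precisely $\mathrm{poly}(1/\eps)$ up to logarithmic factors) outer phases. Each phase maintains a current independent set $S$ and the contracted matroid $\M/S$. It then does two things: it filters the candidate elements by their marginal contribution to $f$, which uses only $O(1)$ adaptive rounds of queries to $f$; and it builds a random feasible extension of $S$ from the filtered elements. In \cite{BRS19b} the only step that needs more than polylogarithmically many rounds of independence queries is computing a basis, or a maximal independent subset, of a restricted or contracted matroid. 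They do this with the $O(\sqrt{n})$ round routine of \cite{KUW88}, and the result is their $O(\sqrt{n}/\eps^3)$ bound.

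\textbf{Step 1: the subproblems are matroid basis problems.} I would check that every call \cite{BRS19b} makes to the basis routine is on a matroid of the form $(\M/S)|_X$, with $S$ independent and $X \subseteq E \setminus S$. Such a matroid is again a matroid on at most $n$ elements. An independence oracle for it is simulated by one query per query: $T \subseteq X$ is independent in $(\M/S)|_X$ iff $\Ind(S \cup T) = 1$. So \cref{thm:1-3roundsintro} applies and finds a basis in $O(n^{1/3}\log^{1/3} n)$ rounds with polynomially many queries per round. The subroutine calls within one phase are either sequential and $O(1)$ in number, or parallel. Running independent instances in parallel keeps the total work polynomial.

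\textbf{Step 2: count rounds and failure probability.} The total round count is (number of phases) $\times$ (rounds per basis call + $O(1)$). With $\mathrm{poly}(\log n)/\eps^{3}$ phases, where the $\eps^{-3}$ comes from how \cite{BRS19b} combine thresholding and sampling, this gives $\widetilde{O}(n^{1/3}/\eps^3)$. For correctness, the basis routine succeeds with high probability, say $1-n^{-c}$ for any constant $c$. A union bound over the $\mathrm{poly}(n,1/\eps)$ invocations keeps the overall success probability high, and we may assume $1/\eps \le n$ since otherwise exact brute force suffices. Conditioned on every basis call succeeding, the analysis of \cite{BRS19b} goes through unchanged and yields the $(1-1/e-\eps)$ guarantee with high probability.

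\textbf{Main obstacle.} The main point to verify is that \cite{BRS19b} use the \cite{KUW88} routine purely as a black box. In particular, their approximation analysis must not depend on the basis being drawn from a specific distribution, for example uniformly random or coming from a specific greedy order. If some step does need a ``random'' maximal independent subset of a randomly sampled set, I would handle it as follows: first take the random sample as \cite{BRS19b} prescribe, then call our algorithm on the restriction to that sample. The analysis only needs maximality within the sample, and any basis of the restriction provides that. This is exactly the reduction already observed in \cite{KPS26}, so it suffices to substitute our improved round bound there.
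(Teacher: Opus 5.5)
Your overall route is the paper's: plug the new basis algorithm into the framework of \cite{BRS19b}, as in \cite{KPS26}. Your oracle simulation for $(\M/S)|_X$ and your round and failure-probability accounting are fine. The problem is the step you flag as the main obstacle. You never pin down how the basis routine enters \cite{BRS19b}, and your fallback claim, that ``the analysis only needs maximality within the sample, and any basis of the restriction provides that,'' is not correct. The object consumed by the reduction of \cite{BRS19b} is a \emph{random feasible sequence} $(a_1,\dots,a_{\rank(\M)})$. In such a sequence each $a_i$ is uniformly random among the elements that can still be added to $\{a_1,\dots,a_{i-1}\}$, and their guarantee is stated for, and relies on, this ordering and this distribution. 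A basis-finding algorithm is free to return any basis. So the set it returns, even on a randomly sampled ground set, has neither the ordering nor the distribution, and maximality alone is not what is needed.

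The missing idea is to use the basis algorithm only through ranks, which is what the paper does. Draw a uniformly random permutation $e_1,\dots,e_n$ of the current ground set. Compute $r_j=\rank(\{e_1,\dots,e_j\})$ for every $j$ with $n$ parallel calls to the basis algorithm; each $r_j$ is simply the size of whichever basis is returned. Then output, in order, the elements $e_j$ with $r_j-r_{j-1}=1$. This is the greedy basis along a random order, which is a random feasible sequence by Lemma 15 of \cite{BRS19b}. Because it depends only on the ranks, it does not matter which basis the algorithm returns. This gives an $\widetilde{O}(n^{1/3})$-round generator. The black-box lemma of \cite{BRS19b} then finishes the proof: a $T(n)$-round generator yields an $\widetilde{O}(T(n)\eps^{-3})$-round $(1-1/e-O(\eps))$-approximation. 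The $\eps^{-3}$ factor comes packaged in that lemma rather than from your phase-count estimate.
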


\paragraph{Outline}

In \cref{sec:techOverview}, we give a high-level overview of the techniques underlying prior work \cite{KUW85, KUW88, KPS25, KPS26}, as well as the key insights and techniques we introduce to design a near-optimal parallel algorithm. In \cref{sec:preliminaries}, we introduce basic facts about matroids and probability that we use throughout the paper. In \cref{sec:densityPreservation} we prove a key fact about subsampling dense set families. In \cref{sec:decomposition}, we formally recap the decomposition statement of \cite{KPS25}, and extract the necessary pieces that we will use. In \cref{sec:deletion}, we show a much better mechanism for identifying redundant elements, which we then use to design our basis finding algorithm in \cref{sec:optalgorithm}.

\section{Technical Overview}\label{sec:techOverview}

\subsection{Prior Work of Karp, Upfal, and Wigderson}

We start by recapping the work of Karp, Upfal, and Wigderson \cite{KUW85, KUW88}. At a high level, their work introduces two new methods for making ``progress'' towards finding a basis of a matroid $\M = (E, \I)$:

\begin{enumerate}
    \item \textbf{Contraction:} The first method of making progress towards finding a matroid basis is to find a set $S \subseteq E$ of elements which is \emph{independent}; i.e., such that $S \in \I$. By the extension property of matroids, we know that there must exist a basis of $\M$ which contains $S$. Thus, we commit to keeping the set $S$ in our future basis, and then instead focus on finding a basis of the \emph{contracted} matroid $\M/S$. This contains elements $E - S$ and a set $S' \subseteq E - S$ is independent if and only if $S' \cup S \in \I$. 
    \item \textbf{Deleting Redundant Elements:} The second method of making progress towards finding a matroid basis is to find a set of elements $S \subseteq E$ such that, for each element $e \in S$, $e \in \mathrm{span}(E - S)$. In this way, each element $e$ comes with a ``certificate'' of its redundancy. One can then delete this set of elements $S$, and instead find a basis of $\M|_{E-S} = (E-S, \I \cap 2^{E -S})$. Note that because for each element $e \in S$, $e \in \mathrm{span}(E - S)$, it is still the case then that $\mathrm{rank}(\M|_{E-S}) = \mathrm{rank}(\M)$, and so any basis of $\M|_{E-S}$ is also a basis of $\M$.
\end{enumerate}

With these two methods of making progress established, \cite{KUW85, KUW88} then propose a simple two-case procedure for finding a matroid basis. Indeed, given their matroid $\M = (E, \I)$, they partition the ground set $E$ into $\sqrt{n}$ groups of $\sqrt{n}$ elements, which we call $S_1, \dots , S_{\sqrt{n}}$. Within each bucket $S_i$, their algorithm queries every prefix to the independence oracle (so it makes the queries $\{e_1\}, \{e_1, e_2\}, \dots \{e_1, e_2, \dots, e_{\sqrt{n}}\}$). There are then only two cases to understand:
\begin{enumerate}
    \item Suppose that for some group $S_i$, every prefix query is independent. This then means that the entire group is independent, i.e., that $\mathrm{Ind}(S_i) = 1$. By the first item above, \cite{KUW85} observe that this means there is an independent set of size $\sqrt{n}$ that we can contract on, and then recursively solve $\M/S_i$.
    \item Otherwise, for \emph{every group} $S_i$, it must be the case that some prefix queries are dependent. \cite{KUW85} then look at the first instance in which the prefix queries go from independent to dependent; i.e., say that $\Ind(\{e_1, \dots e_{j-1}\}) = 1$, but $\Ind(\{e_1, \dots e_{j-1}, e_j\}) = 0$. This then implies that $e_j \in \mathrm{span}(\{e_1, \dots e_{j-1}\})$. Because this can be done for each group $S_i$ \emph{in parallel}, this leads to a total of $\sqrt{n}$ elements that can be deleted.
\end{enumerate}

Thus, in just one single round of $\mathrm{poly}(n)$ queries (simply the prefix queries) \cite{KUW85} either contract on an independent set of $\sqrt{n}$ elements or delete a set of $\sqrt{n}$ redundant elements. A simple calculation shows that in just $O(\sqrt{n})$ rounds, this suffices for computing a basis of any matroid. 

\subsection{The Work of Khanna, Putterman, and Song}

More recently, the work of Khanna, Putterman, and Song \cite{KPS25, KPS26} provided the first improvement over the $O(\sqrt{n})$ round algorithm of \cite{KUW85}. At a high level, their algorithm operates less greedily than that of \cite{KUW85}; rather than contracting on independent sets or deleting redundant elements after \emph{every round}, \cite{KPS25, KPS26}  instead invest many rounds of queries to decompose the matroid $\M$ into ``well-behaved'' pieces. Only once this decomposition is completed do \cite{KPS25, KPS26} begin to find independent sets and redundant elements.

\paragraph{The Decomposition of \cite{KPS25, KPS26}}

In order to properly describe the decomposition of \cite{KPS25, KPS26}, we must first introduce some terminology. A central component of their decomposition is the notion of a circuit induced by a permutation; given a matroid $\M = (E, \I)$, they consider sampling a random permutation $\pi:[n]\to E$. They then query each prefix of $\pi$ to the independence oracle, namely $\Ind(\{\pi(1)\}), \Ind(\{\pi(1), \pi(2)\}), \dots \Ind(\{\pi(1), \dots ,\pi(n)\})$. Assuming the matroid $\M$ is not full-rank, eventually these queries \emph{must} become dependent, and moreover, the first index where these queries become dependent will produce a \emph{unique} circuit, which they denote $C_{\pi}$. Note that here, a circuit refers to a minimal set of elements which is dependent. The above implies that if $\Ind(\{\pi(1), \dots ,\pi(i)\})$ is the first prefix query which is dependent, then there is in fact a \emph{unique} smallest set of elements $C_{\pi}\subseteq \{\pi(1), \dots ,\pi(i)\}$ which is dependent.

\cite{KPS25, KPS26} now repeat this experiment of sampling a random permutation and finding $C_{\pi}$ many times in parallel. In so doing, they are able to estimate several parameters of the the matroid $\M$:
\begin{enumerate}
    \item $\alpha(\M)$: this is the \emph{median index} at which the circuit $C_{\pi}$ forms.
    \item $p_{x, \M}$: for an element $x \in E$, this is the marginal probability that $x \in C_{\pi}$. I.e., $p_{x, \M} = \Pr_{\pi \sim \M}[x \in C_{\pi}]$.
    \item $q_{T}$: for a set $T \subseteq E$, this is the circuit containment probability, or the probability that $C_{\pi}$ is contained in $T$. Formally, $q_{T} = \Pr_{\pi}[C_{\pi} \subseteq T]$.
\end{enumerate}

In particular, \cite{KPS25, KPS26} show that in a single round of only $\mathrm{poly}(n)$ many independence queries, they can find a set $S$ such that $q_{S}\geq 0.99$, and yet simultaneously for every element $x \in S$, $p_{x, \M|_S} \geq \frac{1}{n}$. They then peel off this set $S$, and then \emph{repeat} this procedure on the matroid $\M|_{E - S}$. \cite{KPS25, KPS26} show that these peeled sets satisfy numerous convenient properties. For instance, one property of these peeled sets which is important is that $\frac{\alpha(S)}{|S|} \approx \frac{\alpha(\M)}{|\M|}$. 

Perhaps even more important is that these $\alpha$-values of the peeled sets strictly increase at meaningful rates. Indeed, the key lemma that \cite{KPS25, KPS26} show is that if two sets $S_i, S_j$ are peeled off at rounds $i < j$ of the decomposition and $|S_i| \approx |S_j|$, it must be the case that $\alpha(S_j) \geq \alpha(S_i) + \Omega(\sqrt{\alpha(S_i)})$. Intuitively, this means that if we focus on sets of a fixed size $\beta$, then after peeling $\sqrt{\beta}$ many sets of this size, it must be that the alpha-value has increased to $\Omega(\beta)$. This turns out to suffice for bounding the number of rounds that the decomposition can last for; after $n^{1/3}$ many rounds, the $\alpha$-values will have grown to $n^{2/3}$. Because a set is always larger than its $\alpha$-value, every set peeled from this point on is of size $\geq n^{2/3}$, and so there can only be $n^{1/3}$ many of them. All together, this gives the following decomposition statement:

\begin{lemma}[Informal, for intuition only.]\label{lem:informalDecomposition}
    Let $\M = (E, \I)$ be a matroid. There is a decomposition algorithm running in $k = O(n^{1/3})$ many rounds which produces a partition $S_1, \dots S_k$ of $E$ such that:
    \begin{enumerate}
        \item For each $i \in [k]$ and $x \in S_i$, $p_{x, \M|_{S_i}} \geq \frac{1}{n}$.
        \item When set $S_i$ is peeled off, it is the case that $\frac{\alpha(S_i)}{|S_i|} \approx \frac{\alpha(\M - S_1 - \dots -S_{i-1})}{|\M - S_1 - \dots -S_{i-1}|}$.
    \end{enumerate}
\end{lemma}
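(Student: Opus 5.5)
The plan is to run the peeling procedure described above as an explicit loop and establish the two properties and the round bound separately. Let $\M_i = \M - S_1 - \dots - S_{i-1}$ denote the matroid remaining at step $i$. If $\M_i$ is independent we stop: its ground set is a single final piece, for which there is nothing to peel. Otherwise we do the following in one round.

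\textbf{Sampling.} Sample $\mathrm{poly}(n)$ random permutations of $\M_i$ in parallel and query all of their prefixes. From these queries we can read off the circuits $C_\pi$, the median index $\alpha(\M_i)$, and, via Chernoff bounds, estimates of the containment probabilities $q_T$ for the candidate sets $T$ we will consider.

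\textbf{Choosing $S_i$.} We build $S_i$ by iterated pruning. Start from the union of the sampled circuits, or more precisely from a prefix of a random permutation of length about $\alpha(\M_i)\cdot|S|/|\M_i|$, chosen so that $\alpha$ is preserved proportionally. Then repeatedly remove every element $x$ whose estimated $p_{x,\M|_T}$ is below $1/n$. A union-bound argument shows each removal lowers $q_T$ by only $O(1/n)$, so $q_{S_i}\ge 0.99$ survives. When the pruning stabilizes, property~1 holds by construction.

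\textbf{Property~2 (density).} Property~2 should follow from choosing $S_i$ as a uniformly random subsample of $\M_i$ at the appropriate rate. I would use the density-preservation fact announced for \cref{sec:densityPreservation}: a random subsample of density $\rho$ has circuits forming at index about $\rho\,\alpha(\M_i)$, up to $1\pm o(1)$.

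\textbf{Round bound.} Group the peeled sets by size into dyadic classes $|S_i|\in[\beta,2\beta)$. The key monotonicity claim is that $\alpha$ increases along each class,
\[
\alpha(S_j)\ \ge\ \alpha(S_i)+\Omega\bigl(\sqrt{\alpha(S_i)}\bigr)\qquad (i<j).
\]
To prove it, I would argue that after $S_i$ is removed, permutations of $\M_j$ restricted to sets like $S_i$ have lost every circuit of size about $\alpha$ carrying probability mass $q\ge 0.99$. By concentration of the circuit-formation index (a standard deviation of $\sqrt{\alpha}$), the median must then shift by $\Omega(\sqrt\alpha)$. Iterating, after $O(\sqrt\beta)$ peels in a class, $\alpha$ reaches $\Omega(\beta)$, and since $\alpha(S)\le|S|$ the class is exhausted.

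\textbf{Counting.} Choose the threshold $\beta^*=n^{2/3}$.
\begin{itemize}
\item Classes with $\beta\le n^{2/3}$ contribute $\sum O(\sqrt\beta)=O(n^{1/3})$ rounds.
\item Sets of size at least $n^{2/3}$ are disjoint, so there are at most $n^{1/3}$ of them.
\end{itemize}
Hence $k=O(n^{1/3})$.

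The main obstacle is the $\Omega(\sqrt{\alpha})$ growth step. It requires relating the circuit distribution of $\M_j$ to that of $\M_i$ after deletions, and deletions can in principle leave $\alpha$ unchanged. I would first try a coupling: embed a random permutation of $\M_j$ into one of $\M_i$, and show the first circuit of $\M_i$ lands in the removed set $S_i$ with constant probability. This forces a constant fraction of permutations to go $\Omega(\sqrt\alpha)$ further before becoming dependent.
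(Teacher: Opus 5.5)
Your dyadic counting at the end matches the paper's. The three ingredients feeding it, however, have genuine gaps.

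\textbf{Property~2.} This is attributed to the wrong mechanism. $S_i$ is not a uniformly random subsample of $\M_i$; in your own construction it is the output of pruning. Moreover, \cref{thm:preserveDensitySubsample} concerns the relative density of a fixed family $\mathcal{F}$ surviving a subsample of the universe, not the index at which circuits form. What actually gives property~2 is that $S_i$ captures almost every first circuit of $\M_i$. A random $\ell$-subset of $\M_i$ with $\ell\approx\frac{\alpha(S_i)}{|S_i|}|\M_i|$ meets $S_i$ in about $\alpha(S_i)$ elements. Since $C_\pi\subseteq S_i$ with probability close to $1$, such a prefix is independent essentially iff its trace on $S_i$ is. This is \cref{clm:largeAlpha}, i.e.\ item~4 of \cref{lem:decomposition}.

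\textbf{Pruning.} Your pruning neither preserves $q_{S_i}\ge 0.99$ nor runs in one round. Up to $n$ removals, each costing less than $1/n$, can exhaust all of $q_T$. Also, $p_{x,\M|_T}$ for an adaptively shrinking $T$ cannot be read off a single batch of permutations of $\M_i$, because it needs prefix queries of $\pi|_T$. The paper instead prunes from the whole ground set using $\widehat{q}_T-\widehat{q}_{T-x}$, which is computable offline from one batch of circuits $C_\pi$ of $\M_i$. Its threshold drops by $n^{-8}$ per removal, so $\widehat{q}_{S_i}\ge 1-n^{-7}$ at the end. Property~1 then follows from $p_{x,\M|_S}\ge\Pr_\pi[x\in C_\pi\subseteq S]=q_S-q_{S-x}$: a permutation of $\M_i$ whose first circuit lies in $S$ yields the same first circuit when restricted to $S$ (\cref{clm:highProbCircuit}).

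\textbf{Growth step.} Most importantly, the growth step you flag as the main obstacle is left unproved. Neither mechanism you sketch (concentration of the circuit-formation index forcing the median to shift, or a coupling in which permutations ``go further'') produces it. The missing idea is a contradiction argument run inside $\M_i$, where the later set $S_j$ is still present and disjoint from $S_i$. Let $U$ be a random $\ell$-subset of $\M_i$ with $\ell=\frac{\alpha(S_i)}{|S_i|}|\M_i|$. Then $|U\cap S_i|$ and $|U\cap S_j|$ are hypergeometric with means $\alpha(S_i)$ and $\mu_j=\alpha(S_i)|S_j|/|S_i|$. Suppose $\alpha(S_j)\le\mu_j+\sqrt{\mu_j\log n/16}$. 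Hypergeometric \emph{anti}-concentration (\cref{lem:hyp}), together with positive correlation of the events $\{|U\cap S_i|<\alpha(S_i)\}$ and $\{|U\cap S_j|\ge\alpha(S_j)\}$, makes both hold with probability at least $4n^{-6}$. Conditioned on that, with probability at least $1/4$ the set $U\cap S_i$ is independent while $U\cap S_j$ is dependent. Then the first circuit of the permutation falls outside $S_i$, contradicting $q_{S_i}>1-n^{-6}$ (\cref{clm:peelingSetsAlpha}). The closer $q_{S_i}$ is to $1$, the larger the deviation this rules out. That is why the paper insists on an inverse-polynomial failure probability rather than $0.99$.

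\textbf{Normalization.} The quantity to iterate is the normalized ratio $\alpha(S)/|S|$. Within a dyadic class, $|S_j|$ may be close to $|S_i|/2$, and there the comparison only yields $\alpha(S_j)\gtrsim\alpha(S_i)|S_j|/|S_i|\approx\alpha(S_i)/2$ plus the margin. So the recurrence must be run on $X=\beta\,\alpha(S)/|S|$ for the class $[\beta,2\beta)$, as in \cref{lem:quadraticGrowth}, rather than on $\alpha$ itself.
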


With this decomposition in hand, the key question then becomes how to use it to efficiently find a basis. 

\paragraph{Subroutines for Contraction}

The starting observation of both of \cite{KPS25, KPS26} is simple: namely, if the $\alpha$-value of a set $S_i$ ever becomes large then there is a simple way to make progress by finding a large independent set. Indeed, the second point of \cref{lem:informalDecomposition} immediately implies that $\frac{\alpha(\M - S_1 - \dots -S_{i-1})}{|\M - S_1 - \dots -S_{i-1}|} \approx \frac{\alpha(S_i)}{|S_i|}$, and so $\alpha(\M - S_1 - \dots -S_{i-1}) \geq |\M - S_1 - \dots -S_{i-1}| \cdot \frac{\alpha(S_i)}{|S_i|}$. Note that the $\alpha$-value of $\M - S_1 - \dots -S_{i-1}$ measures the \emph{median} index at which the first dependence (circuit) forms when sampling a random permutation $\pi$. So in particular, by sampling many random permutations, one can find an independent set of size $\geq \alpha(\M - S_1 - \dots S_{i-1}) $. For simplicity, we will assume that $|\M - S_1 - \dots -S_{i-1}| = \Omega(n)$, in this case the above implies that we can find an independent set of size $\Omega \left (n \cdot \frac{\alpha(S_i)}{|S_i|} \right)$.

\paragraph{Subroutines for Deletion}
The above discussion provides a simple way to make progress towards finding a matroid basis whenever $\frac{\alpha(S_i)}{|S_i|}$ becomes close to $1$. Naturally however, there is no guarantee that this will be the case when running the decomposition algorithm. Thus, the remaining ingredients from \cite{KPS25, KPS26} are sub-routines for finding \emph{redundant elements} whenever $\alpha(S_i) \ll |S_i|$.

In \cite{KPS25}, the main deletion sub-routine carefully samples random sets to find redundant elements, and ultimately shows that in a single additional round of queries, one can find $\widetilde{\Omega }\left ( \min \left (|S_i|, \frac{|S_i|^2}{\alpha(S_i)^2} \right )\right )$ many redundant elements. When $\alpha(S_i) \leq \sqrt{|S_i|}$, this shows that one additional round of queries suffices for deleting an $\widetilde{\Omega}(1)$ fraction of the elements from a set $S_i$. As $\alpha(S_i)$ grows larger than $\sqrt{|S_i|}$, this no longer holds true, but nevertheless, this sub-routine provides a type of win-win paradigm based on the $\alpha$-value for a set $S_i$: whenever $\alpha(S_i)$ is large, one can find a large independent set in the parent matroid, and whenever $\alpha(S_i)$ is small, one can find more redundant elements to delete from $S_i$.

This tradeoff is further refined in the work of \cite{KPS26}. Here, the authors exploit additional properties from a refinement of \cref{lem:informalDecomposition} to show that in a single additional round, they can find $\widetilde{\Omega }\left ( \min \left (|S_i|, \frac{|S_i|^{3/2}}{\alpha(S_i)} \right )\right )$ many redundant elements to delete. This, in combination with a more ``dynamic'' analysis of when to delete redundant elements vs. when to contract on independent elements then leads to the state of the art complexity of $\widetilde{O}(n^{3/7})$ many rounds for finding a matroid basis. 

It is worth thinking about why these algorithms of \cite{KPS25} and \cite{KPS26} fail to achieve a round complexity of $\widetilde{O}(n^{1/3})$. Indeed, even though the decomposition algorithm runs in $\widetilde{O}(n^{1/3})$ rounds, there are choices of the $\alpha$-value for which the progress made (i.e., the total number of deletions or the total size of the contracted set) is bounded away from $n$. For instance, imagine peeling off a set $S_i$ of size $\Omega(n)$, where $\alpha(S_i) = |S_i|^{3/4}$. \cref{lem:informalDecomposition} only guarantees that you can find an independent set of size $\approx n^{3/4}$, while even the best deletion sub-routine from \cite{KPS26} only guarantees that one can find $\widetilde\Omega\left ( \frac{n^{3/2}}{n^{3/4}}\right ) = \widetilde\Omega\left ( n^{3/4}\right )$ many redundant elements. Thus, after investing $O(n^{1/3})$ many rounds in the decomposition, the total progress made towards a matroid basis is still some $n^{1 - \Omega(1)}$, and so the decomposition must be repeated $n^{\Omega(1)}$ times.\footnote{While this intuition is correct, the actual tradeoff in \cite{KPS25, KPS26} is slightly different. In these works, the decomposition running for $n^{1/3}$ rounds is actually a \emph{good case}. The bad case is when the decomposition runs for some $n^{1/3 - \eps}$ many rounds for $\eps$ a small constant, and then reveals a set $S_i$ with the aforementioned characteristics.} This ultimately leads to complexities of $n^{1/3 + \Omega(1)}$ and has constituted the fundamental barrier towards matching the $\tilde \Omega(n^{1/3})$ round lower bounds from \cite{KUW85}.

\subsection{A New, Near-Optimal Deletion Subroutine}

Our primary contribution is a $1$-round deletion procedure which \emph{far} outperforms those of \cite{KPS25, KPS26}:

\begin{theorem}[Informal]\label{lem:deletionIntro}
    Let $S_i \subseteq E$ be such that $p_{x, \M|_{S_i}} \geq \frac{1}{n}$ and $\alpha(S_i) \leq \frac{|S_i|}{100 \log(n)}$. Then, there is a one round algorithm (making polynomially many independence queries) which finds a set of $\Omega(|S_i| / \log(n))$ many redundant elements. 
\end{theorem}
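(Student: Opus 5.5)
Write $S = S_i$, $m = |S|$, $\alpha = \alpha(S)$. The plan is a one-round procedure that samples random subsets $T \subseteq S$ at a rate tuned to $\alpha$ and declares $x \notin T$ redundant whenever $\Ind(T) = 1$ but $\Ind(T \cup \{x\}) = 0$. Such a certificate shows $x \in \mathrm{span}(T)$. Since certificates are produced in parallel, we must make them mutually consistent. To do so, we fix in advance a base set $R$ and a disjoint candidate pool, and only certify $x$ against subsets of $R$. Then every certified $x$ lies in $\mathrm{span}(R)$, with $R$ disjoint from the deleted set, so deletion is safe. All these queries can be issued simultaneously, so the procedure is one round.

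Concretely, I would pick a uniformly random $R \subseteq S$ of size about $\alpha$, or a random rate $\rho \approx \alpha/m$ times a constant. By the definition of the median $\alpha$, a random prefix of length $\alpha/2$ is independent with probability at least $1/2$, so we can also find an independent $R' \subseteq R$ of comparable size in the same round by querying many random prefixes. The goal is then to show that a random element $x \in S \setminus R$ lies in $\mathrm{span}(R)$ with probability $\Omega(1/\log n)$. Given this, querying $\Ind(R' \cup \{x\})$ for all $x$ in parallel, with $R'$ a basis of $R$ found via prefix queries, certifies $\Omega(m/\log n)$ elements in expectation. We then boost by repeating over independent choices of $R$. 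Keeping certificates consistent across repetitions is delicate, so I would instead take the single best repetition.

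The core lemma to prove is: if $\alpha \le m/(100\log n)$ and every $x$ satisfies $p_{x,\M|_S} \ge 1/n$, then for a random set $R$ of size $\Theta(\alpha \log n)$, a $\Omega(1/\log n)$ fraction of $S \setminus R$ lies in $\mathrm{span}(R)$. I would argue this via the random-permutation circuit process. Extend a random permutation past its first circuit and track $\rank$ of the growing prefix. Prefixes of length about $\alpha$ already contain a circuit with constant probability. Iterating over $\log n$ disjoint blocks of length $\alpha$, the expected rank deficiency $|P| - \rank(P)$ of a prefix $P$ of length $t = \Theta(\alpha\log n)$ should grow linearly in $t/\alpha$. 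A deficiency of $d$ inside $P$ means that at least $d$ elements of $P$ are spanned by the rest of $P$. Then exchangeability of the permutation converts this statement about dependencies inside $P$ into one about a fresh element outside $P$: the probability that the $(t+1)$-st element is spanned by the first $t$ equals the average marginal rank drop, which is at least (deficiency)$/t$. Because the deficiency is nondecreasing, once it has grown across the blocks, the marginal spanning probability at step $t$ is at least $\Omega(d/t) = \Omega(1/\alpha \cdot \alpha/\alpha) \cdot$ const. More carefully, the average rank-drop over steps up to $t$ is $\Omega(1/\log n)$ per block-scaled step. By the monotonicity of $\Pr[\pi(t+1)\in\mathrm{span}(\text{prefix}_t)]$ in $t$ (by submodularity, spans only grow), the last step has probability at least the average.

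The main obstacle is the step showing that the rank deficiency keeps accumulating linearly, rather than stalling after the first circuit: a single small circuit could account for all dependence. This is where the decomposition hypothesis $p_{x,\M|_S} \ge 1/n$ and the density-preservation result of \cref{sec:densityPreservation} would enter. I would first try to show that contracting the first found circuit's span leaves a restriction whose $\alpha$-value is still $O(\alpha)$ with good probability. This relies on the fact that dense circuit families survive random subsampling, so no small set of elements carries the circuit mass. Then the block-by-block argument applies for $\log n$ blocks, and the $100\log n$ slack in the hypothesis ensures $R$ remains a small fraction of $S$.
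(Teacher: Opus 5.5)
\textbf{The core lemma is false.} Your consistency device is right and is what the paper uses: fix a base set disjoint from the candidate pool, and certify candidates only against independent subsets of the base. But you have the proportions reversed. Take $S$ to be $N/2$ disjoint parallel pairs with $N \le n$ (vectors $e_1,e_1,e_2,e_2,\dots$).
\begin{itemize}
\item The only circuits are the pairs, so $p_{x,\M|_S} = 2/N \ge 1/n$ by symmetry.
\item The birthday bound gives $\alpha(S) = \Theta(\sqrt{N}) \le N/(100\log n)$.
\item Any $R$ spans only the partners of its own elements. So, deterministically, at most $|R| = O(\alpha\log n) = O(\sqrt{N}\log n)$ elements of $S\setminus R$ lie in $\mathrm{span}(R)$. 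That is an $O(\log n/\sqrt{N})$ fraction, not $\Omega(1/\log n)$, and no amount of repetition helps.
\end{itemize}
Your heuristic already shows this. One unit of deficiency per block of length $\alpha$ gives $d = \Theta(\log n)$ at $t = \Theta(\alpha\log n)$. Your averaging bound then yields spanning probability about $d/t = \Theta(1/\alpha)$, not $\Omega(1/\log n)$. The garbled step ``$\Omega(d/t) = \Omega(1/\alpha\cdot\alpha/\alpha)$'' is exactly where this is lost.

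\textbf{A constant-fraction base does not fix it.} Split $n$ elements into disjoint circuits of size $k = \Theta(\log n/\log\log n)$, with a suitable constant. This satisfies both hypotheses and has no small circuits. Yet a random half of $S$ spans a given outside element with probability about $2^{-(k-1)} \ll 1/\log n$. A random $(1-1/\log n)$-fraction spans it with probability about $(1-1/\log n)^{k-1} = 1-o(1)$.

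\textbf{The one-round certification step also fails.} You cannot get a basis of $R$ in one round from prefix queries; that is the original problem. Prefix queries only produce maximal independent prefixes, and $x\in\mathrm{span}(R)$ does not imply that any queried prefix witnesses it.

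\textbf{The missing idea is to invert the roles.} Draw $S'\subseteq S$ of size $\lceil(1-1/\log n)|S|\rceil$ and use the roughly $|S|/\log n$ removed elements as candidates. Certify $x\in S\setminus S'$ whenever $x\in\mathrm{span}(P_\pi)$, where $P_\pi$ is the maximal independent prefix of a random permutation of $S'$. The paper's analysis never tracks rank deficiency:
\begin{itemize}
\item The $\alpha$ bound says the first circuit forms before index $|S|/4$ except with probability $n^{-25}$.
\item Combining this with the marginal bound and pigeonholing over the index at which $C_\pi$ forms gives some $g_x\le |S|/4$. For it, the family $\mathcal{F}_x$ of independent $g_x$-subsets of $S-\{x\}$ that span $x$ has density $p\ge n^{-O(1)}$.
\item The uniform measure on $\mathcal{F}_x$ has KL divergence $\log(1/p) = O(\log n)$ from uniform. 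The up operator to level $|S'|$ contracts this by $(|S|-1-|S'|)/(|S|-1-g_x) = O(1/\log n)$, leaving $O(1)$.
\item The pushed-up measure puts mass at least $1/2$ on those $S'$ in which $\mathcal{F}_x$ keeps density at least $p/2$. The transfer inequality then shows a uniform $S'$ has this property with constant probability.
\end{itemize}
Hence, in expectation, a constant fraction of $S\setminus S'$ is certified, i.e.\ $\Omega(|S|/\log n)$ elements. This is where density preservation enters: for a mild subsample of the base, not to control $\alpha$ after contracting a circuit. Its contraction factor is exactly why the base rate must be $1-\Theta(1/\log n)$.
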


In the regime where $\alpha(S_i)= \frac{|S_i|}{100 \log(n)}$, the best previous deletion subroutine is that of \cite{KPS26}, which finds $\widetilde{\Omega}(\sqrt{|S_i|})$ many redundant elements. Our new algorithm completely closes the gap, efficiently finding $\Omega(|S_i| / \log(n))$ redundant elements, compared to the maximum possible of $|S_i|$. Combined with the aforementioned decomposition, this immediately yields a basis-finding algorithm which runs in $\widetilde{O}(n^{1/3})$ many rounds with a far simpler analysis than that of \cite{KPS25, KPS26}.
We defer the discussion of this complete picture to the next section (see \cref{sec:techOverviewFinalAlg} for this).

\paragraph{The Deletion Algorithm}

Now, we proceed to give intuition about the deletion algorithm. As a reminder, we know that our set $S_i \subseteq E$ satisfies that for all $x \in S_i$, $p_{x, \M|_{S_i}} \geq \frac{1}{n}$ and that $\alpha(S_i) \leq \frac{|S_i|}{100 \log(n)}$. 
In particular, this means that if we sample many random permutations $\pi$ over the set $S_i$, then it will be the case that every element will appear in some of the resulting circuits $C_{\pi}$. It is tempting to say that this is already enough to find many redundant elements; indeed, if, for every element, we can find a circuit which uses that element, then why does this not suffice for our ``certificate'' of redundancy?

Unfortunately, the key property that we need when deleting a redundant set of elements $R \subseteq S_i$ is that $R \subseteq \mathrm{span}(S_i - R)$. In this way, we can ensure that \emph{even after} $R$ is deleted, the rank of the matroid is unchanged. So, even though we may recover many circuits which collectively contain every element, if we ever delete an element $x \in S_i$ (equivalently, add it to $R$), it is effectively used up, and any circuit $C_{\pi}$ which contains $x$ is no longer a valid certificate! This becomes the primary difficulty in designing an efficient deletion subroutine: we must find a (large!) set of elements $R$ such that \emph{all} of the certificates of redundancy for the elements in $R$ are completely contained in $S_i - R$.

Our algorithm for doing this is quite simple, although it should not be obvious a priori that our algorithm will work. Given the set $S_i$, we will sample a set $S' \subseteq S_i$ which contains exactly $m = |S_i| \cdot (1 - 1 / \log(n))$ many elements. Intuitively, this set $S'$ is obtained by throwing away a $1 / \log(n)$ fraction of the elements. $S'$ itself still contains the majority of elements. Now, for the remaining elements $x \in S_i - S'$, we try to find a circuit which contains the element $x$ but otherwise uses \emph{only} elements from $S'$. We do this by repeating our random permutation experiment; we sample permutations $\pi$ over $S' \cup \{x\}$, and add elements in this order until a circuit forms. The key theorem that we show is that for an $\Omega(1)$ fraction of the elements $x \in S_i - S'$, this experiment will reveal circuits that contain the element $x$! In this way, we are able to find $\Omega(|S_i - S'|) = \Omega(|S_i| / \log(n))$ many redundant elements in $S_i - S'$, as $S'$ contains a ``certificate'' of redundancy for each of them.

\paragraph{Analyzing the Deletion Algorithm}
Analyzing the above algorithm is very subtle. We know that in the set $S_i$, all elements have marginal probabilities $p_{x, \M|_{S_i}} \geq \frac{1}{n}$. But the algorithm is essentially relying on how marginal probabilities \emph{change} under mild subsampling of the set $S_i$. 

Let us fix an element $x \in S_i$. Because we are doing mild subsampling, it may be tempting to say that any circuit $C_{\pi}$ which appears under random permutations $\pi$ and contains $x$ will still be likely to appear after subsampling at rate $1 - 1 / \log(n)$. This is very strongly \emph{not} the case. A single circuit $C_{\pi}$ may be polynomially long; as an example, its size may be $|S_i|^{1/2}$. Under subsampling, such a circuit will only remain intact if \emph{all} of its elements are contained in the sample at rate $(1 - 1 / \log(n))$. But in this example, this only occurs with probability $(1 - 1 / \log(n))^{|S_i|^{1/2}} = (1/2)^{n^{\Omega(1)}}$ when $|S_i| = n^{\Omega(1)}$, which is extremely small. We thus require a more robust understanding of how circuits arise in $S_i$.

To do this, we re-interpret the $\alpha$-value condition: because we assume that $\alpha(S_i) \leq \frac{|S_i|}{100 \log(n)}$, this means that when we do the random permutation experiment on $S_i$, it is \emph{very unlikely} for the first circuit to arise only after position $|S_i|/2$. One way to see this is that a random sample of $\frac{|S_i|}{100 \log(n)}$ creates a circuit with probability $\geq 1/2$. Repeating this $50 \log(n)$ times then means that $\leq |S_i|/2$ elements are sampled, and the probability that there is no circuit in the sampled elements is $\leq (1/2)^{50 \log(n)} \leq \frac{1}{n^{50}}$. Next, let us fix an element $x \in S_i$. By our marginal condition, we know that $\Pr[x \in C_{\pi}] \geq \frac{1}{n}$. Combined with the above, we know that $\Pr_{\pi}[x \in C_{\pi} \text{ and } C_{\pi} \text{ forms before position }|S_i|/2] \geq \frac{1}{n} - \frac{1}{n^{50}} \geq \frac{1}{2n}$. We can now refine this even further: for this element $x$, we consider an index $\ell \in [|S_i|/2]$, and by a simple invocation of the pigeonhole principle, we know there is a choice of $\ell$ such that $\Pr_{\pi}[x \in C_{\pi} \text{ and } C_{\pi} \text{ forms at position }\ell] \geq \frac{1}{n^2}$.

In the above discussion, we saw that the circuits $C_{\pi}$ themselves change significantly over the course of (even mild) random sampling. Thus, we do one final re-write of the above expression. Rather than looking at the circuit $C_{\pi}$, we consider the set $T$ consisting of the first $\ell$ elements in the permutation $\pi$, excluding the element $x$. In the event that $C_\pi$ forms at position $\ell$ and $x\in C_\pi$, then we \emph{know} that $T = C_{\pi} - \{x\}$ must be independent and that $x\in \mathrm{span}(T)$. An alternative formulation of the above statement is then that, for our element $x$,
\[
\Pr_{T \sim \binom{S_i - \{x\}}{\ell-1}}[x \in \mathrm{span}(T) \text{ and } \Ind(T) = 1] \geq \frac{1}{n^2}.
\]

In words, let us say that a set $T$ of size $\ell-1$ is \emph{good} for element $x$ if $x \in \mathrm{span}(T) \text{ and } \Ind(T) = 1$. Intuitively, such a set $T$ is a \emph{valid witness of redundancy} for the element $x$, as if one ever queries $\Ind(T)$ and $\Ind(T \cup \{x\})$ one receives a simple certificate that $x \in \mathrm{Span}(T)$. The above statement is saying that in our original set $S_i$, a $\geq \frac{1}{n^2}$ fraction of the sets $T \in \binom{S_i-\{x\}}{\ell-1}$ are \emph{good} for the element $x$. This then gives us a new perspective for understanding our proposed subsampling deletion algorithm: can we show that, for every element $x$, after subsampling a $(1 - 1 / \log(n))$ fraction of elements, that the fraction of sets which are good for this element $x$ is still large?

\paragraph{Density Preservation Under Subsampling}

Our key result answers the previous question in the affirmative. I.e., we show that mild subsampling at rate $(1 - 1/\log(n))$ does (essentially) preserve the fraction of good sets for any element $x$! We show the following:

\begin{theorem}[Informal]\label{thm:densitySubsampleTechOverview}
    Let $S$ be a set of $\leq n$ elements and let $\ell \leq |S|/2$. Let $m \geq  |S| \cdot (1 - 1/\log(n))$. Let $\mathcal{F} \subseteq \binom{S}{\ell}$ be a collection of subsets of $S$ of size $\ell$, and define 
    \[
    p = \frac{|\mathcal{F}|}{\binom{S}{\ell}} \geq \frac{1}{n^2}.
    \]
    Now, sample a set $S' \subseteq S$ with exactly $m$ elements. Then, 
    \[
    \Pr_{S'} \left [ \frac{\left|\mathcal{F} \cap \binom{S'}{\ell} \right|}{\binom{m}{\ell}} \geq \frac{p}{2}\right ] = \Omega(1).
    \]
\end{theorem}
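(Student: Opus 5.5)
The plan is to recast the statement as a bound on a KL divergence and then get that bound from entropy contraction of a down-step. Write $N=|S|$ and let $D=S\setminus S'$, a uniformly random subset of size $d=N-m\le N/\log n$. Define
\[
X(D)=\frac{\left|\mathcal{F}\cap\binom{S\setminus D}{\ell}\right|}{\binom{m}{\ell}}.
\]

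\textbf{Step 1: first moment and the tilted law.} A fixed $\ell$-set $T$ survives in $S'$ with probability $\binom{N-\ell}{m-\ell}/\binom{N}{m}=\binom{m}{\ell}/\binom{N}{\ell}$, so $\E_D[X]=p$ exactly.

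A second-moment (Paley--Zygmund) argument is hopeless. The only a priori upper bound is $X\le p\binom{N}{\ell}/\binom{m}{\ell}$, and this ratio can be as large as $(1+1/\log n)^{\Theta(N)}$. So I would work with the tilted law $\mu(D)=X(D)\,u(D)/p$, where $u$ is the uniform law on $\binom{S}{d}$. Let $A=\{D: X(D)\ge p/2\}$. Since $\E_u[X\mathbf{1}_{\bar A}]<p/2$, we get $\mu(A)\ge 1/2$. If we also know $\dKL(\mu\,\|\,u)\le K$ for an absolute constant $K$, then the data processing inequality applied to the indicator of $A$ gives
\[
\mu(A)\log\frac{\mu(A)}{u(A)}\le K+\log 2,
\]
hence $u(A)\ge \tfrac12 e^{-2(K+\log 2)}=\Omega(1)$. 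The whole proof therefore reduces to showing $\dKL(\mu\,\|\,u)=O(1)$.

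\textbf{Step 2: realize $\mu$ as the output of a channel.} Sample $T$ uniformly from $\mathcal{F}$, and then sample $D$ uniformly from $\binom{S\setminus T}{d}$. A direct count shows the marginal law of $D$ is exactly $\mu$: the probability of $D$ is proportional to the number of $T\in\mathcal{F}$ avoiding $D$.

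Apply the same channel to $T$ uniform on $\binom{S}{\ell}$. By symmetry the output is $u$. Since both distributions pass through the same channel, data processing already gives
\[
\dKL(\mu\,\|\,u)\le \dKL(\mathrm{Unif}(\mathcal{F})\,\|\,\mathrm{Unif}\tbinom{S}{\ell})=\log(1/p)\le 2\log n.
\]
This is too weak by a $\log n$ factor, so we need contraction.

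\textbf{Step 3 (main obstacle): contraction by a factor $O(1/\log n)$.} Write the channel as $T\mapsto S\setminus T$, which is a bijection and so preserves KL, followed by the down-operator $D_{N-\ell\to d}$. This operator takes a uniformly random $d$-subset of an $(N-\ell)$-set.

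The distribution of $S\setminus T$ is a distribution on $(N-\ell)$-subsets of $S$, compared against the uniform one, i.e.\ the uniform matroid of rank $N-\ell$. I would invoke the entropic-independence / entropy-contraction property of the uniform distribution on $k$-sets (as in \cite{anari2021entropic}): for any $\nu$,
\[
\dKL(\nu D_{k\to j}\,\|\,u_k D_{k\to j})\le \tfrac{j}{k}\,\dKL(\nu\,\|\,u_k).
\]
If only the standard $1$-entropic independence with a weaker constant is available, I would accept a constant-factor loss, which is harmless here. If needed I would reprove the inequality directly via the chain rule over removing one element at a time.

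With $k=N-\ell\ge N/2$ (this is where $\ell\le |S|/2$ is used) and $j=d\le N/\log n$, the contraction factor is at most $2/\log n$. Therefore
\[
\dKL(\mu\,\|\,u)\le \frac{2}{\log n}\cdot 2\log n=4,
\]
and Step 1 gives $\Pr_{S'}[X\ge p/2]\ge \tfrac12 e^{-2(4+\log 2)}=\Omega(1)$.

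The delicate point is verifying the contraction inequality with the right dependence on $j/k$. Everything else is bookkeeping.
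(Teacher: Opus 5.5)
Your proposal is correct and is essentially the paper's proof, written on the complement side. Your tilted law on $D=S\setminus S'$ is exactly the complement of the paper's $\nu=U_{\ell\to m}(\mathrm{Unif}(\mathcal{F}))$, whose density relative to uniform is $X/p$; your contraction factor $d/(N-\ell)$ is precisely the $\frac{n-m}{n-i}$ of \cref{cor:ManyUpContraction}, which the paper obtains by passing to complements and telescoping the one-step $(1-1/k)$ down-operator contraction of \cref{lem:OneDownContraction}; and your final data-processing step is \cref{fact:TransferProbability}, up to the base of the logarithm.
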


To invoke \cref{thm:densitySubsampleTechOverview}, we let $\mathcal{F} \subseteq \binom{S_i-\{x\}}{\ell-1}$ denote the set of \emph{all good sets} for an element $x$ of our choosing. The preceding discussion shows that the density condition is indeed satisfied. The conclusion of the theorem is then that, after subsampling, there is an $\Omega(1)$ probability (over the sampled set $S'$) that the set of good sets for $x$ is \emph{still dense} in the new set $S'$. Thus, when we run the permutation experiment in $S' \cup \{x\}$, it will still be the case that we find many circuits containing $x$.

Surprisingly, proving \cref{thm:densitySubsampleTechOverview} relies on techniques that originate from the recent work of \cite{anari2021entropic}, which in turn is related to the breakthrough work of \cite{ALGV19} on the mixing times of the matroid basis exchange walk. To describe these connections, we require the notion of an up-operator: starting with a distribution $\mu$ over $\binom{S}{\ell}$, the up-operator $U_{\ell \rightarrow \ell+1}$ maps $\mu$ to a distribution over $\binom{S}{\ell+1}$ by ``spreading out'' the mass of each set. I.e., for each set $T \in \binom{S}{\ell}$, $U_{\ell \rightarrow \ell+1}(\mu)$ takes the probability mass from $\mu(T)$ and divides it uniformly among all sets $T'$ of size $\ell+1$ such that $T \subseteq T'$. In this way $U_{\ell \rightarrow \ell+1}(\mu)(T') = \sum_{T \subseteq T', |T| = \ell} \frac{\mu(T)}{|S| - \ell}$.

The key fact from \cite{anari2021entropic} (with the exact version we use being attributed to \cite{Lee24}) is that the up-operator decreases the KL-divergence between a distribution and the uniform distribution. I.e., $d_{\mathrm{KL}}(U_{\ell \rightarrow \ell+1}(\mu), \mathcal{U}_{\ell+1}) \leq \frac{|S| - \ell -1}{|S| - \ell} \cdot d_{\mathrm{KL}}(\mu, \mathcal{U}_{\ell})$. After repeated application, one can then show that $d_{\mathrm{KL}}(U_{\ell \rightarrow m}(\mu), \mathcal{U}_{m}) \leq \frac{|S| - m}{|S| - \ell} \cdot d_{\mathrm{KL}}(\mu, \mathcal{U}_{\ell})$. 

In our setting, we let $\mu$ be the uniform distribution over $\mathcal{F}$, and we let $m =|S| \cdot (1 - 1/\log(n)) $. This then implies that $d_{\mathrm{KL}}(U_{\ell \rightarrow m}(\mu), \mathcal{U}_{m}) \leq \frac{2}{\log(n)} \cdot d_{\mathrm{KL}}(\mu, \mathcal{U}_{\ell})$. By direct calculation, one can also show that $d_{\mathrm{KL}}(\mu, \mathcal{U}_{\ell}) = \log(1/p) \leq 2\log(n)$. Thus, we obtain that \begin{align}\label{eq:boundKLIntro}
d_{\mathrm{KL}}(U_{\ell \rightarrow m}(\mu), \mathcal{U}_{m}) \leq 4.
\end{align}

It is worth reflecting on what the distribution $U_{\ell \rightarrow m}(\mu)$ actually is: this distribution places proportionally \emph{more probability mass} on sets $S' \in \binom{S}{m}$ which have a \emph{larger} intersection with $\mathcal{F}$. Likewise those sets $S'$ which have a smaller intersection with $\mathcal{F}$ receive less mass in $U_{\ell \rightarrow m}(\mu)$. A formal accounting of this phenomenon reveals an important fact: if one defines $\mathcal{B}$ to be the set of all $S'$'s which have a ``small'' intersection with $\mathcal{F}$, i.e.,  $\mathcal{B} = \left \{S' \in \binom{S}{m}: \frac{\left|\mathcal{F} \cap \binom{S'}{\ell} \right|}{\binom{m}{\ell}} < \frac{p}{2} \right\}$, then one can show by the above logic that  $U_{\ell \rightarrow m}(\mu)(\mathcal{B}) < 1/2.$ Importantly then, for $\mathcal{G} = \overline{\mathcal{B}}$, we have that
\begin{align}\label{eq:boundBadMassIntro}
U_{\ell \rightarrow m}(\mu)(\mathcal{G}) \geq 1/2.
\end{align}
Together, \cref{eq:boundKLIntro} and \cref{eq:boundBadMassIntro} suffice for proving our desired theorem: indeed because $d_{\mathrm{KL}}(U_{\ell \rightarrow m}(\mu), \mathcal{U}_{m})$ is small, and $U_{\ell \rightarrow m}(\mu)$ places a lot of probability mass on the event $\mathcal{G}$, this then means that when one samples $S'$ in accordance with $\mathcal{U}_{m}$, it must \emph{still} be the case that $\mathcal{G}$ occurs with constant probability, which is exactly what we sought to prove.

\subsection{Putting the Algorithm Together}\label{sec:techOverviewFinalAlg}

It only remains to put the pieces together. We start by running the decomposition of \cref{lem:informalDecomposition}. If, at any point, we reach a stage where $\frac{\alpha(S_k)}{|S_k|} \geq \frac{1}{100 \log(n)}$, then we immediately terminate the decomposition and find an independent set in $\M - S_1 - S_2 - \dots - S_{k-1}$. Suppose $|\M - S_1 - S_2 - \dots - S_{k-1}|=\Omega(n)$, then we can recover and contract an independent set of size $\Omega (n/ \log(n) )$.

Otherwise, for every set $S_1, \dots ,S_{k}$ that we peel off that does satisfy the $\alpha$-value bound, we invoke the deletion sub-routine outlined above. For each $i\in [k]$, because $\frac{\alpha(S_i)}{|S_i|} < \frac{1}{100 \log(n)}$, \cref{lem:deletionIntro} guarantees that we can find $\Omega \left ( \frac{|S_i|}{\log(n)}\right )$ many redundant elements in just a single additional round of queries. 

Thus, we invest $O(n^{1/3})$ many rounds into our decomposition, and the result is that we either:
\begin{enumerate}
    \item Recover an independent set of size $\Omega \left (n / \log(n) \right )$ that we can contract on.
    \item Recover $\Omega \left ( \sum_{i= 1}^{k} |S_i|/\log(n) \right )=\Omega(n/\log (n))$ many redundant elements that we can delete.  
\end{enumerate}
In either case, we thus make $\Omega(n / \log(n))$ progess towards finding a matroid basis.

The round complexity of the algorithm is then $T(n) = T(n - \Omega( n / \log(n)) + O(n^{1/3})$, which can be solved to show that $T(n) = O(n^{1/3}\log(n))$. In conjunction with the $\Omega(n^{1/3} / \log^{1/3}(n))$ round lower bound of \cite{KUW85}, this effectively settles the round complexity of finding a matroid basis. Note that we in fact show a \emph{stronger} result in the body of the paper, namely an $O(n^{1/3}\log^{1/3}(n))$ round algorithm for finding a matroid basis. These further refinements are due to a more careful analysis of the matroid decomposition algorithm, which we omit from here for the sake of brevity.

\paragraph{Organization}

As mentioned before, in the remainder of the paper, we first introduce some basic notions and definitions for matroids (\cref{sec:preliminaries}). We then present a formal proof of our density preservation characterization (\cref{sec:densityPreservation}) and our refined decomposition algorithm (\cref{sec:decomposition}). Finally, in \cref{sec:deletion}, we show how the density preservation ideas can be used to design a near-optimal deletion algorithm and finally in \cref{sec:optalgorithm}, we present our optimal basis finding algorithm.

\section{Preliminaries}\label{sec:preliminaries}

We now present definitions and notions from prior work that will be used throughout our paper.

\subsection{Matroid Notions}

\begin{definition}[Matroids]
A \emph{matroid} $\M=(E,\I)$ is a pair where $E$ is a finite ground set and $\I \subseteq 2^{E}$ is a collection of independent sets with the following properties: (i) $\emptyset \in \I$ (non-triviality), (ii) for every $S\in \I$ and $S'\subset S$, $S'\in \I$ (downward-closedness), and (iii) for every $S,S'\in \I$ and $|S'|<|S|$, there exists some $x\in S\setminus S'$ such that $S+x\in \I$ (exchange property).
\end{definition}

\begin{definition}[Independent Sets, Circuits, Bases]
For a matroid $\M=(E,\I)$, we say a set $S\subseteq E$ is \emph{independent} if $S\in \I$ and \emph{dependent} otherwise. We call a set $B$ a \emph{basis} if it is a maximal independent set, i.e. for any $x\notin B$, $B+x\notin \I$. We call a set $C$ a \emph{circuit} if it is a minimal dependent set, i.e. for any $x\in C$, $C-x\in \I$.
\end{definition}

\begin{definition}[Rank]
For a matroid $\M=(E,\I)$, we define the \emph{rank} of $\M$ as $\rank(\M) = \max_{S\in \I} |S|$. Further, for any $S\subseteq E$, we define $\rank_{\M}(S) = \max_{T\subseteq S,T\in \I} |T|$. The rank function of a matroid is submodular.
\end{definition}

\begin{definition}[Span]
In a matroid $\M=(E,\I)$, we define $\text{span}(S)$ as
\[
\text{span}(S) = \{x\in E \mid \text{rank}(S\cup \{x\})=\text{rank}(S)\}.
\]
\end{definition}

\begin{definition}[Restriction, Contraction]
Let $\M=(E,\I)$ be a matroid and $S\subseteq E$. We write $\M|_S$ for the restriction of $\M$ to $S$, and we use $M - S$ to mean the restriction of $M$ to $E \setminus S$, and $\M/S$ for the contraction of $\M$ by $S$, whose rank function is $\rank_{\M/S}(T)=\rank_{\M}(S\cup T)-\rank_\M (T)$ for any $T\subseteq E\setminus S$.
\end{definition}

The following definitions come from the work of \cite{KPS25}.

\begin{definition}[Permutation-Induced Circuit]
    Let $\M=(E,\I)$ be a matroid, and let $\pi:[n]\to E$ be an arbitrary permutation. We let $C_{\pi}$ denote the unique first circuit which appears when adding elements in the order of $\pi$. I.e., if $\{\pi(1), \dots ,\pi(j)\} \in \I$, but $\{\pi(1), \dots \pi(j), \pi(j+1)\} \notin \I$, we let $C_{\pi}$ denote the unique circuit in $\{\pi(1), \dots \pi(j), \pi(j+1)\}$.
\end{definition}

\begin{definition}[$\alpha$-value]
For a matroid $\M=(E,\I)$ and $S\subseteq E$, we let $
    \alpha(S)$ denote the median number of elements sampled from the $S$ before a circuit forms. I.e., 
    \[
    \alpha(S) = \min \left \{k \in \N: \Pr_{T\sim {S\choose k}} [\mathrm{Ind}(T)=1]\leq \frac{1}{2} \right \}.
    \]
\end{definition}

\begin{definition}[Marginal Probability]
    Let $\M=(E,\I)$ be a matroid over $n$ elements. For an element $x \in E$, we let
    \[
    p_{x, \M} = \Pr_{\pi}[x \in C_{\pi}],
    \]
    where $C_{\pi}$ is the unique first circuit that appears when adding elements in the order of $\pi$. Similarly, for a subset $S\subseteq E$, we let
\[
q_S= \Pr_\pi[C_\pi\subseteq S].
\]
\end{definition}

\subsection{KL Divergence and Up and Down Operators}

In order to prove our density preservation statement under subsampling, we require the basic notions of KL-Divergence and up and down operators (see, for instance, \cite{Lee24} for further discussion).

\begin{definition}
    For two distributions $\mu, \nu$ over a set $\mathcal{X}$, their KL-divergence is defined as
    \[
    d_{\mathrm{KL}}(\mu \Vert \nu) = \sum_{x \in \mathcal{X}} \mu(x) \cdot \log \left ( \frac{\mu(x)}{\nu(x)}\right ).
    \]
\end{definition}

We will also use the notion of an \emph{up-operator} when dealing with distributions over sets:

\begin{definition}
    Let $k < n$ be integers, and let $\mu$ be a distribution over $\binom{[n]}{k}$. We define $U_{k \rightarrow k+1}(\mu)$ to be the distribution over $\binom{[n]}{k+1}$ such that for $S \in \binom{[n]}{k+1}$,
    \[
    U_{k \rightarrow k+1}(\mu)(S) = \sum_{T \subseteq S: |T| = k} \mu(T) \cdot \frac{1}{n-k}.
    \]
    Intuitively, for each set $T \in \binom{[n]}{k}$, the up-operator replaces $T$ with the uniform distribution of sets of size $k+1$ that contain $T$. 
\end{definition}

Likewise, we can define the down operator:

\begin{definition}
    Let $0<k \leq n$ be integers, and let $\mu$ be a distribution over $\binom{[n]}{k}$. We define $D_{k \rightarrow k-1}(\mu)$ to be the distribution over $\binom{[n]}{k-1}$ such that for $S \in \binom{[n]}{k-1}$,
    \[
    D_{k \rightarrow k-1}(\mu)(S) = \sum_{T: S \subseteq T, |T| = k} \mu(T) \cdot \frac{1}{k}.
    \]
    Intuitively, for each set $T \in \binom{[n]}{k}$, the down-operator replaces $T$ with the uniform distribution of sets of size $k-1$ contained in $T$.
\end{definition}

\begin{remark}
    Note that in general, we use the notation $U_{k \rightarrow k+j}$ or $D_{k \rightarrow k - j}$ to denote $j$ consecutive invocations of the up (or down) operator.
\end{remark}

\begin{definition}
    Going forward, for integers $n$ and $0 \leq i \leq n$, we use $\mathcal{U}_i$ to denote the uniform distribution over $\binom{[n]}{i}$.
\end{definition}

Importantly, we have the following key property about the down-operator:

\begin{lemma}[KL-Contraction; See, for instance, Lemma 3.3 in \cite{Lee24}.]\label{lem:OneDownContraction}
    Let $0<k \leq n$ be integers. Let $\mu$ be an arbitrary distribution over $\binom{[n]}{k}$. Then,
    \[
    d_{\mathrm{KL}}(D_{k \rightarrow k-1}(\mu), \mathcal{U}_{k-1}) \leq \left ( 1 - \frac{1}{k}\right ) \cdot d_{\mathrm{KL}}(\mu, \mathcal{U}_{k}).
    \]
\end{lemma}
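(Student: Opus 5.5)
The plan is a chain-rule argument over a random ordering of the sampled set. Write $X=(x_1,\dots,x_k)$ for a uniformly random ordering of $T\sim\mu$. Because the ordering kernel is the same under $\mu$ and under $\mathcal{U}_k$, we have $d_{\mathrm{KL}}(\mathrm{law}(X)\Vert \mathrm{law}_{\mathcal{U}_k}(X))=d_{\mathrm{KL}}(\mu,\mathcal{U}_k)$.

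Under the reference measure, $x_j$ given $x_{<j}$ is uniform on $[n]\setminus\{x_1,\dots,x_{j-1}\}$. The chain rule therefore gives
\[
d_{\mathrm{KL}}(\mu,\mathcal{U}_k)=\sum_{j=1}^k h_j,\qquad h_j:=\E_{x_{<j}}\, d_{\mathrm{KL}}\big(\mathrm{law}(x_j\mid x_{<j})\,\Vert\,\mathrm{Unif}([n]\setminus\{x_{<j}\})\big)\ge 0 .
\]
The unordered set $\{x_1,\dots,x_{k-1}\}$ is distributed exactly as $D_{k\to k-1}(\mu)$. The same argument applied to the first $k-1$ coordinates gives $d_{\mathrm{KL}}(D_{k\to k-1}(\mu),\mathcal{U}_{k-1})=\sum_{j=1}^{k-1}h_j$. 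Hence it suffices to show that $h_1\le h_2\le\dots\le h_k$. A nondecreasing sequence has its first $k-1$ terms summing to at most $\frac{k-1}{k}$ of the total, which is exactly the claim.

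For monotonicity, fix $j$ and condition on $x_{<j}$, so that the comparison $h_j\le h_{j+1}$ becomes a pointwise statement in the conditioning. Given $x_{<j}$, the pair $(x_j,x_{j+1})$ is an exchangeable pair of distinct elements of the ground set $V=[n]\setminus\{x_{<j}\}$, with $N=|V|$. Let $p$ be the (common) marginal of $x_j$ and of $x_{j+1}$. It then suffices to prove the two-element inequality
\[
d_{\mathrm{KL}}(p\Vert \mathrm{Unif}(V))\le \E_{x_j}\, d_{\mathrm{KL}}\big(\mathrm{law}(x_{j+1}\mid x_j)\Vert\mathrm{Unif}(V\setminus x_j)\big).
\]
By the chain rule once more, the right-hand side plus the left-hand side equals $d_{\mathrm{KL}}(\nu\Vert\mathcal{U}_2)$, where $\nu$ is the law of the unordered pair on $\binom{V}{2}$. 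So the inequality is equivalent to $d_{\mathrm{KL}}(\nu\Vert\mathcal{U}_2)\ge 2\,d_{\mathrm{KL}}(p\Vert\mathrm{Unif}(V))$, i.e., the case $k=2$ of the lemma.

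I expect this base case to be the main obstacle, and I would attack it with a Gibbs-type comparison. Let $q(\{a,b\})=p(a)p(b)/Z$ with $Z=\sum_{a<b}p(a)p(b)=\tfrac12(1-\sum_v p(v)^2)$. Then
\[
d_{\mathrm{KL}}(\nu\Vert\mathcal{U}_2)=d_{\mathrm{KL}}(\nu\Vert q)+\E_\nu\log\frac{q}{\mathcal{U}_2}\ge \E_\nu\log\frac{q}{\mathcal{U}_2}.
\]
Each vertex $v$ lies in the random pair with probability $2p(v)$, so $\E_\nu[\log p(a)+\log p(b)]=2\sum_v p(v)\log p(v)$. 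A direct computation then gives
\[
\E_\nu\log\frac{q}{\mathcal{U}_2}=2\,d_{\mathrm{KL}}(p\Vert\mathrm{Unif}(V))+\log\frac{N-1}{N}-\log\Big(1-\sum_v p(v)^2\Big).
\]
The last two terms are nonnegative because $\sum_v p(v)^2\ge 1/N$ by Cauchy–Schwarz. This yields the $k=2$ inequality, hence $h_j\le h_{j+1}$ for every $j$, and the lemma follows.
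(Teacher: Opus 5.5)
Your proof is correct. Since the paper does not prove this lemma at all (it imports it from \cite{Lee24}), yours is a genuinely different, self-contained route.

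Results of this kind are usually derived from general machinery. Examples are entropic independence \cite{anari2021entropic}, or entropy contraction for strongly log-concave measures; these apply to any reference measure whose generating polynomial (here $e_k$) is suitably log-concave. You instead specialize to the uniform measure and carry out the local-to-global argument by hand:
\begin{enumerate}
\item The chain rule over a uniformly random ordering expresses both divergences as partial sums of the same increments $h_j$.
\item Exchangeability of consecutive coordinates reduces $h_j\le h_{j+1}$ to the $k=2$ case on the smaller complete complex $\binom{V}{2}$.
\item That base case follows from the Gibbs variational bound against the product tilt $q\propto p(a)p(b)$. The only inequality needed to control the partition function is $\sum_v p(v)^2\ge 1/N$, which is Maclaurin's inequality for $e_2$.
\end{enumerate}

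This is elementary and exactly what the paper needs, since the lemma is only applied with uniform reference measures. The cost is that it is tied to the uniform measure. The general route also covers down operators for other log-concave measures, such as the uniform distribution on bases of a matroid.

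Two cosmetic fixes:
\begin{enumerate}
\item $\log\frac{N-1}{N}$ is itself negative, so Cauchy--Schwarz shows that the \emph{sum} of the last two terms is nonnegative, not each term.
\item You should note that $Z>0$. This holds because a marginal of a distribution on pairs satisfies $p(v)\le 1/2$, so $\sum_v p(v)^2\le 1/2$.
\end{enumerate}
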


As an immediate corollary of \cref{lem:OneDownContraction}, we have the following:

\begin{corollary}\label{cor:ManyDownContraction}
    Let $0\leq t < k \leq n$ be integers. Let $\mu$ be an arbitrary distribution over $\binom{[n]}{k}$. Then,
    \[
    d_{\mathrm{KL}}(D_{k \rightarrow t}(\mu), \mathcal{U}_{t}) \leq \frac{t}{k} \cdot d_{\mathrm{KL}}(\mu, \mathcal{U}_{k}).
    \]
\end{corollary}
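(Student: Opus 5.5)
The corollary follows by iterating \cref{lem:OneDownContraction} $k-t$ times and telescoping the contraction factors. I argue by induction on the number of down-steps $j = k - t$.

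For $j=0$ there is nothing to prove. For the inductive step, write $D_{k \rightarrow t}(\mu) = D_{t+1 \rightarrow t}(D_{k \rightarrow t+1}(\mu))$. The distribution $D_{k \rightarrow t+1}(\mu)$ lives on $\binom{[n]}{t+1}$, and $t+1 \geq 1$, so the hypothesis $0 < t+1 \leq n$ of \cref{lem:OneDownContraction} holds. Applying the lemma with $k$ replaced by $t+1$ gives
\[
d_{\mathrm{KL}}(D_{k \rightarrow t}(\mu), \mathcal{U}_{t}) \leq \left(1 - \frac{1}{t+1}\right) d_{\mathrm{KL}}(D_{k \rightarrow t+1}(\mu), \mathcal{U}_{t+1}) = \frac{t}{t+1}\, d_{\mathrm{KL}}(D_{k \rightarrow t+1}(\mu), \mathcal{U}_{t+1}).
\]
The induction hypothesis bounds the last KL term by $\frac{t+1}{k}\, d_{\mathrm{KL}}(\mu, \mathcal{U}_k)$. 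Multiplying the two factors gives exactly $\frac{t}{k}$, since the product $\prod_{s=t+1}^{k} \frac{s-1}{s}$ telescopes.

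Two small points need checking; neither is a real obstacle. First, the composition convention for $D_{k \rightarrow t}$ in the remark must agree with the decomposition used above; it does, since it is just $k-t$ consecutive single down-steps. Second, the degenerate endpoint $t=0$ must be handled. There $\binom{[n]}{0}$ is a single point, so both sides are $0$, consistent with the factor $t/k = 0$.
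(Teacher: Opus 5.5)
Your proof is correct and matches the paper's argument: both write $D_{k\rightarrow t}$ as a final single step $D_{t+1\rightarrow t}$ applied after the earlier down-steps. Both then apply \cref{lem:OneDownContraction} at level $t+1$ to get the factor $\frac{t}{t+1}$, and combine it with the inductive bound $\frac{t+1}{k}$ to obtain $\frac{t}{k}$. The only cosmetic difference is that you anchor the induction at $k-t=0$, where $D_{k\rightarrow k}$ is the identity, rather than at $t=k-1$ as the paper does.
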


\begin{proof}
When $t = k-1$, the claim holds trivially by \cref{lem:OneDownContraction}. More generally, by induction we can see that
\begin{align*}
d_{\mathrm{KL}}(D_{k \rightarrow t}(\mu), \mathcal{U}_{t})  &= d_{\mathrm{KL}}( D_{t+1 \rightarrow t}D_{t+2 \rightarrow t+1}\dots D_{k \rightarrow k-1}(\mu), \mathcal{U}_{t})\\
&\leq \left ( 1 - \frac{1}{t+1}\right ) \cdot d_{\mathrm{KL}}(D_{t+2 \rightarrow t+1}\dots D_{k \rightarrow k-1}(\mu), \mathcal{U}_{t+1})\\
&\leq  \left ( 1 - \frac{1}{t+1}\right )  \cdot \frac{t+1}{k} \cdot d_{\mathrm{KL}}(\mu, \mathcal{U}_{k})\\
& =\frac{t}{k} \cdot d_{\mathrm{KL}}(\mu, \mathcal{U}_{k}).
\end{align*}
\end{proof}

We then have the following corollary which holds for \emph{up-operators}:

\begin{corollary}\label{cor:ManyUpContraction}
    Let $0\leq m < i \leq n$ be integers. Let $\mu$ be an arbitrary distribution over $\binom{[n]}{m}$. Then,
    \[
    d_{\mathrm{KL}}(U_{m \rightarrow i}(\mu), \mathcal{U}_{i}) \leq \frac{n-i}{n-m} \cdot  d_{\mathrm{KL}}(\mu, \mathcal{U}_{m}).
    \]
\end{corollary}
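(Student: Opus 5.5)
We will reduce the statement to \cref{cor:ManyDownContraction} via complementation. The map $\phi: T \mapsto [n]\setminus T$ is a bijection from $\binom{[n]}{j}$ to $\binom{[n]}{n-j}$ for every $j$. Under this bijection, adding one element to a set corresponds to removing one element from its complement. Concretely, for a distribution $\mu$ on $\binom{[n]}{k}$, we claim
\[
\phi_*\bigl(U_{k\rightarrow k+1}(\mu)\bigr) = D_{n-k\rightarrow n-k-1}(\phi_*\mu).
\]
To check this, evaluate both sides at $[n]\setminus S$ with $|S|=k+1$. The left side is $\sum_{T\subseteq S,|T|=k}\mu(T)/(n-k)$. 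The right side sums $\phi_*\mu(R)/(n-k)$ over the sets $R\supseteq [n]\setminus S$ with $|R|=n-k$. These $R$ are exactly the complements of the sets $T\subseteq S$ of size $k$. The normalizations agree, since the down operator from level $n-k$ divides by $n-k$, which is the same factor $n-k$ used by the up operator. Iterating this identity gives $\phi_*(U_{m\rightarrow i}(\mu)) = D_{n-m\rightarrow n-i}(\phi_*\mu)$.

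Next we observe that KL divergence is invariant under pushing both arguments forward along a bijection. Also, $\phi_*\mathcal{U}_j=\mathcal{U}_{n-j}$, because complementation maps the uniform distribution on one level to the uniform distribution on the complementary level. Hence
\[
d_{\mathrm{KL}}(U_{m\rightarrow i}(\mu),\mathcal{U}_i) = d_{\mathrm{KL}}\bigl(D_{n-m\rightarrow n-i}(\phi_*\mu),\mathcal{U}_{n-i}\bigr).
\]
We now apply \cref{cor:ManyDownContraction} with $k=n-m$ and $t=n-i$. Its hypotheses hold: $0\le t<k\le n$ follows from $m<i\le n$ and $m\ge 0$. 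This bounds the right side by
\[
\frac{n-i}{n-m}\, d_{\mathrm{KL}}(\phi_*\mu,\mathcal{U}_{n-m}) = \frac{n-i}{n-m}\, d_{\mathrm{KL}}(\mu,\mathcal{U}_m),
\]
where the equality again uses invariance under the bijection $\phi$. This is exactly the claimed bound.

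The only step requiring care is the normalization check in the operator identity; everything else follows directly from the definitions. The boundary case $m=0$ is covered, since then $k=n$ in \cref{cor:ManyDownContraction}, which is allowed. An alternative would be to prove a one-step up-contraction with factor $\frac{n-k-1}{n-k}$ by the same duality and then telescope, but the direct reduction above avoids that.
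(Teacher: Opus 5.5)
Your proposal is correct and follows the paper's proof essentially step for step: the paper also conjugates by complementation to obtain $\mathrm{Complement}(U_{m\rightarrow i}(\mu)) = D_{n-m\rightarrow n-i}(\mathrm{Complement}(\mu))$, uses invariance of KL divergence under this bijection, and applies \cref{cor:ManyDownContraction} with $k=n-m$ and $t=n-i$. Your explicit checks that both operators normalize by the same factor $n-k$, and that the hypotheses $0\le t<k\le n$ hold, fill in details the paper leaves implicit.
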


\begin{proof}
    Consider the operator $\mathrm{Complement}$ which takes a distribution over $\binom{[n]}{m}$ and maps it to a distribution over $\binom{[n]}{n - m}$, where $\mathrm{Complement}(\mu)(S) = \mu([n] - S)$. We can then observe that 
    \[
    U_{m \rightarrow i}(\mu) = \mathrm{Complement} (D_{n - m \rightarrow n - i} (\mathrm{Complement}(\mu))),
    \]
    and so 
    \[
    \mathrm{Complement}(U_{m \rightarrow i}(\mu)) = D_{n - m \rightarrow n - i} (\mathrm{Complement}(\mu)).
    \]
    Likewise, it is also the case that $d_{\mathrm{KL}}(\mu \Vert \mathcal{U}_{m}) = d_{\mathrm{KL}}(\mathrm{Complement}(\mu) \Vert \mathcal{U}_{n-m})$.

    Thus, we see that 
    \begin{align*}
        d_{\mathrm{KL}}(U_{m \rightarrow i}(\mu), \mathcal{U}_{i}) & = d_{\mathrm{KL}}(\mathrm{Complement}(U_{m \rightarrow i}(\mu)), \mathcal{U}_{n-i})\\
        & = d_{\mathrm{KL}}(D_{n - m \rightarrow n - i} (\mathrm{Complement}(\mu)), \mathcal{U}_{n-i}) \\
        & \leq \frac{n-i}{n-m} \cdot \dKL(\mathrm{Complement}(\mu), \mathcal{U}_{n-m})\\
        & = \frac{n-i}{n-m} \cdot \dKL(\mu, \mathcal{U}_{m}),
    \end{align*}
    where the only inequality in the above sequence uses \cref{cor:ManyDownContraction}.
\end{proof}

Finally, we have the following fact, which is itself simply a strengthened version of Pinsker's inequality (which itself would still suffice for our purposes).

\begin{fact}[See, for instance, Lemma 2.1 in \cite{liu2020second}.]\label{fact:TransferProbability}
    Let $\nu, \pi$ be distributions over a set $\mathcal{X}$, and suppose that $d_{\mathrm{KL}}(\nu \Vert  \pi) \leq K$. If an event $G$ satisfies $\nu(G) \geq 1/2$, then 
    \[
    \pi(G) \geq \frac{1}{4} \cdot 2^{-2K}.
    \]
\end{fact}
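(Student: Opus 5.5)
The plan is to reduce the statement to a two-point (Bernoulli) comparison via the data processing inequality, and then bound the resulting binary KL-divergence directly. Write $a = \nu(G)$ and $b = \pi(G)$. We may assume $b > 0$: if $b = 0$ while $a \geq 1/2 > 0$, then $d_{\mathrm{KL}}(\nu \Vert \pi) = \infty$, contradicting $d_{\mathrm{KL}}(\nu \Vert \pi) \leq K$.

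First, apply the map $x \mapsto \mathbf{1}[x \in G]$ to both distributions. By the data processing inequality (equivalently, the log-sum inequality applied separately to the sums over $G$ and over $\overline{G}$), we get
\[
K \geq d_{\mathrm{KL}}(\nu \Vert \pi) \geq a \log\frac{a}{b} + (1-a)\log\frac{1-a}{1-b}.
\]

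Second, split the right-hand side into three pieces:
\[
a\log\frac{1}{b} \;+\; \bigl(a\log a + (1-a)\log(1-a)\bigr) \;+\; (1-a)\log\frac{1}{1-b}.
\]
\begin{itemize}
\item The last term is nonnegative, since $1-b \leq 1$, so it can be dropped.
\item The middle term is $-h(a)$, the negative binary entropy of $a$. In base $2$ we have $h(a) \leq 1$; in natural log we have $h(a) \leq \ln 2$.
\end{itemize}
This yields $a\log(1/b) \leq K + h(a)$.

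Third, use $a \geq 1/2$ to conclude $\log(1/b) \leq 2K + 2h(a)$.
\begin{itemize}
\item With logs in base $2$, this reads $\log_2(1/b) \leq 2K + 2$, so $b \geq 2^{-2K-2} = \frac14 \cdot 2^{-2K}$.
\item With natural logs, it reads $\ln(1/b) \leq 2K + 2\ln 2$, so $b \geq \frac14 e^{-2K} \geq \frac14 \cdot 2^{-2K}$.
\end{itemize}
Hence the bound holds regardless of the logarithm convention.

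There is no real obstacle in this argument. The only points needing care are justifying the data-processing (log-sum) step and checking that the entropy-bound constants agree under either logarithm base, both of which are handled above.
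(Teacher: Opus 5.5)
Your base-2 argument is correct and follows the paper's proof step for step. Both push forward to the indicator of $G$ by data processing, discard the nonnegative term $(1-a)\log\frac{1}{1-b}$, bound $h(a)\le 1$, and use $a\ge 1/2$ to get $\log(1/b)\le 2K+2$.

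What is wrong is your closing claim that the bound holds regardless of the logarithm convention. In the natural-log bullet you correctly derive $b\ge \frac14 e^{-2K}$, but you then assert $\frac14 e^{-2K}\ge \frac14\cdot 2^{-2K}$. Since $e>2$ and $K\ge 0$, the inequality actually runs the other way: $e^{-2K}\le 2^{-2K}$.

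This is not just a misstep in your write-up: the Fact is false if $d_{\mathrm{KL}}$ is measured in nats. Take $\mathcal{X}=\{0,1\}$, $G=\{1\}$, $\nu$ uniform and $\pi(G)=b$. The divergence in nats is $K=\frac12\ln\frac{1}{4b(1-b)}$. Hence $\frac14\cdot 2^{-2K}=\frac14\bigl(4b(1-b)\bigr)^{\ln 2}$, and this exceeds $b$ once $b$ is small, because $\ln 2<1$.

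So the statement is inherently base-$2$, consistent with the paper's conventions elsewhere (e.g.\ $h(x)\le 1$ and $\log x\le \log(e)(x-1)$). With logarithms to base $c$, your argument gives $\pi(G)\ge \frac14 c^{-2K}$, which implies the stated bound only when $c\le 2$. Drop the natural-log bullet and the ``regardless of convention'' sentence; the rest of your proof stands.
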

\begin{proof}[Proof due to \cite{liu2020second}]
By the data processing property of relative entropy, we have
\begin{align*}
\dKL(\nu \Vert \pi) & \geq \nu(G) \log \frac{\nu(G)}{\pi(G)}+(1-\nu(G))\log \frac{1-\nu(G)}{1-\pi(G)}\\
& \geq \nu(G)\log \frac{1}{\pi(G)}- h(\nu(G)),
\end{align*}
where $h(x)=-x\log x-(1-x)\log (1-x)$ is the binary entropy function. Since $\nu(G)\geq 1/2$ and $h(x)\leq 1$, we obtain
\[
K\geq \dKL(\nu \Vert \pi) \geq \frac{1}{2}\log \frac{1}{\pi(G)} - 1,
\]
which implies that $\pi(G)\geq \frac{1}{4}\cdot 2^{-2K}$.
\end{proof}

\subsection{Bounds for Hypergeometric Distributions}

In the course of proving our strengthened decomposition theorem, we will require bounds on the concentration of hypergeometric distributions.

\begin{fact} \label{fact:KLBound}
For $\alpha,\beta \in [0,1]$, we have $\dKL(\alpha||\beta)\leq \log (e) \frac{(\alpha-\beta)^2}{\beta(1-\beta)}$.
\end{fact}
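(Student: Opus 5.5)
The plan is the standard ``KL is at most $\chi^2$'' argument, applied to the two-point distributions $\mathrm{Bern}(\alpha)$ and $\mathrm{Bern}(\beta)$, where $\dKL(\alpha\|\beta)$ denotes the binary divergence
\[
\dKL(\alpha\|\beta) = \alpha\log\frac{\alpha}{\beta}+(1-\alpha)\log\frac{1-\alpha}{1-\beta}.
\]
The logarithms are in the same base as in the paper's KL definition, so $\log y = \log(e)\cdot \ln y$. The factor $\log(e)$ is therefore only a change of base, and it suffices to prove the natural-log version
\[
\alpha\ln\frac{\alpha}{\beta}+(1-\alpha)\ln\frac{1-\alpha}{1-\beta}\;\leq\;\frac{(\alpha-\beta)^2}{\beta(1-\beta)}.
\]

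\textbf{Degenerate cases.} If $\beta\in\{0,1\}$, the right-hand side is $+\infty$ unless $\alpha=\beta$, and in that case both sides are $0$ under the usual conventions. So assume $\beta\in(0,1)$. If $\alpha\in\{0,1\}$, the corresponding term on the left vanishes by the convention $0\ln 0=0$. The inequality applied below is then used only on the term with positive weight, and multiplying the other term's inequality by the zero weight is harmless.

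\textbf{Main step.} Apply the inequality $\ln y\leq y-1$ to each term, with $y=\alpha/\beta$ and $y=(1-\alpha)/(1-\beta)$. Since the weights $\alpha$ and $1-\alpha$ are nonnegative, this gives
\[
\alpha\ln\frac{\alpha}{\beta}+(1-\alpha)\ln\frac{1-\alpha}{1-\beta}\leq \alpha\Bigl(\frac{\alpha}{\beta}-1\Bigr)+(1-\alpha)\Bigl(\frac{1-\alpha}{1-\beta}-1\Bigr)=\frac{\alpha^2}{\beta}+\frac{(1-\alpha)^2}{1-\beta}-1 .
\]

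\textbf{Algebraic identity.} Put everything over the common denominator $\beta(1-\beta)$. The numerator is $\alpha^2(1-\beta)+(1-\alpha)^2\beta-\beta(1-\beta)$, and expanding it gives $\alpha^2-2\alpha\beta+\beta^2=(\alpha-\beta)^2$. This yields the claimed bound in nats, and multiplying through by $\log(e)$ gives the stated form.

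There is no real obstacle here. The only points needing care are the boundary conventions when $\alpha$ or $\beta$ lies in $\{0,1\}$ and keeping track of the logarithm base.
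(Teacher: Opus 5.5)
Your proposal is correct and is essentially the paper's proof: apply $\log y \le \log(e)(y-1)$ to each term of the binary divergence, then simplify $\alpha\bigl(\frac{\alpha}{\beta}-1\bigr)+(1-\alpha)\bigl(\frac{1-\alpha}{1-\beta}-1\bigr)$ to $\frac{(\alpha-\beta)^2}{\beta(1-\beta)}$. The only difference is your extra discussion of the boundary cases $\alpha,\beta\in\{0,1\}$, which the paper leaves implicit.
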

\begin{proof}
Using the inequality $\log x\leq \log(e)(x-1)$, we obtain
\begin{align*}
\dKL(\alpha||\beta) &= \alpha \log \frac{\alpha}{\beta} + (1-\alpha ) \log \frac{1-\alpha }{1-\beta}\\
& \leq \log(e)\left(\alpha\left(\frac{\alpha}{\beta}-1\right) + (1-\alpha) \left(\frac{1-\alpha}{1-\beta}-1\right)\right)\\
& = \log(e)\frac{(\alpha-\beta)^2}{\beta(1-\beta)}.
\end{align*}
\end{proof}

\begin{lemma} \label{lem:hyp}
Let $X\sim \Hyp(n,m,k)$ be a hypergeometric random variable with $m/n\leq 1/2$ and $k/n\leq 1/2$. Let $\mu=km/n$ be its mean. For any integer $x$ satisfying $|x-\mu|\leq \sqrt{\gamma \cdot \mu}$ and $0\leq x\leq \min(m,k)$, we have
\[
\Pr[X=x] \geq \frac{1}{(n+1)^2} 2^{-8\gamma}.
\]
\end{lemma}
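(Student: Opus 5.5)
We write $\Pr[X=x]=\binom{m}{x}\binom{n-m}{k-x}/\binom{n}{k}$. The plan is to bound each binomial coefficient by the standard entropy estimates, $\frac{1}{N+1}2^{N h(a/N)}\le\binom{N}{a}\le 2^{N h(a/N)}$, where $h$ is the binary entropy function (logs base 2, as in \cref{fact:TransferProbability}). Using the lower bound for the two numerator factors and the upper bound for the denominator gives
\[
\Pr[X=x]\ \ge\ \frac{1}{(m+1)(n-m+1)}\,2^{\,m h(x/m)+(n-m)h((k-x)/(n-m))-n h(k/n)} .
\]
Since $(m+1)(n-m+1)\le (n+1)^2$, the prefactor is already of the right form. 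It remains to lower bound the exponent by $-8\gamma$.

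The exponent has a clean information-theoretic form. Consider the mixture in which, with probability $m/n$, we get Bernoulli$(x/m)$ and otherwise Bernoulli$((k-x)/(n-m))$. Its overall mean is exactly $k/n$. The chain rule for entropy then gives
\[
n h(k/n) - m h(x/m) - (n-m)h\!\left(\tfrac{k-x}{n-m}\right)= m\,\dKL\!\left(\tfrac{x}{m}\,\Big\Vert\, \tfrac{k}{n}\right)+(n-m)\,\dKL\!\left(\tfrac{k-x}{n-m}\,\Big\Vert\,\tfrac{k}{n}\right).
\]
(This is the "Jensen gap equals average KL to the mean" identity.) So we need to upper bound this weighted KL sum by $8\gamma$.

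We bound each term using \cref{fact:KLBound} with $\beta=k/n\le 1/2$, so $1-\beta\ge 1/2$. Note that $x/m-k/n=(x-\mu)/m$ and $(k-x)/(n-m)-k/n=(\mu-x)/(n-m)$. Hence the sum is at most
\[
\log(e)\,\frac{(x-\mu)^2}{\beta(1-\beta)}\left(\frac{1}{m}+\frac{1}{n-m}\right)\ \le\ \log(e)\cdot\frac{2n}{k}\cdot\frac{2}{m}\,(x-\mu)^2 .
\]
Here we used $n-m\ge n/2\ge m$, so that $\frac1m+\frac1{n-m}\le \frac2m$. The right-hand side equals $4\log(e)\,(x-\mu)^2/\mu\le 4\log(e)\gamma\le 8\gamma$, since $\mu=km/n$ and $(x-\mu)^2\le\gamma\mu$.

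Edge cases need a brief check. When $x=0$ or $x=m$, or $k-x\in\{0,n-m\}$, the binomial bounds still hold, with $h(0)=h(1)=0$ and $\binom{N}{0}=1$, and \cref{fact:KLBound} still applies at the endpoints. When $m=0$ or $k=0$ the statement is trivial.

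The main obstacle is spotting and verifying the entropy identity for the exponent. I would check it by expanding both sides directly: the linear terms $a\log\beta$ cancel because the weighted means agree. Once that identity is in hand, the rest is routine bookkeeping of constants.
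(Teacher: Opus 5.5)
Your proposal is correct and follows essentially the same route as the paper. Both bound the binomials by entropy estimates, rewrite the exponent as a weighted sum of KL divergences to the common mean $k/n$ (resp.\ $m/n$), and bound these via \cref{fact:KLBound} using $m/n,k/n\le 1/2$ to reach $4\log(e)\gamma\le 8\gamma$. The only difference is cosmetic: you work with $\binom{m}{x}\binom{n-m}{k-x}/\binom{n}{k}$, whereas the paper uses the symmetric form $\binom{k}{x}\binom{n-k}{m-x}/\binom{n}{m}$.
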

\begin{proof}
We have
\[
\Pr[X=x] = \frac{{m\choose x}{n-m\choose k-x}}{{n\choose k}} = \frac{{k\choose x}{n-k \choose m-x}}{{n\choose m}}.
\]
Using the entropy bound (see, for instance, Example 11.1.3 of \cite{CT06})
\[
\frac{1}{n+1} 2^{nH(k/n)} \leq {n\choose k}\leq 2^{nH(k/n)},
\]
where $H(x)= -x\log x-(1-x)\log (1-x)$ is the binary entropy function, we obtain
\begin{align*}
\Pr[X=x]& \geq \frac{1}{(k+1)(n-k+1)} 2^{kH(x/k)+(n-k)H((m-x)/(n-k))-nH(m/n)}\\
& \geq \frac{1}{(n+1)^2} 2^{n(qH(\alpha)+(1-q)H(\beta)-H(p)).}
\end{align*}
where we define $\alpha=x/k, \beta=(m-x)/(n-k)$ and $p=m/n,q=k/n$.

Now we bound the exponent. We first rewrite it using KL divergence. Since $q\alpha+(1-q)\beta=p$, 
\[
qH(\alpha)+(1-q)H(\beta)-H(p)= -q\cdot \dKL(\alpha \Vert p) - (1-q) \cdot \dKL(\beta \Vert p).
\]
Applying \cref{fact:KLBound}, we obtain
\[
q\cdot \dKL(\alpha \Vert p) + (1-q) \cdot \dKL(\beta \Vert p) \leq \log(e)\frac{q(\alpha-p)^2+(1-q)(\beta-p)^2}{p(1-p)}.
\]
Since
\[
\alpha-p = \frac{x}{k}-\frac{m}{n} = \frac{x-\mu}{nq}, \quad \beta-p=\frac{m-x}{m-k}-\frac{m}{n}=\frac{\mu-x}{n(1-q)},
\]
we have
\[
q(\alpha-p)^2+(1-q)(\beta-p)^2  = \frac{(x-\mu)^2}{n^2q(1-q)} \leq \frac{\gamma\cdot \mu}{n^2q(1-q)}.
\]
Therefore,
\begin{align*}
q\cdot \dKL(\alpha \Vert p) + (1-q) \cdot \dKL(\beta \Vert p)& \leq \log(e)\frac{\gamma\cdot \mu}{n^2p(1-p)q(1-q)}\\
& = \log(e)\frac{\gamma}{n(1-p)(1-q)} && \text{(since $\mu=npq$)}\\
& \leq \log(e)\frac{4\gamma}{n} &&  \text{(since $p,q\leq 1/2$)}\\
& \leq \frac{8\gamma}{n}.
\end{align*}
Substituting this into the previous bound yields
\[
\Pr[X=x] \geq \frac{1}{(n+1)^2} 2^{n\cdot (-8\gamma/n)} = \frac{1}{(n+1)^2} 2^{-8\gamma}.
\]
\end{proof}

\section{Density Preservation When Subsampling Dense Sets}\label{sec:densityPreservation}

Throughout this section, we fix a universe of $n$ elements. We likewise fix an integer $i \leq n/2$, and set $\mathcal{F} \subseteq \binom{[n]}{i}$, and will assume that $\mathcal{F}$ is reasonably dense, namely that $\frac{|\mathcal{F} \cap \binom{[n]}{i}|}{\binom{n}{i}} = p $ is not negligible. 

Now, we will subsample a set $S$ of exactly $m = n \cdot (1 - 1 / \log(n))$ many elements from $[n]$; our goal is to show that after doing this subsampling, it is still the case that $\frac{\mathcal{F} \cap \binom{S}{i}}{\binom{m}{i}}$ is large. Intuitively, this is showing that the \emph{relative density} of this set $\mathcal{F}$ does not decrease too much when we subsample our universe.

We state this formally below:

\begin{theorem}\label{thm:preserveDensitySubsample}
    Let $n,m,i$ be integers with $i\leq m\leq n$. Let $\mathcal{F} \subseteq \binom{[n]}{i}$ and define 
    \[
    p = \frac{|\mathcal{F}|}{\binom{n}{i}}.
    \]
    Now, sample a set $S \subseteq [n]$ with exactly $m$ elements. Then, 
    \[
    \Pr_{S} \left [ \frac{\left|\mathcal{F} \cap \binom{S}{i} \right|}{\binom{m}{i}} \geq \frac{p}{2}\right ] \geq \frac{1}{4} \cdot 2^{-2\frac{n-m}{n-i} \log(1/p)}.
    \]
\end{theorem}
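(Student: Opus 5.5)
Let $\mu$ be the uniform distribution on $\mathcal{F}$ and push it up to level $m$ via $\nu = U_{i \rightarrow m}(\mu)$, a distribution over $\binom{[n]}{m}$. We compare $\nu$ to $\mathcal{U}_m$, which is exactly the law of $S$. Three steps are needed: (i) bound $d_{\mathrm{KL}}(\nu \Vert \mathcal{U}_m)$; (ii) show that $\nu$ puts mass at least $1/2$ on the good event
\[
\mathcal{G} = \left\{ S : \left|\mathcal{F} \cap \tbinom{S}{i}\right| \geq \tfrac{p}{2}\tbinom{m}{i} \right\};
\]
(iii) apply \cref{fact:TransferProbability}.

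\textbf{Step (i).} A direct computation gives $d_{\mathrm{KL}}(\mu \Vert \mathcal{U}_i) = \log\left(\binom{n}{i}/|\mathcal{F}|\right) = \log(1/p)$. Then \cref{cor:ManyUpContraction} with $m$ in the role of $i$ yields
\[
d_{\mathrm{KL}}(\nu \Vert \mathcal{U}_m) \leq \frac{n-m}{n-i}\log(1/p) =: K.
\]
The case $m = i$ is trivial by the same formula, since then $\nu = \mu$ and $K = \log(1/p)$.

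\textbf{Step (ii).} I expect this to be the main point, though it is short. Write $U_{i\rightarrow m}(\mu)$ explicitly. Each $T \in \binom{[n]}{i}$ is spread uniformly over the $\binom{n-i}{m-i}$ supersets of size $m$, since the composition of single up-steps is uniform over supersets by symmetry. So for $S \in \binom{[n]}{m}$,
\[
\nu(S) = \frac{|\mathcal{F}\cap\binom{S}{i}|}{|\mathcal{F}|\binom{n-i}{m-i}}.
\]
Using the identity $|\mathcal{F}|\binom{n-i}{m-i} = p\binom{n}{i}\binom{n-i}{m-i} = p\binom{n}{m}\binom{m}{i}$, this becomes
\[
\nu(S) = \frac{1}{\binom{n}{m}} \cdot \frac{|\mathcal{F}\cap\binom{S}{i}|}{p\binom{m}{i}}.
\]
So $\nu(S)/\mathcal{U}_m(S)$ is the relative density of $\mathcal{F}$ in $S$ divided by $p$. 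On $\overline{\mathcal{G}}$ this ratio is below $1/2$. Hence $\nu(\overline{\mathcal{G}}) < \tfrac12 \mathcal{U}_m(\overline{\mathcal{G}}) \leq \tfrac12$, and therefore $\nu(\mathcal{G}) \geq 1/2$.

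\textbf{Step (iii).} Apply \cref{fact:TransferProbability} with $\nu$, $\pi = \mathcal{U}_m$, and event $\mathcal{G}$. This gives
\[
\Pr_S[S \in \mathcal{G}] = \mathcal{U}_m(\mathcal{G}) \geq \tfrac14 \cdot 2^{-2K},
\]
which is exactly the claimed bound.

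The only points needing care are the uniform-superset description of iterated up-operators and the binomial identity in Step (ii). Both are routine double counting: count pairs $(T, S)$ with $T \subseteq S$, $|T| = i$, $|S| = m$.
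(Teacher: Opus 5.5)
Your proposal is correct and follows the paper's proof essentially step for step: take $\mu$ uniform on $\mathcal{F}$, bound the KL divergence of $\nu = U_{i\to m}(\mu)$ from $\mathcal{U}_m$ via \cref{cor:ManyUpContraction}, use the explicit formula $\binom{n}{m}\nu(S) = \frac{1}{p}\cdot|\mathcal{F}\cap\binom{S}{i}|/\binom{m}{i}$ to get $\nu(\mathcal{G})\ge 1/2$, and conclude with \cref{fact:TransferProbability}. Handling $m=i$ separately is a small bit of care the paper omits, since the corollary requires a strict inequality; one nitpick is that $\nu(\overline{\mathcal{G}}) < \frac12\mathcal{U}_m(\overline{\mathcal{G}})$ should read $\le$, to cover the case $\overline{\mathcal{G}}=\emptyset$.
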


\begin{proof}
    We let $\mu$ be the uniform distribution over $\mathcal{F}$. We can then immediately see that 
    \[
    d_{\mathrm{KL}}(\mu, \mathcal{U}_i) = \sum_{T \in \mathcal{F}} \mu(T) \cdot \log \left ( \frac{\mu(T)}{\mathcal{U}_i(T)}\right ) = |\mathcal{F}| \cdot \frac{1}{|\mathcal{F}|} \cdot \log \left ( \frac{\binom{n}{i}}{|\mathcal{F}|}\right ) = \log(1 /p).
    \]

    Now, let $\nu = U_{i \rightarrow m}(\mu)$. By \cref{cor:ManyUpContraction}, we then see that 
    \begin{align}\label{eq:KLUB}
    d_{\mathrm{KL}}(\nu \Vert \mathcal{U}_m) \leq \frac{n-m}{n-i} \cdot d_{\mathrm{KL}}(\mu \Vert \mathcal{U}_i) \leq \frac{n-m}{n-i} \cdot \log(1/p).
    \end{align}

    Next, define $\mathcal{G} = \left \{S \in\binom{[n]}{m}: \frac{|\mathcal{F} \cap \binom{S}{i}|}{\binom{m}{i}} \geq \frac{p}{2} \right \}$. We now claim that $\nu(\bar{\mathcal{G}}) \leq 1/2$. Indeed, to see this, we observe that 
    \[
    \nu(\bar{\mathcal{G}}) = \sum_{S \in \bar{\mathcal{G}}} \nu(S) = \sum_{S \in \bar{\mathcal{G}}} \frac{1}{\binom{n}{m}} \cdot \binom{n}{m} \cdot \nu(S).
    \]
    Note that $\nu(S)$ is the probability that, starting with a random set $T \in \mathcal{F}$, adding a uniformly random set of $m-i$ elements yields the set $S$. This probability is exactly
    \begin{align*}
        \nu(S) & = \Pr_{T \sim \mu} [T \subseteq S] \cdot \Pr_{A\sim {[n]-T\choose m-i}}[A=S-T]\\
        & =  \frac{\left |\mathcal{F}\cap \binom{S}{i} \right |}{|\mathcal{F}|} \cdot \frac{1}{\binom{n-i}{m-i}}.
    \end{align*}
    In particular, we then see that 
    \[
    \binom{n}{m} \cdot \nu(S) = \frac{\left |\mathcal{F}\cap \binom{S}{i}\right |}{|\mathcal{F}|} \cdot \frac{\binom{n}{m}}{\binom{n-i}{m-i}} = \frac{\left |\mathcal{F}\cap \binom{S}{i}\right |}{|\mathcal{F}|} \cdot \frac{\binom{n}{i}}{\binom{m}{i}} =\frac{\left |\mathcal{F}\cap \binom{S}{i}\right |}{\binom{m}{i}} \cdot \frac{1}{p}.
    \]

    Using the above, we then see that
    \begin{align*}
        \nu(\bar{\mathcal{G}}) & = \sum_{S \in \bar{\mathcal{G}}} \frac{1}{\binom{n}{m}} \cdot \binom{n}{m} \cdot \nu(S)\\
        & = \sum_{S \in \bar{\mathcal{G}}} \frac{1}{\binom{n}{m}} \cdot \frac{\left |\mathcal{F}\cap \binom{S}{i}\right |}{\binom{m}{i}} \cdot \frac{1}{p} \\
        & \leq \sum_{S \in \bar{\mathcal{G}}} \frac{1}{\binom{n}{m}} \cdot \frac{1}{2} \leq \frac{1}{2}.
    \end{align*}
    where we have used that, by definition, every set $S \in \bar{\mathcal{G}}$ satisfies $\frac{|\mathcal{F} \cap \binom{S}{i}|}{\binom{m}{i}} < \frac{p}{2}$.

    To conclude, we have $\nu(\mathcal{G}) \geq 1/2$. Along with \cref{eq:KLUB} and \cref{fact:TransferProbability}, this then implies that \[
    \mathcal{U}_m(\mathcal{G}) \geq \frac{1}{4} \cdot 2^{-2\frac{n-m}{n-i} \log(1/p)}.
    \]
    This exactly means that, with probability at least $\frac{1}{4} \cdot 2^{-2\frac{n-m}{n-i} \log(1/p)}$ over the sampled set $S$ of $m$ elements, it is the case that $\frac{|\mathcal{F} \cap \binom{S}{i}|}{\binom{m}{i}} \geq \frac{p}{2}$, as we desire.
\end{proof}

\section{Decomposition Algorithm}\label{sec:decomposition}

In this section, we revisit the matroid decomposition algorithm of \cite{KPS25} and obtain a slight improvement in the round complexity. The presentation largely follows the structure of Section 4 of \cite{KPS25}.

\subsection{Finding Sets with Many Circuits}

We start by defining the greedily-optimal sets. Recall that for a matroid $\M=(E,\I)$ and a subset $S\subseteq E$, we let
\[
q_S= \Pr_\pi[C_\pi\subseteq S].
\]
\cite{KPS25} shows that by sampling $\text{poly}(n)$ random permutations, one can estimate $q_S$ to small error for every $S$. In particular, let $\widehat{q}_S$ denote the empirical estimate of $q_S$ obtained from $n^{15}$ random permutations. A simple application of a Chernoff bound yields the following statement.

\begin{claim} \label{clm:boundedErrorEstimation}
For a matroid $\M=(E,\I)$ and every subset $S\subseteq E$,
\[
\left|\widehat q_S- q_S\right|\leq \frac{1}{n^{10}}
\]
with probability $1-2^{-n}$.
\end{claim}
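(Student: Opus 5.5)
The plan is a Hoeffding bound for a single fixed $S$, followed by a union bound over all subsets. Let $N = n^{15}$ and let $\pi_1,\dots,\pi_N$ be the independent uniformly random permutations used by the estimator. For a fixed $S\subseteq E$, set $X_j = \mathbf{1}[C_{\pi_j}\subseteq S]$. These are i.i.d. Bernoulli variables with mean $q_S$, and $\widehat q_S = \frac{1}{N}\sum_j X_j$. Hoeffding's inequality for $[0,1]$-bounded variables gives
\[
\Pr\left[\left|\widehat q_S - q_S\right| > t\right] \leq 2\exp(-2Nt^2).
\]
We take $t = n^{-10}$, so $2Nt^2 = 2n^{15}\cdot n^{-20}$.

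This exposes the main obstacle: $n^{15}$ samples at accuracy $n^{-10}$ give an exponent of only $2n^{-5}$, so the failure probability is not even small. The constants as stated do not support a Chernoff argument directly. I would repair this in one of two ways, choosing whichever matches how the claim is used later.

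The first option is to increase the number of samples to $N = n^{22}$. Then $2Nt^2 = 2n^{2}$, and the single-set failure probability is at most $2e^{-2n^2}$. A union bound over the $2^n$ subsets of $E$ gives total failure probability at most $2^{n+1}e^{-2n^2}\le 2^{-n}$ for $n\ge 2$, and this yields the claim uniformly over all $S$ at once.

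The second option applies if only $\mathrm{poly}(n)$ sets $S$ are ever queried, namely the sets produced by the decomposition. In that case one can use the relative (multiplicative) Chernoff bound together with the fact that the relevant $q_S$ values are small. The cleaner route, however, is simply to take $N$ to be a sufficiently large polynomial. The rest of the paper only needs $\mathrm{poly}(n)$ queries per round, so changing the sample count does not affect the round complexity.

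Once $N$ is fixed, the remaining steps are routine. First, check that each sample costs one round of $n$ prefix queries, all run in parallel, which identifies $C_{\pi_j}$. Second, note that $\widehat q_S$ is computed offline for every $S$ from the same samples. Since the union bound does not need independence across different $S$, the shared samples cause no difficulty. Third, combine the bounds as above to get success probability $1-2^{-n}$.
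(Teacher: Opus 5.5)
Your approach (Hoeffding for each fixed $S$, then a union bound over subsets) is what the paper intends. Its entire justification is the sentence ``a simple application of a Chernoff bound.'' Your diagnosis of the constants is correct and can be sharpened: with $N=n^{15}$ the claim is not merely unproven but false. Take the rank-one uniform matroid, where $C_\pi=\{\pi(1),\pi(2)\}$, and let $|S|\approx n/\sqrt{2}$, so that $q_S\approx 1/2$. Then $N\widehat q_S$ is binomial with standard deviation $\Theta(n^{7.5})$. The event $|\widehat q_S-q_S|\le n^{-10}$, i.e.\ $|N\widehat q_S-Nq_S|\le n^{5}$, therefore has probability only $O(n^{-2.5})$.

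Your repair is sound. With $N=n^{22}$ (any $N\ge n^{21}$ suffices), Hoeffding plus a union bound over all $2^n$ subsets gives the bound simultaneously for every $S$. That simultaneous form is what is genuinely needed, because the greedily-optimal set is chosen adaptively from the same samples. The only cost is a larger polynomial query count per round; the round complexity is unaffected.

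Your second option is misframed, for that same adaptivity reason. The sets examined by \cref{alg:FindGreedilyOptimal} depend on the samples, so you cannot restrict attention to $\mathrm{poly}(n)$ fixed sets. Also, the small quantities are $1-q_S$ and $q_S-q_{S-x}$, not $q_S$. Done correctly, though, this route shows something worth knowing. Apply a multiplicative Chernoff bound directly to the events $\{x\in C_\pi\subseteq S\}$ (whose empirical frequency is exactly $\widehat q_S-\widehat q_{S-x}$) and $\{C_\pi\not\subseteq S\}$. The failure probability is $\exp(-\Omega(n^{7}))$ per pair $(S,x)$, which survives a union bound over all $n2^n$ pairs. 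So the original $n^{15}$ samples already suffice for the consequences actually used in \cref{clm:highProbCircuit} and \cref{clm:peelingSetsAlpha}, namely $q_S-q_{S-x}\ge n^{-9}$ and $q_{S_i}>1-n^{-6}$, even though \cref{clm:boundedErrorEstimation} itself fails.
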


\begin{definition}
For a matroid $\M=(E,\I)$, we say that a set $S\subseteq E$ is \emph{greedily-optimal} if 
\[
    \widehat{q}(S) \geq 1-\frac{1}{n^{7}}+\frac{|S|}{n^{8}},
\]
and there is no element $x \in S$ such that 
\[
    \widehat{q}_{S-x} \geq 1-\frac{1}{n^{7}}+\frac{|S|-1}{n^{8}},
\]
\end{definition}

There is a simple algorithm for finding greedily-optimal sets: we start with the ground set $E$ and continue to delete elements until the condition no longer holds.

\begin{algorithm}[H]
\caption{FindGreedilyOptimal$(\M=(E,\I))$}\label{alg:FindGreedilyOptimal}
Multiset $\mathcal{C}\gets \emptyset$\\
\For {$i\in[n^{15}]$ in parallel} {
    Draw a random permutation $\pi$.\\
    $\mathcal{C}\gets \mathcal{C}\cup \{C_\pi\}$.
}
$S \gets E$. \\
\While{True}{
\For{$x \in S$}{
$\widehat{q}_{S - x} 
\gets\frac{ \left |C\in \mathcal{C}: C \subseteq S - \{x \} \right |}{|\mathcal{C}|}$
}

\If{$\exists x \in S$ s.t. $ \widehat{q}_{S - x} \geq 1-\frac{1}{n^{7}} + \frac{|S|-1}{n^{8}}$} {
Let $x$ be the first such element\\
$S \leftarrow S - x $
} 
\Else {
\Return{$S$}
}
}
\end{algorithm}

\begin{claim} \label{clm:peelTime}
\cref{alg:FindGreedilyOptimal} runs in one round and finds a greedily-optimal set $S$ in $\M$. 
\end{claim}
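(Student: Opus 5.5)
The claim has two parts: the algorithm uses only one round of queries, and its output $S$ is greedily-optimal. I will prove them separately, working with the empirical estimates $\widehat q$ computed from the multiset $\mathcal{C}$ (the definition of greedily-optimal is stated in terms of $\widehat q$).

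\emph{One round.} The only oracle queries are those used to compute $C_\pi$ for the $n^{15}$ random permutations. For a fixed $\pi$, all $n$ prefix queries $\Ind(\{\pi(1),\dots,\pi(j)\})$ can be issued in parallel. They reveal the first dependent prefix, of length $j+1$ say. The circuit $C_\pi$ is then the unique circuit inside $\{\pi(1),\dots,\pi(j+1)\}$. It consists of $\pi(j+1)$ together with those $y\in\{\pi(1),\dots,\pi(j)\}$ for which $\{\pi(1),\dots,\pi(j+1)\}-y$ is independent. These removal queries do not depend on $j$ being known in advance: for every candidate length $t$ and every $y$ among the first $t$ elements, query $\Ind(\{\pi(1),\dots,\pi(t)\}-y)$. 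This is $O(n^2)$ queries per permutation and $\mathrm{poly}(n)$ in total, all issued in the same round. After this round the entire while loop uses only $\mathcal{C}$ and needs no further queries.

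\emph{Output is greedily-optimal.} The loop deletes one element per iteration, so it terminates. When it returns $S$, there is no $x\in S$ with $\widehat q_{S-x}\ge 1-\frac{1}{n^7}+\frac{|S|-1}{n^8}$; this is exactly the second condition of the definition. For the first condition I will maintain the invariant $\widehat q_S \ge 1-\frac1{n^7}+\frac{|S|}{n^8}$ throughout the loop. It is preserved by every deletion, because the new set $S-x$ is chosen precisely so that $\widehat q_{S-x}\ge 1-\frac1{n^7}+\frac{|S-x|}{n^8}$.

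\emph{Base case, the main point to check.} Initially $S=E$. Here $\widehat q_E=1$ whenever every sampled permutation yields a circuit, since any $C_\pi$ lies in $E$. The required bound is $1\ge 1-\frac1{n^7}+\frac{n}{n^8}=1$, which holds with equality. If $\M$ is independent (full rank), no circuit exists; I will handle this degenerate case by convention, either trivially or by treating the empty multiset as giving $\widehat q\equiv 1$. With this convention the invariant holds at the start, and the claim follows.
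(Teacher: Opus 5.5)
Your proof is correct and supplies the routine verification that the paper leaves implicit: the paper states this claim without proof, treating it as immediate from the algorithm (as in \cite{KPS25}). It rests on the same two facts you check, namely that each $C_\pi$ can be read off in one round from the prefix and prefix-minus-one-element queries, and that the loop maintains $\widehat q_S \ge 1-\frac{1}{n^7}+\frac{|S|}{n^8}$ and halts only when no deletion is allowed. One small remark is that when the routine is called inside \cref{alg:iterativePeel}, the current ground set may be smaller than the original $n$ in the thresholds, so the base case holds as $|E|/n^8 \le 1/n^7$ rather than with equality; your argument already covers this.
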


We now establish the following claim, which shos that every element in the greedily-optimal set has non-trivial marginal probability.

\begin{claim}\label{clm:highProbCircuit}
Let $\M=(E,\I)$ be a matroid, and let $S$ be a greedily-optimal set. Then, with probability $\geq 1 - 2^{-n}$, for every $x \in S$,
\[
    p_{x,\M|_{S_i}} \geq \frac{1}{n^{9}}.
\]
\end{claim}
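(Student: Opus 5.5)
The plan is to relate the marginal probability $p_{x,\M|_S}$ to the \emph{difference} $q_S - q_{S-x}$ of circuit-containment probabilities in the full matroid $\M$. Greedy-optimality then forces this difference to be at least about $1/n^8$. (The $S_i$ in the statement I read as $S$.)

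\textbf{Step 1 (estimation).} Condition on the event of \cref{clm:boundedErrorEstimation}, namely $|\widehat q_T - q_T| \le n^{-10}$. We need it only for $T=S$ and $T=S-x$ with $x\in S$. If the claim is only per-set, a union bound over these at most $n+1$ sets, or over the $O(n^2)$ sets the algorithm ever inspects, costs a factor $\mathrm{poly}(n)$. We absorb this either by running Chernoff with slightly more slack or by reading \cref{clm:boundedErrorEstimation} as simultaneous over the needed sets.

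By greedy-optimality,
\[
\widehat q_S \ge 1-n^{-7}+|S|n^{-8} \quad\text{and}\quad \widehat q_{S-x} < 1-n^{-7}+(|S|-1)n^{-8}.
\]
Hence $\widehat q_S-\widehat q_{S-x} > n^{-8}$, and so $q_S - q_{S-x} \ge n^{-8}-2n^{-10} \ge n^{-9}$.

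\textbf{Step 2 (combinatorial identity).} Since $C_\pi\subseteq S-x$ holds iff $C_\pi\subseteq S$ and $x\notin C_\pi$, we get
\[
q_S-q_{S-x}=\Pr_\pi[C_\pi\subseteq S,\ x\in C_\pi].
\]

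\textbf{Step 3 (coupling to the restriction).} This is the only real content. Let $\pi$ be a uniform permutation of $E$ and let $\sigma$ be the induced order on $S$; then $\sigma$ is a uniform permutation of $S$. I claim that whenever $C_\pi\subseteq S$, we have $C_\sigma=C_\pi$.

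To see this, let $j$ be the position at which $C_\pi$ forms, i.e.\ the position of its last element. The elements of $S$ appearing before position $j$ in $\pi$ form a subset of an independent prefix, so they are independent. Adding the next element of $\sigma$, which is that last element of $C_\pi$, creates a dependent set containing $C_\pi$. Hence the first prefix of $\sigma$ to become dependent is exactly this one. Its unique circuit is $C_\pi$, since that circuit lies inside this prefix.

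Therefore
\[
\Pr_\pi[C_\pi\subseteq S,\ x\in C_\pi] \le \Pr_\sigma[x\in C_\sigma] = p_{x,\M|_S},
\]
and combining with Steps 1 and 2 gives $p_{x,\M|_S}\ge n^{-9}$ for every $x\in S$.

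\textbf{Main obstacle.} The main obstacle is the coupling in Step 3, specifically checking carefully that the first circuit is preserved under restriction when it lies inside $S$. The converse need not hold, but it is not needed since we only want a lower bound. Beyond that, the only subtlety is the bookkeeping of estimation errors and the union bound in Step 1.
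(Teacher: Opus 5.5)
Your proof is correct and follows the paper's argument step for step. Greedy-optimality plus \cref{clm:boundedErrorEstimation} gives $q_S-q_{S-x}\ge n^{-9}$; this difference equals $\Pr_\pi[\{x\}\subseteq C_\pi\subseteq S]$; and restricting $\pi$ to $S$ preserves any first circuit lying inside $S$, so the probability is at most $p_{x,\M|_S}$. Your Step 3 just writes out in full the coupling that the paper asserts in one line.
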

\begin{proof}
    By the definition of being greedily-optimal, for every element $x \in S$, it must be that 
    \[
    \widehat{q}_{S - x} < 1- \frac{1}{n^{7}} +\frac{|S|-1}{n^{8}},
    \]
    and that 
    \[
    \widehat{q}_{S} \geq 1-\frac{1}{n^{7}}+\frac{|S|}{n^{8}}.
    \]
    In particular, this means
    \[
    \widehat{q}_{S}  - \widehat{q}_{S - x} \geq \frac{1}{n^{8}}.
    \]
    By \cref{clm:boundedErrorEstimation}, with high probability, we have
    \[
    q_S - q_{S - x}\geq \frac{1}{n^{9}}.
    \]

    Now,for a matroid $\M=(E,\I)$ and $x\in T, T\subseteq E$, we define the auxiliary value $p_{x,T,\M}$ as
    \[
    p_{x,T,\M} = \Pr_{\pi\sim \M}[\{x\}\subseteq C_\pi\subseteq T].
    \]
    We can observe that $p_{x,\M|_S} = p_{x,S,\M|_S} \geq p_{x,S,\M}$. The inequality is because whenever we sample in accordance to a permutation $\pi$ over $E$, and recover a circuit $C_{\pi}$ such that $x \in C_{\pi}$ and $C_{\pi} \subseteq S$, the same permutation, if restricted to $S$ and used to sample elements of $S$, would have given a circuit such that $x \in S$.

    Finally, we can observe that $p_{x,S,\M} = q_{S} - q_{S - x}$, as $q_{S} - q_{S - x}$ is exactly the probability that a circuit, when sampled from $\M$, is contained in $S$ and uses the element $x$. Thus, we conclude that $p_{x,\M|_S} \geq p_{x,S,\M} = q_{S} - q_{S - x}\geq n^{-9}$.

    The high probability bound follows from \cref{clm:boundedErrorEstimation}.
\end{proof}

In addition, given a greedily-optimal set $S$, one can recover an independent set of size $\Omega\left(\frac{\alpha(S)}{|S|} \cdot n\right)$ via random sampling.

\begin{claim}[Claim 5.1 of \cite{KPS25}] \label{clm:largeAlpha}
Let $\M=(E,\I)$ be a matroid on $n$ elements, and let $S$ be a greedily-optimal set. Then, with probability $\geq 1 - 2^{-n}$, for $\ell = \frac{\alpha(S)}{10|S|} n$, we have
\[
\Pr_\pi[\Ind(\{\pi(1),\dots,\pi(\ell)\})=1]\geq \frac{1}{4}.
\]
\end{claim}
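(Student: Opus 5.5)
The plan is to transfer the independence guarantee from random subsets of $S$ (which the definition of $\alpha(S)$ gives) to random prefixes of permutations of all of $E$. The bridge is that, since $S$ is greedily-optimal, circuits essentially never form outside $S$.

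\textbf{Step 1: little dependence outside $S$.} Because $\widehat q_S \geq 1 - n^{-7} + |S|n^{-8}$, \cref{clm:boundedErrorEstimation} gives $q_S \geq 1 - 2n^{-7}$. So with probability at least $1-2n^{-7}$ the first circuit $C_\pi$ is contained in $S$. I will work on this event.

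\textbf{Step 2: localize the first dependence to $S$.} Let $P_\ell = \{\pi(1),\dots,\pi(\ell)\}$. If $P_\ell$ is dependent, then the first circuit has appeared by position $\ell$, so $C_\pi \subseteq P_\ell$. On the event of Step 1 we also have $C_\pi \subseteq S$, hence $P_\ell \cap S$ is dependent. Therefore
\[
\Pr_\pi[\Ind(P_\ell)=0] \leq 2n^{-7} + \Pr_\pi[\Ind(P_\ell\cap S)=0].
\]

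\textbf{Step 3: bound the dependence probability of $P_\ell \cap S$.} Conditioned on its size $K = |P_\ell \cap S|$, the set $P_\ell\cap S$ is a uniform $K$-subset of $S$. The variable $K$ is hypergeometric with mean $\ell|S|/n = \alpha(S)/10$, so by Markov $\Pr[K \geq \alpha(S)/2] \leq 1/5$. It remains to show that a uniform $k$-subset of $S$ with $k < \alpha(S)/2$ is independent with probability at least about $1/2$.

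By the minimality in the definition of $\alpha$, a uniform $(\alpha(S)-1)$-subset of $S$ is independent with probability greater than $1/2$. The property of being independent is downward closed. Moreover, a uniform $k$-subset of a uniform $(\alpha(S)-1)$-subset is itself a uniform $k$-subset, since it is obtained by applying the down operator. Hence $\Pr[\Ind(T_k)=1]$ is nonincreasing in $k$, and every $k \leq \alpha(S)-1$ gives independence with probability greater than $1/2$.

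\textbf{Step 4: combine.} Putting the pieces together,
\[
\Pr[\Ind(P_\ell)=0] \leq 2n^{-7} + \tfrac15 + \tfrac12 < \tfrac34,
\]
which is the claimed bound. The failure probability $2^{-n}$ comes only from the estimation event of \cref{clm:boundedErrorEstimation}.

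\textbf{Main obstacle.} The delicate point is Step 2: relating a prefix of a permutation of $E$ to a random subset of $S$. The route through "the first circuit lies in $S$" avoids reasoning about how dependence interacts with elements outside $S$, so I expect it to work. Minor care is needed for edge cases: when $\alpha(S)$ is small the Markov bound still applies, and when $\ell$ is not an integer I will take the floor.
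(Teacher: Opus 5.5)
Your proof is correct. The paper does not reprove this claim but imports it from \cite{KPS25}, and your argument is the standard one behind it: use $q_S \geq 1-2n^{-7}$ so that any dependence in the prefix $P_\ell$ already shows up in $P_\ell\cap S$, then apply Markov's inequality to the hypergeometric count $|P_\ell\cap S|$ (mean $\alpha(S)/10$) and invoke the definition of $\alpha(S)$. One simplification: the monotonicity/down-operator step in Step 3 is unnecessary, since minimality of $\alpha(S)$ already gives $\Pr_{T\sim\binom{S}{k}}[\Ind(T)=1]>1/2$ directly for every $k<\alpha(S)$.
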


\subsection{Iterative Matroid Decomposition}
Now we are ready to describe the matroid decomposition algorithm. It first applies the procedure $\text{RemoveSmallCircuits}(\M)$ (see Algorithm 7 of \cite{KPS25}). This procedure runs in one round and simply eliminates all circuits of size at most $50$ from the matroid; it is introduced for technical reasons in the probabilistic argument. The algorithm then proceeds by iteratively peeling off greedily-optimal sets from the remaining matroid using the algorithm established previously.

\begin{algorithm}[H]
\caption{IterativePeeling$(\M=(E,\I))$}\label{alg:iterativePeel}
    $\M\gets \text{RemoveSmallCircuits}(\M)$ \tcp{Algorithm 7 in \cite{KPS25}}
    $n \gets |E|,k \gets 0$ \\
    \While{$|\M|\geq n/2$}{
    $k\gets k +1$\\
    $S_k \leftarrow \mathrm{FindGreedilyOptimal}(\M)$. \tcp{\cref{alg:FindGreedilyOptimal}}
    \If{$\alpha(S_k)/|S_k|> \frac{1}{100 \log(n)}$ or $|S_k|>n/4$} {
        \Return{$S_1,\dots ,S_{k}$}
    }
    $\M \leftarrow \M \setminus S_{k}$.
    }
    \Return{$S_1, \dots ,S_{k}$}
\end{algorithm}

Let $S_1,\dots,S_k$ be the sets produced by the above algorithm. We first observe that because the procedure $\text{RemoveSmallCircuits}(\M)$ eliminates all circuits of size $\leq 50$ at the beginning, we always have $\alpha(S_i)\geq 50$. In addition, by the stopping condition, we have $\alpha(S_i)/|S_i|\leq 1/2$ and $|S_i|\leq n/4$ for every $i<k$.

We provide the following characterization of how the $\alpha$-value evolve throughout the process, which is essentially a strengthening of Claim 4.13 in \cite{KPS25}:

\begin{claim}\label{clm:peelingSetsAlpha}
Let $\mathcal{M}$ be a matroid, and let $S_1, \dots, S_k$ be a sequence of sets that are peeled off in accordance with \cref{alg:iterativePeel}. Then, with probability $\geq 1 - k \cdot 2^{-n}$, for any $i<j<k$ where $|S_i|\leq 2|S_j|$, we have
\[
\alpha(S_j) \geq \frac{\alpha(S_i)|S_j|}{|S_i|} + \sqrt{\frac{\alpha(S_i)|S_j|}{|S_i|} \cdot \frac{\log n}{16}}.
\]    
\end{claim}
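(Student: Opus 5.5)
We will argue by contradiction, following the structure of Claim 4.13 of \cite{KPS25}. Let $\M_i$ denote the matroid at the moment $S_i$ is peeled, so $S_j\subseteq E(\M_i)$ and $S_i$ is greedily-optimal in $\M_i$. By \cref{clm:boundedErrorEstimation} (union bound over the $k$ stages), every estimate is accurate to within $n^{-10}$. Greedy optimality of $S_i$ then gives $q_{S_i}\geq 1-2n^{-7}$ in $\M_i$. Hence, for a random permutation $\pi$ of $E(\M_i)$, the first circuit lies inside $S_i$ with probability at least $1-2n^{-7}$.

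The key consequence is a comparison between $S_i$ and any set $T\subseteq E(\M_i)\setminus S_i$, in particular $T=S_j$. Take a uniform random subset $R$ of $E(\M_i)$ of size $r$. If $R\cap S_i$ is independent, then with high probability $R$ is independent, since otherwise the first circuit of a permutation beginning with $R$ would avoid $S_i$. Now restrict to the event that $|R\cap S_i|=a$ and $|R\cap S_j|=b$. Conditioned on this, $R\cap S_i$ and $R\cap S_j$ are independent uniform subsets of the respective sizes. So
\[
\Pr_{A\sim\binom{S_i}{a}}[\Ind(A)=1]\ \lesssim\ \Pr_{B\sim\binom{S_j}{b}}[\Ind(B)=1]+\frac{2n^{-7}}{\Pr[|R\cap S_i|=a,\ |R\cap S_j|=b]},
\]
since the event ``$R\cap S_i$ independent but $R\cap S_j$ dependent'' forces a first circuit outside $S_i$ for a suitable permutation (first $R$, with $S_j$-part ordered first). \cref{lem:hyp} will lower bound the denominator by roughly $n^{-4}2^{-8\gamma}$ whenever $a,b$ lie within $\sqrt{\gamma\cdot\mathrm{mean}}$ of their means. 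This is where the $\sqrt{\log n}$ deviation comes from: $\gamma\approx \log n/16$ still keeps the error term $o(1)$.

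Now choose parameters. Set $a=\alpha(S_i)-1$, so $A$ is independent with probability $>1/2$ by the definition of $\alpha(S_i)$. Choose $r$ so that the mean of $|R\cap S_i|$ equals $a$. The mean of $|R\cap S_j|$ is then $\approx a|S_j|/|S_i|$. Take $b=\frac{a|S_j|}{|S_i|}+\sqrt{\frac{a|S_j|}{|S_i|}\cdot\frac{\log n}{16}}$. This stays within $\sqrt{\gamma\mu}$ of the mean with $\gamma=\log n/16$, so \cref{lem:hyp} bounds the joint probability by $\gtrsim n^{-4}\cdot n^{-1/2}$.

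To apply the lemma twice jointly, we condition first on $|R\cap(S_i\cup S_j)|$, or apply it to the two marginals sequentially. The hypotheses $m/n, k/n\leq 1/2$ should follow from $|S_i|,|S_j|\leq n/4$ and $\alpha\leq |S|/2$. The hypothesis $|S_i|\leq 2|S_j|$ keeps the sizes comparable. We then conclude $\Pr_B[\Ind(B)=1]\geq 1/2-o(1)$. Monotonicity in size plus a little slack, absorbing the $-1$ and the $o(1)$ using $\alpha\geq 50$, gives $\alpha(S_j)\geq b$ up to the stated constants.

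The main obstacle is the coupling step: rigorously justifying that ``$R\cap S_i$ independent and $R\cap S_j$ dependent'' implies a first circuit not contained in $S_i$ with comparable probability under a uniform permutation. We will handle this by ordering $\pi$ so that $R$ comes first in random order and noting that any circuit inside $R\cap S_j$ appears before $R\cap S_i$ closes one. Careful bookkeeping of the conditioning probabilities is needed so that the $n^{-7}$ error survives division by the $n^{-4.5}$ hypergeometric bound.
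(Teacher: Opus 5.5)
Your skeleton is the paper's: a contradiction with $1-q_{S_i}=O(n^{-7})$, a uniform random set $R\subseteq E(\M_i)$ whose $S_i$-count sits near $\alpha(S_i)$, and \cref{lem:hyp} charging only $n^{-O(1)}$ for a $\sqrt{\mu\log n/16}$ upward deviation of the $S_j$-count. Two side remarks. Your sequential conditioning is a valid substitute for the paper's positive-correlation step, because $E(\M_i)\setminus S_i\supseteq E(\M_j)$ still has at least $n/2$ elements. Your ``main obstacle'' is not one: the first $r$ entries of a uniform $\pi$ form a uniform $R$, and if $R$ is dependent while $R\cap S_i$ is independent, then $C_\pi\subseteq R$ but $C_\pi\not\subseteq S_i$. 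No reordering is needed, and putting the $S_j$-part first would break uniformity.

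However, your closing step fails as written. The additive form $\Pr_A[\Ind(A)=1]\lesssim\Pr_B[\Ind(B)=1]+\mathrm{err}$ only gives $\Pr_B[\Ind(B)=1]\ge 1/2-o(1)$. Since $\alpha(S_j)$ is defined by the sharp threshold $\Pr\le 1/2$, this does not show $b<\alpha(S_j)$, and monotonicity in set size cannot recover the lost $o(1)$. Keep the product form instead. The event you actually bound has conditional probability $p_A(1-p_B)\le\mathrm{err}$. Then $p_A>1/2$ forces $1-p_B<2\,\mathrm{err}$, hence $p_B>1/2$. Equivalently, as the paper does, assume $\alpha(S_j)\le b$, so that $1-p_B\ge 1/2$ and the bad event has conditional probability at least $1/4$.

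The second failure is the centering of the $S_i$-count at $a=\alpha(S_i)-1$. This lowers the $S_j$-mean by $|S_j|/|S_i|$. The hypothesis $|S_i|\le 2|S_j|$ bounds this ratio only from below, so the sizes need not be comparable. Your conclusion $\alpha(S_j)>b$ then falls short of the claim by about $|S_j|/|S_i|$. That exceeds the term $\sqrt{\alpha(S_i)\frac{|S_j|}{|S_i|}\cdot\frac{\log n}{16}}$ once $|S_j|/|S_i|>\alpha(S_i)\log n/16$. Raising $b$ to the claimed value does not help either: $b$ then sits about $|S_j|/|S_i|$ above the mean. \cref{lem:hyp} would need $\gamma\approx |S_j|/(|S_i|\alpha(S_i))\gg\log n$, the hypergeometric factor is no longer polynomial, and $2n^{-7}$ divided by it is not small. ``Using $\alpha\ge 50$'' does not absorb this; the loss is harmless only in the application, where $|S_j|/|S_i|\in[1/2,2]$.

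The fix is the paper's parameter choice. Take $r=\alpha(S_i)|E(\M_i)|/|S_i|$, so the $S_i$-count has mean exactly $\alpha(S_i)$, and use the value $\alpha(S_i)-1$. That value is at distance $1$ from the mean and costs only $(n+1)^{-2}2^{-O(1)}$. The $S_j$-mean is then $\alpha(S_i)|S_j|/|S_i|$ up to an additive $O(1)$. The ceiling of the target lies within $\sqrt{\mu\log n/8}$ of it, costing a factor $(n+1)^{-2}n^{-1}$.
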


\begin{proof}
The high probability bound is necessary because we assume that \cref{clm:boundedErrorEstimation} holds for each set $S_i$ that is peeled off. 

Let $m$ be the size of the matroid $\M$ at the start of the $i$th iteration. Then we have $m\geq n/2$ and $|S_i|/m\leq 1/2,|S_j|/m\leq 1/2$. Let $\ell = \frac{\alpha(S_i)}{|S_i|} m$. Let $U$ be a uniformly random subset of $E$ of cardinality $\ell$, we define random variables $X_i = |U\cap S_i|, X_j =|U \cap S_j|$. Note that $X_i\sim \Hyp(n,|S_i|,\ell),X_j\sim \Hyp(n,|S_j|,\ell)$.

For the sake of contradiction, suppose
\[
\alpha(S_j) \leq \frac{\alpha(S_i)|S_j|}{|S_i|} +  \sqrt{\frac{\alpha(S_i)|S_j|}{|S_i|} \cdot \frac{\log n}{16}}.
\]
We will show that
\[
\Pr[X_i< \alpha(S_i) \land X_j\geq \alpha(S_j)] \geq \frac{4}{n^{6}}.
\]
Consider the $i$th iteration of \cref{alg:iterativePeel}, the above implies that for a random permutation $\pi$, there is a $\geq 4/n^{6}$ probability that there are less than $\alpha(S_i)$ elements from $S_i$ and more than $\alpha(S_j)$ elements from $S_j$ in the first $\ell$ elements of $\pi$. Conditioned on this event, with probability at least $1/4$, there is no circuit within $S_i$ but there is a circuit in $S_j$. This implies that with probability $\geq1/n^{6}$, the first circuit appears outside of $S_i$ when adding elements according to the order of a random permutation $\pi$. But on the other hand, since $S_i$ is a greedily-optimal set in the $i$th iteration, by \cref{clm:boundedErrorEstimation}, we have $q_{S_i} \geq 1-n^{-7}-n^{-10}> 1-n^{-6}$. This is a contradiction.

We first observe that the events $X_i< \alpha(S_i)$ and $X_j\geq \alpha(S_j)$ are positively correlated. Thus,
\[
\Pr[X_i< \alpha(S_i) \land X_j\geq \alpha(S_j)] \geq \Pr[X_i< \alpha(S_i)] \cdot \Pr[X_j\geq \alpha(S_j)].
\]
We analyze the two in turn in the following.

Since $X_i\sim \Hyp(m,|S_i|,\ell)$, we have $\E[X_i]=\ell \frac{|S_i|}{m} = \alpha(S_i)$. Thus,
\[
\Pr[X_i< \alpha(S_i)] \geq \Pr[X_i=\E[X_i]-1].
\]
Let $x=\E[X_i]-1$. Since $\E[X_i]=\alpha(S_i)\geq 50$ by our assumption, we have
\[
\left|x-\E[X_i]\right| = 1 \leq \sqrt{\E[X_i]}.
\]
Since $|S_i|/m\leq 1/2$ and $\ell/n=\alpha(S_i)/|S_i|\leq 1/2$, we can invoke \cref{lem:hyp} and obtain
\[
\Pr[X_i< \alpha(S_i)] \geq \Pr[X_i=x] \geq \frac{1}{(m+1)^2} 2^{-4} \geq  \frac{1}{16(n+1)^2}.
\]

Similarly, since $X_j\sim \Hyp(m,|S_j|,\ell)$, we have $\E[X_j] = \ell \frac{|S_j|}{m} = \frac{\alpha(S_i)|S_j|}{|S_i|}$. Thus,
\begin{align*}
\Pr[X_j\geq \alpha(S_j)] &\geq \Pr\left[X_j\geq \frac{\alpha(S_i)|S_j|}{|S_i|} +  \sqrt{\frac{\alpha(S_i)|S_j|}{|S_i|} \cdot \frac{\log n}{16}}\right]\\
& = \Pr\left[X_j\geq \E[X_j]+\sqrt{\E[X_j]\cdot \frac{\log n}{16}}\right]\\
& \geq \Pr\left[X_j = \left\lceil \E[X_j]+\sqrt{\E[X_j]\cdot \frac{\log n}{16}}\right\rceil\right].
\end{align*}
Let $x=\left\lceil \E[X_j]+\sqrt{\E[X_j]\log n/16}\right\rceil$. Since $\E[X_j]=\frac{\alpha(S_i)|S_j|}{|S_i|}\geq 25$ by our assumptions that $\alpha(S_i)\geq 50, |S_j|/|S_i|\geq 1/2$, we have
\[
|x-\E[X_j]| \leq \sqrt{\E[X_j]\log n/16} + 1 \leq \sqrt{\E[X_j]\log n/8}.
\]
Since $|S_j|/m\leq 1/2$ and $\ell/m=\alpha(S_i)/|S_i|\leq 1/2$, we can invoke \cref{lem:hyp} and obtain
\[
\Pr[X_j\geq \alpha(S_j)] \geq \Pr[X_j=x] \geq \frac{1}{(m+1)^2} 2^{-\log n} \geq \frac{1}{n(n+1)^2}.
\]

We conclude that
\begin{align*}
\Pr[X_i< \alpha(S_i) \land X_j\geq \alpha(S_j)] &\geq \Pr[X_i< \alpha(S_i)] \cdot \Pr[X_j\geq \alpha(S_j)]\\
& \geq \frac{1}{16(n+1)^2}\cdot \frac{1}{n(n+1)^2}\\
&\geq \frac{4}{n^6},
\end{align*}
as desired.
\end{proof}

We now show that, since the $\alpha$-value grows at a sufficiently fast rate, the number of sets peeled off throughout the process is bounded.

\begin{lemma}\label{lem:quadraticGrowth}
Let $\mathcal{M}$ be a matroid, and let $S_1, \dots, S_k$ be a sequence of sets that are peeled off in accordance with \cref{alg:iterativePeel}. Then, with probability $\geq 1 - k \cdot 2^{-n}$ over the decomposition algorithm, for any $\ell\in [\log n]$ an integer, if one defines $T = \{ i \in [k]: |S_i| \in [2^{\ell}, 2^{\ell+1} -1]\}$ and $\gamma = |T|$, and let $a_1, \dots ,a_{\gamma}$ denote the indices in $T$, then, 
\[
    \gamma = O\left(\sqrt{2^{\ell}} /\log n\right).
\]
\end{lemma}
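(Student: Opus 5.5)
The plan is to convert \cref{clm:peelingSetsAlpha} into a growth recurrence for the normalized quantity $r_t := \alpha(S_{a_t})/|S_{a_t}|$ along the indices $a_1 < \dots < a_\gamma$ of $T$, and then count how many steps this quantity can take before it exceeds the threshold at which \cref{alg:iterativePeel} stops.

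\textbf{Setup.} We condition on the event of \cref{clm:peelingSetsAlpha}, which holds with probability $\ge 1-k2^{-n}$. We drop the last index $a_\gamma$ if it equals $k$; this costs $+1$ and lets us apply the claim, which requires $j<k$. All sets in $T$ have sizes in $[2^\ell, 2^{\ell+1})$, so $|S_{a_s}|\le 2|S_{a_t}|$ for any $s<t$, and the claim applies to each consecutive pair. Dividing its conclusion by $|S_{a_{t+1}}| \le 2^{\ell+1}$ gives
\[
r_{t+1} \;\ge\; r_t + \sqrt{\frac{r_t \log n}{16\,|S_{a_{t+1}}|}} \;\ge\; r_t + c\,\delta\sqrt{r_t},
\qquad \delta := \sqrt{\log n / 2^{\ell}},
\]
for an absolute constant $c>0$. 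We also have two boundary constraints:
\begin{itemize}
\item a lower bound $r_1 \ge 50/2^{\ell+1}$, because $\alpha(S_i)\ge 50$ after RemoveSmallCircuits;
\item an upper bound $r_t \le 1/(100\log n)$ for every $t$ with $a_t<k$, by the stopping rule of \cref{alg:iterativePeel}.
\end{itemize}

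\textbf{Counting steps.} We work with $\sqrt{r_t}$ and split into two regimes.

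In the main regime $\sqrt{r_t}\ge\delta$, the increment $c\delta\sqrt{r_t}$ is at most $c\,r_t$. Hence
\[
\sqrt{r_{t+1}} \;\ge\; \sqrt{r_t}\,\sqrt{1+c\delta/\sqrt{r_t}} \;\ge\; \sqrt{r_t}+c'\delta,
\]
so $\sqrt{r_t}$ grows additively by $\Omega(\delta)$ per step. Since $\sqrt{r_t}\le 1/(10\sqrt{\log n})$, the number of steps spent in this regime is at most
\[
O\!\left(\frac{1}{\delta\sqrt{\log n}}\right) \;=\; O\!\left(\frac{\sqrt{2^\ell}}{\log n}\right),
\]
which is exactly the claimed bound.

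In the initial regime $\sqrt{r_t}<\delta$, we have $c\delta\sqrt{r_t}\ge c\,r_t$, so $r_{t+1}\ge(1+c)r_t$ and $r_t$ grows geometrically. Starting from $r_1=\Omega(2^{-\ell})$ and stopping once $r_t$ reaches $\delta^2=\log n/2^\ell$, this regime lasts only $O(\log\log n)$ steps.

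\textbf{Main obstacle.} The delicate part is making the final count exactly $O(\sqrt{2^\ell}/\log n)$ rather than $O(\sqrt{2^\ell}/\log n + \log\log n)$. I would first observe that $T$ is empty unless $25/2^\ell \le r_1 \le 1/(100\log n)$, that is, unless $2^\ell \ge 2500\log n$. In the intended use, where one sums $\gamma$ over $\ell$ and compares against $n^{1/3}$, the additive $O(\log\log n)$ term (plus the $+1$ from $a_\gamma=k$) is lower order and can be absorbed. If a sharper statement is genuinely needed, I would try to eliminate the initial regime altogether. The idea is to apply \cref{clm:peelingSetsAlpha} from $a_1$ directly to later indices $a_t$, rather than only to consecutive pairs. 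This yields $r_t \ge r_1 + \Omega(\delta\sqrt{r_1})$ for every $t>1$, and should show that after one step $r$ already lies in a range where the additive analysis above applies.
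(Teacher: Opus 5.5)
Your route is the paper's. With $X_t=2^\ell r_t$, applying \cref{clm:peelingSetsAlpha} to consecutive indices gives $X_{t+1}\ge X_t+\sqrt{X_t\log n/32}$, and the stopping rule caps $X_t\le 2^\ell/(100\log n)$.

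Your two extra precautions are both needed, and the paper skips them. First, discarding $a_\gamma=k$ is necessary: the claim requires $j<k$, and the threshold $\alpha(S_i)/|S_i|\le 1/(100\log n)$ is only guaranteed for $i<k$. The paper applies both to $a_\gamma$ without comment. Second, the paper's step ``as $X_1\ge 1$, the recurrence implies $X_\gamma=\Omega(\gamma^2\log n)$'' is exactly where your initial regime gets dropped. That step is false for small $\gamma$ (take $\gamma=1$, $X_1=O(1)$). What actually follows is $X_\gamma=\Omega\bigl((\gamma-O(\log\log n))^2\log n\bigr)$.

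The additive term cannot be removed, because the lemma as literally stated fails when $2^\ell=\Theta(\log n)$. Take the direct sum of a rank-$50$ uniform matroid $A$ on $6000\log n$ elements with a rank-$n/3$ uniform matroid on the remaining elements. The circuits of $A$ have size $51$, so RemoveSmallCircuits leaves it intact. With high probability every sampled first circuit lies in $A$, so \cref{alg:FindGreedilyOptimal} returns $S_1=A$. Moreover $\alpha(A)/|A|=51/(6000\log n)<1/(100\log n)$, so $1<k$. Hence $\gamma\ge 1$ for the $\ell$ with $2^\ell\le |A|<2^{\ell+1}$, while $\sqrt{2^\ell}/\log n=O(1/\sqrt{\log n})$.

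So your bound $\gamma\le O(\sqrt{2^\ell}/\log n)+O(\log\log n)+1$ is the correct statement. It is also all that \cref{lem:peel-round} needs: summing the extra terms over $\ell\le\log n$ costs only $O(\log n\log\log n)=o(n^{1/3}/\log^{2/3}n)$.

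You should drop the proposed sharpening, though. Applying the claim from $a_1$ to each later $a_t$ gives the same single increment every time, not a compounding one. Starting from $r_1\approx 25/2^\ell$, that increment only guarantees $r\gtrsim\sqrt{\log n}/2^\ell$. This is far below the threshold $\delta^2=\log n/2^\ell$ where your additive analysis begins.
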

\begin{proof}
As $|S_{a_i}|\leq 2|S_{a_{i+1}}|$, by \cref{clm:peelingSetsAlpha}, we have
\[
\alpha(S_{a_{i+1}}) \geq \frac{\alpha(S_{a_i})|S_{a_{i+1}}|}{|S_{a_i}|} + \sqrt{\frac{\alpha(S_{a_i})|S_{a_{i+1}}|}{|S_{a_i}|} \cdot \frac{\log n}{16}}.
\]
Now, we multiply both sides with $2^{\ell}/|S_{a_{i+1}}|$:

\begin{align*}
\frac{\alpha(S_{a_{i+1}})}{|S_{a_{i+1}}|} \cdot 2^{\ell} & \geq \frac{\alpha(S_{a_i})}{|S_{a_i}|} \cdot 2^{\ell} + \sqrt{\frac{\alpha(S_{a_i})}{|S_{a_i}|} \cdot \frac{2^{\ell}}{|S_{a_{i+1}}|} \cdot 2^{\ell} \cdot \frac{\log n}{16}} \\
& \geq \frac{\alpha(S_{a_i})}{|S_{a_i}|} \cdot 2^{\ell} + \sqrt{\frac{\log n}{32}} \cdot  \sqrt{\frac{\alpha(S_{a_i})}{|S_{a_i}|} \cdot 2^{\ell}}
\end{align*}
If we set $X_i = \frac{\alpha(S_{a_{i}})}{|S_{a_{i}}|}\cdot 2^{\ell}$. We obtain the recurrence that 
\[
X_{i+1} \geq X_i + \sqrt{\frac{X_i\log n}{32}}.
\]
As $X_1\geq 1$, the recurrence implies that $X_\gamma=\Omega(\gamma^2\log n)$.  Since $\alpha(S_{a_\gamma}) \leq \frac{|S_{a_\gamma}|}{100 \log(n)}\leq \frac{2^{\ell+1}}{100 \log(n)}$, we have
\[
\gamma = O\left(\sqrt{2^{\ell}}/\log n\right).
\]

The probability bound holds by assuming that \cref{clm:boundedErrorEstimation} holds for each peeled set in the decomposition.
\end{proof}

\begin{lemma}
\label{lem:peel-round}
Let $\mathcal{M}$ be a matroid, and let $S_1, \dots ,S_k$ be a sequence of sets that are peeled off in accordance with \cref{alg:iterativePeel}. Then, with probability $\geq 1 - k \cdot 2^{-n}$, we have $k=O(n^{1/3}/\log ^{2/3}n)$.
\end{lemma}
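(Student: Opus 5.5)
We bound $k$ by splitting the peeled sets according to their sizes. We condition on the event of \cref{lem:quadraticGrowth}, which holds with probability $\geq 1-k\cdot 2^{-n}$. Fix a threshold $\tau$, to be chosen later. Every peeled set has size $\geq 1$, and every set except possibly the last has size $\leq n/4$. So each $S_i$ falls into some dyadic class $[2^\ell,2^{\ell+1}-1]$ with $\ell\in[0,\log n]$.

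\textbf{Small sets.} Consider the classes with $2^\ell\leq \tau$. By \cref{lem:quadraticGrowth}, class $\ell$ contains $O(\sqrt{2^\ell}/\log n)$ sets. Summing over $\ell$ with $2^\ell\leq\tau$ gives a geometric series dominated by its largest term, so these classes contribute $O(\sqrt{\tau}/\log n)$ sets in total. (Classes too small for the lemma's bound to be meaningful contribute at most $O(1)$ each. If needed, I would instead note that $\gamma\geq 1$ forces $\sqrt{2^\ell}\gtrsim\log n$, which only affects $O(\log\log n)$ classes; these are absorbed since we may assume $\tau\geq\log^2 n$.)

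\textbf{Large sets.} The peeled sets are pairwise disjoint subsets of $E$. Moreover, the loop runs only while $|\M|\geq n/2$, so the sets removed before the last one have total size at most $n/2$. Hence the number of sets of size $>\tau$ is at most $n/\tau+1$.

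\textbf{Choosing $\tau$.} Balancing $\sqrt{\tau}/\log n$ against $n/\tau$ gives $\tau^{3/2}=n\log n$, i.e.\ $\tau=n^{2/3}\log^{2/3}n$. Both terms then equal $n^{1/3}/\log^{2/3}n$, which yields $k=O(n^{1/3}/\log^{2/3}n)+1$.

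\textbf{Main obstacle.} The delicate step is the small-set sum. We need the $O(\cdot)$ constant in \cref{lem:quadraticGrowth} to hold uniformly over $\ell$, and we must handle classes where $\sqrt{2^\ell}<\log n$, where the bound $O(\sqrt{2^\ell}/\log n)$ would be below $1$. The resolution is the recurrence from that lemma. It gives $X_\gamma=\Omega(\gamma^2\log n)$ while $X_\gamma\leq 2^{\ell+1}/(100\log n)$. For $\gamma\geq 2$ this forces $2^\ell\gtrsim \log^2 n$, so classes with $2^\ell\ll\log^2 n$ contain at most one set each. That is $O(\log\log n)$ sets in total, which is negligible.

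Finally, the last set $S_k$, which triggered termination, contributes one more set. The overall failure probability is exactly that of \cref{lem:quadraticGrowth}.
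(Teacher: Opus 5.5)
Your proposal is correct and follows the paper's proof essentially verbatim. Like the paper, you split the peeled sets at the threshold $n^{2/3}\log^{2/3}n$. For the small sets you sum the dyadic bounds of \cref{lem:quadraticGrowth} as a geometric series, and for the large sets you use disjointness to bound their number by $n^{1/3}/\log^{2/3}n$. You also handle two details the paper leaves implicit: dyadic classes with $\sqrt{2^\ell}<\log n$, which contain at most one set each and so contribute only $O(\log\log n)$, and the terminating set $S_k$, to which \cref{clm:peelingSetsAlpha} does not apply.
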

\begin{proof}
For every $\ell\in [\log n]$, we let $T_\ell= \{ i \in [k]: |S_i| \in [2^{\ell}, 2^{\ell+1} -1]\}$. It follows from \cref{lem:quadraticGrowth} (and this is the source of the probability bound as well) that $|T_{\ell}| = O\left(\sqrt{2^{\ell}}/\log n\right)$.  Let $\ell^*$ be the largest integer such that $2^{\ell^*}\leq n^{2/3} \log ^{2/3} n$. We have
\begin{align*}
k& = \sum_{i\in [k]} \textbf{1}[|S_i|\leq n^{2/3} \log ^{2/3} n] + \sum_{i\in [k]} \textbf{1}[|S_i|> n^{2/3} \log ^{2/3} n]\\
& \leq \sum_{i\in [k]} \textbf{1}[|S_i|\leq n^{2/3} \log ^{2/3} n] + \frac{n^{1/3}}{\log ^{2/3}n}\\
& \leq \sum_{\ell\leq \ell^*} |T_\ell|+ \frac{n^{1/3}}{\log ^{2/3}n}\\
& = \sum_{\ell\leq \ell^*} O\left(\sqrt{2^{\ell}}/\log n\right) + \frac{n^{1/3}}{\log ^{2/3}n}\\
& = O\left(\frac{n^{1/3}}{\log ^{2/3}n}\right).
\end{align*}
\end{proof}

Combining \cref{clm:peelTime}, \cref{clm:highProbCircuit}, \cref{clm:largeAlpha} and \cref{lem:peel-round}, the following lemma summarizes the properties satisfied by the decomposition algorithm.

\begin{lemma} \label{lem:decomposition}
    Let $\mathcal{M}$ be a matroid, and let $S_1,\dots,S_k$ be the sets peeled by the iterative decomposition algorithm described in \cref{alg:iterativePeel}. Then the following statements hold with probability $\geq 1 - 2^{-n/2}$:
    \begin{enumerate}
        \item $k = O(n^{1/3} / \log^{2/3}(n))$.
        \item The algorithm runs in $O(k)$ rounds and makes polynomially many queries.
        \item For each $S_i$, every element $x \in S_i$ satisfies $p_{x, \M|_{S_i}} \geq n^{-9}$.
        \item In the $i$th round of the decomposition, when the algorithm recovers $S_i$, for \[
        \ell = \frac{\alpha(S_i)}{10|S_i|} \cdot |\M - S_1 - \dots -S_{i-1}|,
        \]
        a random sample of $\ell$ elements from $\M - S_1 - \dots- S_{i-1}$ is independent with probability $\geq 1/4$. \label{item:findBigIS}
    \end{enumerate}
\end{lemma}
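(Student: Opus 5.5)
The plan is to assemble \cref{lem:decomposition} from the four cited results and then take a union bound over the high-probability events they rely on. The only randomness is in the $n^{15}$ sampled permutations in each call to \cref{alg:FindGreedilyOptimal}. So we define a good event $\mathcal{E}$: for every iteration $i$ of \cref{alg:iterativePeel} and every subset $S$ of the current ground set, the estimate $\widehat q_S$ is within $n^{-10}$ of $q_S$ (as in \cref{clm:boundedErrorEstimation}), and the sampling conclusion of \cref{clm:largeAlpha} holds. Every claim invoked below is deterministic on $\mathcal{E}$. Each iteration fails with probability $O(2^{-n})$, and the number of iterations is trivially at most $n$, since each peeled set is nonempty. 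Hence $\Pr[\bar{\mathcal{E}}] \le O(n\cdot 2^{-n}) \le 2^{-n/2}$ for large $n$. Bounding the iteration count by $n$ rather than by the (random) $k$ matters here: it keeps the union bound from being circular.

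On $\mathcal{E}$ the four items follow directly.

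\emph{Item 1.} Apply \cref{lem:peel-round}. Its proof uses \cref{clm:peelingSetsAlpha} and \cref{lem:quadraticGrowth} only for indices $i<j<k$, where the stopping conditions ($\alpha(S_i)/|S_i|\le 1/(100\log n)$, $|S_i|\le n/4$, and current size $\ge n/2$) hold. The final set $S_k$ contributes at most one extra term, and this does not change $k=O(n^{1/3}/\log^{2/3} n)$.

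\emph{Item 2.} The call to RemoveSmallCircuits takes one round. Each iteration takes one round by \cref{clm:peelTime}. The test of $\alpha(S_k)/|S_k|$ can be run in the same round or one extra round, because $\alpha(S_k)$ is estimated by querying random subsets of $S_k$ of every size $k'\le |S_k|$ in parallel. The total is therefore $O(k)$ rounds with $\mathrm{poly}(n)$ queries.

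\emph{Item 3.} This is \cref{clm:highProbCircuit}, applied to the matroid $\M-S_1-\dots-S_{i-1}$ in which $S_i$ is greedily-optimal. Restricting that matroid to $S_i$ gives $\M|_{S_i}$.

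\emph{Item 4.} This is \cref{clm:largeAlpha}, applied to the same current matroid, whose ground set has $|\M-S_1-\dots-S_{i-1}|$ elements.

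I expect the main obstacle to be bookkeeping rather than mathematics. Each earlier claim is stated for "a matroid on $n$ elements", but here it is applied to a current matroid with $n_i\ge n/2$ elements. The thresholds $n^{-7}$ and $n^{-8}$, and the failure probability $2^{-n_i}$, therefore have to be read with $n$ as the original ground-set size. I will check that \cref{clm:boundedErrorEstimation} still yields error $n^{-10}$ with failure probability $2^{-\Omega(n)}$ when $n^{15}$ permutations are drawn on $n_i\le n$ elements; the Chernoff bound gives this, since the estimation accuracy depends only on the number of samples. I will also check that the bound $p\ge n^{-9}$ in \cref{clm:highProbCircuit} is stated in terms of the original $n$. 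Both checks go through, and with them the union bound above gives the stated $1-2^{-n/2}$.
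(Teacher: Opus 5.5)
Your proposal is correct and follows the paper's own route. The paper obtains the lemma by combining \cref{clm:peelTime}, \cref{clm:highProbCircuit}, \cref{clm:largeAlpha} and \cref{lem:peel-round}, with a union bound over the estimation failures in each iteration. Your additional bookkeeping is sound and only makes explicit what the paper leaves implicit: bounding the number of iterations by $n$ rather than by the random $k$, treating the final set $S_k$ separately, and reading the thresholds in terms of the original $n$.
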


\section{A Near-Optimal Procedure for Deleting Redundant Elements}\label{sec:deletion}

\subsection{The Deletion Algorithm}

In this section, we focus on a single greedily-optimal set $S_i$ that is returned by the decomposition algorithm, with the properties that
\begin{enumerate}
    \item $\frac{\alpha(S_i)}{|S_i|} \leq \frac{1}{100 \log(n)}$;
    \item every element $x\in S_i$ satisfies $p_{x,\M|_{S_i}}\geq n^{-9}$.
\end{enumerate}

\cite{KPS25} shows that these properties imply that the set $S_i$ is rank deficient; that is, it contains many redundant elements. Thus, our goal is to design an efficient algorithm that recovers a set $R_i\subseteq S_i$ of redundant elements; namely, for every $x\in R_i$, we can certify that $x\in \mathrm{span}(S_i\setminus R_i)$. While the algorithms of \cite{KPS25,KPS26} were unable to recover a set of size $\tilde \Omega(|S_i|)$, we design an efficient algorithm that successfully recovers $\Omega(|S_i|/\log (n))$ elements.

Now we describe our algorithm for recovering redundant elements (\cref{alg:RecoverRedundantElements}). As discussed above, the input is a set $S$ such that $\frac{\alpha(S)}{|S|} \leq \frac{1}{100 \log(n)}$ and for every element $x \in S$, $p_{x, \M|_{S}} \geq n^{-9}$. The algorithm begins by drawing a random subset $S'\subseteq S$ of size $m=\lceil(1-1/\log n)|S|\rceil$. Next, it performs $n^{15}$ independent experiments in parallel. In each experiment, it draws a random permutation $\pi$ of $S'$, and computes the maximal independent prefix $P_\pi$ of $\pi$. We then mark every element in $S-S'$ that is spanned by $P_{\pi}$ as redundant. Finally, the algorithm returns all elements of $S-S'$ that were marked redundant.

\begin{algorithm}
    \caption{RecoverRedundantElements$(S)$}
    \label{alg:RecoverRedundantElements}
    $R\gets \emptyset$.\\
    Draw a random subset $S'\subseteq S$ of size $m=\lceil(1-1/\log n) |S|\rceil$.\\
    \For{$\ell \in [n^{15}]$ in parallel} {
        Draw a random permutation $\pi:[m]\to S'$.\\
        \For{$i\in [m]$ in parallel} {
            Query $\Ind(\{\pi(1),\dots,\pi(i)\})$.
        }
        Let $t$ be the smallest index such that for $P_\pi=\{\pi(1),\dots,\pi(t-1)\}$, $\Ind(P_\pi)=1$ and $\Ind(P_\pi+\pi(t))=0$.\\
        \For{$x\in S- S'$ in parallel} {
            Query $\Ind(P_\pi+\{x\})$.\\
            \If{$\Ind(P_\pi+\{x\})=0$}{
                $R\gets R\cup \{x\}$.
            }
        }
    }
    \Return $R$
\end{algorithm}

The following observations follow immediately from the description of the algorithm.

\begin{observation}
By the end of \cref{alg:RecoverRedundantElements}, we have $R\subseteq S\setminus S'$. Moreover, every element $x\in R$ satisfies $x\in \mathrm{span}(S')$.
\end{observation}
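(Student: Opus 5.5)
The plan is to verify both claims directly from the description of \cref{alg:RecoverRedundantElements}. No probabilistic input is needed, since the statement only concerns what the algorithm adds to $R$.

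For the first claim: the only line that modifies $R$ is inside the loop ``\textbf{for} $x\in S-S'$''. So every element ever inserted lies in $S\setminus S'$. Since $R$ starts empty, $R\subseteq S\setminus S'$ at termination.

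For the second claim, fix $x\in R$. Then there is some experiment with permutation $\pi$ of $S'$ and prefix $P_\pi=\{\pi(1),\dots,\pi(t-1)\}\subseteq S'$ for which $\Ind(P_\pi)=1$ and $\Ind(P_\pi+x)=0$.

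\begin{itemize}
\item We need $t$ to be well defined. If no dependence ever forms, interpret $P_\pi$ as all of $S'$, which is then independent. In this case the same argument below still applies, because $P_\pi$ is an independent subset of $S'$. Either way, $P_\pi$ is independent.
\item Since $P_\pi$ is independent, $\rank(P_\pi)=|P_\pi|$.
\item Since $P_\pi+x$ is dependent and $x\notin P_\pi$ (as $x\notin S'$), we get $\rank(P_\pi+x)\le |P_\pi|$. Combined with monotonicity of rank, $\rank(P_\pi+x)=\rank(P_\pi)$.
\item By definition of span, this means $x\in\mathrm{span}(P_\pi)$.
\item Span is monotone, because rank is submodular. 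Hence $P_\pi\subseteq S'$ gives $x\in\mathrm{span}(S')$.
\end{itemize}

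There is no real obstacle here. The only point needing care is the edge case where $S'$ is entirely independent, so that $t$ does not exist; in that case the conclusion still holds, as noted above.
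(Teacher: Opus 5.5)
Your proof is correct and matches the paper, which simply says the observation follows immediately from the algorithm's description. Your direct check is that description spelled out: $R$ only receives elements of $S-S'$, and each insertion is certified by an independent prefix $P_\pi\subseteq S'$ with $P_\pi+x$ dependent, so $x\in\mathrm{span}(P_\pi)\subseteq\mathrm{span}(S')$. You also patch a detail the paper leaves implicit: when $S'$ itself is independent, $t$ is undefined.
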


\begin{observation}
\cref{alg:RecoverRedundantElements} can be implemented in 1 round using polynomially many independence queries.
\end{observation}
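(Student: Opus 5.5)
The only issue is adaptivity. As written, \cref{alg:RecoverRedundantElements} queries $\Ind(P_\pi+\{x\})$, and $P_\pi$ is determined only by the answers to the prefix queries $\Ind(\{\pi(1),\dots,\pi(i)\})$. So a literal reading needs two rounds. The plan is to remove this dependence by guessing the prefix length in advance: issue every query that could possibly be needed, all in one round, and select the relevant answers afterwards.

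Concretely, the random choices of $S'$ and of the $n^{15}$ permutations $\pi$ use no oracle access, so they are made before any queries. Then, in a single round and for every permutation $\pi$, we query:
\begin{itemize}
\item $\Ind(\{\pi(1),\dots,\pi(i)\})$ for every $i\in[m]$;
\item $\Ind(\{\pi(1),\dots,\pi(i)\}\cup\{x\})$ for every $i\in\{0,1,\dots,m\}$ and every $x\in S-S'$.
\end{itemize}
None of these queries depends on any oracle answer, so they can all be issued together.

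After the round, we compute $t$ from the prefix answers without further queries. By downward closure, prefix independence is monotone in $i$: once a prefix is dependent, every longer prefix is dependent. Hence $t$ is simply the first index at which the answer switches from $1$ to $0$. If no prefix is dependent, i.e.\ $S'$ is independent, we take $P_\pi=S'$. In either case $P_\pi=\{\pi(1),\dots,\pi(t-1)\}$ for a known $t$, so the answer to $\Ind(P_\pi+\{x\})$ has already been obtained: it is the query with $i=t-1$ (or $i=m$ in the independent case). The set $R$ is then assembled exactly as in the algorithm, so the output distribution is unchanged.

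For the query count, each permutation uses at most $m+(m+1)\,|S-S'|=O(n^2)$ queries. Over $n^{15}$ permutations this gives $O(n^{17})$ queries, which is polynomial.

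I do not expect a genuine obstacle here. The one point that must be stated explicitly is that the $x$-queries are made for all prefix lengths simultaneously, rather than after $t$ is known.
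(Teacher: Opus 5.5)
Your proposal is correct and is essentially the paper's proof: in a single round it queries $\Ind(\{\pi(1),\dots,\pi(i)\})$ and $\Ind(\{\pi(1),\dots,\pi(i),x\})$ for every prefix length $i$ and every $x\in S-S'$, then reads off the answers at the prefix determined by $t$. Your inclusion of $i=0$ and your convention for the case where $S'$ is itself independent are small extra details that the paper leaves implicit.
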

\begin{proof}
For a permutation $\pi$, it suffices to query $\Ind(\{\pi(1),\dots,\pi(i)\})$ for every $i\in[m]$, and $\Ind(\{\pi(1),\dots,\pi(i),x\})$ for every $i\in[m]$ and $x\in S- S'$.
\end{proof}

The remainder of this section is devoted to proving the following theorem.

\begin{theorem} \label{thm:redundantElements}
Let $S$ be a set such that $\frac{\alpha(S)}{|S|} \leq \frac{1}{100 \log(n)}$ and for every element $x \in S$, $p_{x, \M|_{S}} \geq n^{-9}$. Now, consider running $n^2$ independent trials of \cref{alg:RecoverRedundantElements}. Then, with probability $1 - 2^{-n}$, at least one trial returns a set $R$ of redundant elements of size $|R|=\Omega \left ( \frac{|S|}{\log (n)} \right)$.
\end{theorem}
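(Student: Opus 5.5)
Fix an element $x\in S$ and write $s=|S|$. The plan is to show that with constant probability over the draw of $S'$, conditioned on $x\notin S'$, the set $S'$ contains a dense family of "good witnesses" for $x$. Then the $n^{15}$ permutations within a trial catch $x$ with high probability. Averaging over $x$ and taking $n^2$ trials finishes the argument.

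\emph{Step 1 (witness family in $S$).} From $\alpha(S)\le s/(100\log n)$ we get the following. If we split $s/2$ random elements into $50\log n$ disjoint blocks, each block of size at least $\alpha(S)$, then each block is dependent with probability at least $1/2$. By a coupling with independent samples, all blocks are independent with probability at most $2^{-50\log n}$. Hence $\Pr_\pi[C_\pi \text{ forms after position } s/2]\le n^{-50}$. Together with $p_{x,\M|_S}\ge n^{-9}$ and pigeonhole over positions, there is some $\ell\le s/2$ with $\Pr_\pi[x\in C_\pi,\ C_\pi \text{ forms at position } \ell]\ge n^{-11}$.

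Conditioning on $x$ being in the first $\ell$ positions costs at most a factor $\ell/s$. So the family $\mathcal{F}_x=\{T\in\binom{S-x}{\ell-1}: \Ind(T)=1,\ x\in\mathrm{span}(T)\}$ has density $p\ge n^{-11}$ in $\binom{S-x}{\ell-1}$. The event counted above forces $T=\{\pi(1),\dots,\pi(\ell)\}-x$ to be such a set: the prefix before position $\ell$ is independent, and $x$ lies in a circuit contained in the first $\ell$ elements.

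\emph{Step 2 (density preservation).} Conditioned on $x\notin S'$, the set $S'$ is a uniform $m$-subset of $S-x$, where $m=\lceil(1-1/\log n)s\rceil$. Apply \cref{thm:preserveDensitySubsample} on the universe $S-x$ of size $s-1$, with $i=\ell-1\le (s-1)/2$. Then $\frac{(s-1)-m}{(s-1)-i}\le \frac{s/\log n}{s/2}=2/\log n$. The probability bound becomes $\frac14 2^{-(4/\log n)\cdot 11\log n}=\Omega(1)$, an absolute constant $c$. On this event, a uniform $(\ell-1)$-subset of $S'$ lies in $\mathcal{F}_x$ with probability at least $p/2\ge n^{-11}/2$.

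\emph{Step 3 (the experiment detects $x$).} Here I must relate a random witness to the prefix $P_\pi$ used in the algorithm, and I expect this to be the main obstacle. If the maximal independent prefix $P_\pi$ has length at least $\ell-1$, then its first $\ell-1$ elements form a uniform $(\ell-1)$-subset $T$ of $S'$. If $T\in\mathcal{F}_x$, then $x\in\mathrm{span}(T)\subseteq\mathrm{span}(P_\pi)$, so $x$ is marked. The remaining issue is that $P_\pi$ may be shorter than $\ell-1$. I would handle this by switching the algorithm's effective certificate to "some prefix spans $x$": marking via $\Ind(P_\pi+x)=0$ is equivalent to $x\in\mathrm{span}(\{\pi(1),\dots,\pi(t)\})$, where $t$ is the first dependence position, since $\pi(t)\in\mathrm{span}(P_\pi)$. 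If the first circuit closes before position $\ell-1$, I would argue monotonically: take $T$ to be the first $\ell-1$ elements, so that $\mathrm{span}(T)\supseteq\mathrm{span}(P_\pi)$.

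This inclusion goes the wrong way, so instead I plan to restate Step 1 with the stronger event "$C_\pi$ forms at position $\ell$ and $x\in C_\pi$". Under that event the prefix of $\pi|_{S-x}$ is independent up to $\ell-1$. I would then define $\mathcal{F}_x$ to require that the random ordering of $T$ has no dependence, which holds since $T$ is independent. For $\pi$ on $S'$ whose first $\ell-1$ elements form some $T\in\mathcal{F}_x$, the set $T$ is independent, hence $T\subseteq P_\pi$ and $x$ is marked. Therefore each permutation marks $x$ with probability at least $n^{-11}/2$, and $n^{15}$ permutations fail with probability at most $e^{-n^{3}}$.

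\emph{Step 4 (counting).} For each $x\in S$, $\Pr[x\notin S',\ x\in R]\ge (1/\log n)\cdot c\cdot(1-o(1))$. So $\E|R|\ge c\,s/(2\log n)$. Since $|R|\le |S-S'|\le s/\log n+1$, a reverse Markov argument gives $\Pr[|R|\ge c\,s/(4\log n)]\ge c/4$. Then $n^2$ independent trials all fail with probability at most $(1-c/4)^{n^2}\le 2^{-n}$.
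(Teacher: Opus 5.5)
Your proposal is correct and follows essentially the same route as the paper. It uses a witness family $\mathcal{F}_x$ of independent sets spanning $x$, obtained by pigeonholing the position at which $C_\pi$ forms. It then applies the KL-contraction density-preservation theorem on the universe $S-x$, notes that an independent witness prefix lies inside $P_\pi$, and finishes with an averaging/reverse-Markov argument amplified over $n^2$ trials.

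The differences from the paper are cosmetic. You cut off at $s/2$ instead of $|S|/4$, you apply reverse Markov directly to $|R|$ rather than to the number of good elements, and you give an explicit coupling for the $\alpha$-tail bound. The detour in Step 3 is unnecessary: $\mathcal{F}_x$ already requires $\Ind(T)=1$, and that alone forces $T\subseteq P_\pi$.
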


\subsection{Analysis of the Deletion Algorithm}

Towards analyzing our deletion algorithm, we have the following claim:

\begin{claim}\label{clm:goodSetHighDensity}
Let $S$ be a set such that $\frac{\alpha(S)}{|S|} \leq \frac{1}{100 \log(n)}$ and for every element $x \in S$, $p_{x, \M|_{S}} \geq n^{-9}$. Then, for any element $x \in S$, there is a choice of a value $g_x \leq |S|/4$ such that
\[
\frac{\left | \left \{T \in \binom{S - \{x \}}{g_x}: \mathrm{Ind}(T) = 1, x \in \mathrm{span}(T) \right \} \right |}{\binom{|S| -1}{g_x}} \geq n^{-10}.
\]
\end{claim}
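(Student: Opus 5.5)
We follow the outline from the technical overview. Write $s = |S|$ and work throughout in $\M|_S$. Fix $x \in S$ and let $\pi$ be a uniformly random permutation of $S$. Let $\tau(\pi)$ be the index at which $C_\pi$ forms, i.e.\ the smallest $t$ with $\{\pi(1),\dots,\pi(t)\}$ dependent.

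\textbf{Step 1: the first circuit forms early.} We show $\Pr_\pi[\tau(\pi) > s/4] \le n^{-20}$. Set $a = \alpha(S) \le s/(100\log n)$. By the definition of $\alpha$, a uniformly random $a$-subset of $S$ is dependent with probability at least $1/2$. Let $k = \lfloor 25 \log n\rfloor$ and split the first $ka \le s/4$ positions of $\pi$ into $k$ consecutive disjoint blocks of size $a$. Each block is marginally a uniform $a$-subset of $S$. The blocks are not independent, but we can handle this by conditioning on the first $j$ blocks: the next block is then a uniform $a$-subset of the remaining $s - ja$ elements. I would avoid bounding the conditional dependence probability directly. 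Instead I would use monotonicity: a random $a$-subset of $S$ can be coupled to contain a random $a'$-subset of the remainder. Simplest of all is to note that the prefix of length $s/4$ is a uniform $\lfloor s/4\rfloor$-subset $U$, and that $U$ contains a uniform partition into $k$ blocks of size $a$ together with leftover elements.

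Alternatively, and more cleanly, we can use the equivalence ``$\{\pi(1),\dots,\pi(t)\}$ independent'' $\Rightarrow$ ``every sub-block independent''. Then $\Pr[\text{all } k \text{ blocks independent}]$ can be compared with a product of $k$ independent draws, using negative dependence for downward-closed events under sampling without replacement. I expect this step to be the main technical obstacle. If the negative-association argument is awkward, the fallback is the sequential-conditioning argument. When $ja \le s/4$, each conditional block is a uniform $a$-subset of a set $S_j$ with $|S_j| \ge 3s/4$, but $S_j$ is random. The key observation is that, averaged over the history, it is still a uniform $a$-subset of $S$ conditioned on being disjoint from an independent set. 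Since we have ample room (the bound only needs to reach $n^{-20}$), even a loose constant per block suffices, for instance each block independent with probability at most $3/4$, giving $(3/4)^{k} \le n^{-20}$ after adjusting the constant $25$.

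\textbf{Step 2: pigeonhole on the formation index.} Since $p_{x,\M|_S} \ge n^{-9}$, we get
\[
\Pr_\pi[x \in C_\pi \text{ and } \tau(\pi) \le s/4] \ge n^{-9} - n^{-20}.
\]
Pigeonholing over the at most $s/4 \le n$ possible values $t = \tau(\pi)$ yields some $t \le s/4$ with $\Pr[x \in C_\pi,\ \tau(\pi)=t] \ge n^{-10}$ (for $n$ large enough). Set $g_x = t - 1 \le s/4$.

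\textbf{Step 3: translate to sets.} On the event $\{x\in C_\pi,\ \tau(\pi)=t\}$, the element $x$ lies among the first $t$ positions. Let $T = \{\pi(1),\dots,\pi(t)\} - \{x\}$. Then $T \subseteq \{\pi(1),\dots,\pi(t)\}$ omits one element of the circuit $C_\pi$. Since $\{\pi(1),\dots,\pi(t)\}$ has rank $t-1$ (it is a single-circuit dependent set) and $T$ misses an element of its unique circuit, $T$ is independent. Moreover $C_\pi - x \subseteq T$, so $x \in \mathrm{span}(T)$. Thus $T$ is a good set of size $g_x$. Conditioned on $x$ lying in the first $t$ positions, $T$ is uniform on $\binom{S-\{x\}}{g_x}$. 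Hence
\[
\Pr_{T}[T \text{ good}] \ge \Pr[x\in C_\pi,\ \tau=t \mid x \in \pi([t])] \ge \Pr[x \in C_\pi,\ \tau = t] \ge n^{-10},
\]
which is exactly the claimed density bound.
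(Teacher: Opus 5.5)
Your Steps 2 and 3 match the paper's proof and are correct: you pigeonhole over the formation index, then pass to $T=\pi([t])-\{x\}$, which is independent, spans $x$, and is uniform on $\binom{S-\{x\}}{g_x}$ given $x\in\pi([t])$. The gap is Step 1, which you leave unproved. Splitting the prefix into disjoint blocks means you must bound $\Pr[B_1,\dots,B_k\text{ all independent}]$, and neither tool you propose does this.

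\emph{Negative dependence fails.} Negative dependence of downward-closed events under sampling without replacement is false even for matroids. Take the partition matroid with two parts of size $N$, each with budget $1$. A uniform pair is independent with probability $N/(2N-1)$. Conditioned on a disjoint uniform pair being independent, it is independent with probability $(N-1)/(2N-3)>N/(2N-1)$.

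\emph{The fallback is unsupported.} The sequential-conditioning argument simply asserts a per-block bound of $3/4$ conditioned on an independent history. That conditioning event can have probability around $2^{-j}$. So the fact that each block is marginally a uniform $a$-subset of $S$ (``averaging over the history'') gives no control over the conditional probability.

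\emph{The missing idea.} Use independent, possibly overlapping, samples instead of disjoint blocks; this is what the paper's one-line justification (``independently sample $\alpha(S)$ elements $25\log n$ times'') amounts to.
\begin{itemize}
\item Let $f(r)=\Pr_{R\sim\binom{S}{r}}[\Ind(R)=1]$. Then $f$ is non-increasing, because deleting a uniformly random element from a uniform $(r+1)$-set gives a uniform $r$-set.
\item Draw $A_1,\dots,A_k$ i.i.d.\ uniform from $\binom{S}{a}$ and set $W=\bigcup_j A_j$. By downward closure, $\Pr[\Ind(W)=1]\le\prod_j\Pr[\Ind(A_j)=1]=f(a)^k\le 2^{-k}$.
\item The law of $W$ is invariant under permutations of $S$. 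Hence $W$ conditioned on $|W|=r$ is uniform on $\binom{S}{r}$. Since $|W|\le ka$ and $f$ is non-increasing, $\Pr[\Ind(W)=1]=\sum_r\Pr[|W|=r]\,f(r)\ge f(ka)$.
\item Take $k=\lfloor 25\log n\rfloor$, so that $ka\le\lfloor s/4\rfloor$. This gives $\Pr[\tau>\lfloor s/4\rfloor]=f(\lfloor s/4\rfloor)\le f(ka)\le 2^{-k}\le 2n^{-25}$.
\end{itemize}
Your Steps 2 and 3 then go through unchanged. Alternatively, submultiplicativity $f(a+b)\le f(a)f(b)$ follows from log-concavity of $f$ (Mason's ultra-log-concavity conjecture, now a theorem), but that is far heavier than needed.
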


\begin{proof}
    First, by our hypothesis, we know that when we sample a random permutation $\pi$ over $S$, the probability that element $x$ participates in the first circuit $C_{\pi}$ is at least $n^{-9}$.
    
    At the same time, we can observe that because $\frac{\alpha(S)}{|S|} \leq \frac{1}{100 \log(n)}$, 
    \[
    \Pr_{\pi \sim S}[C_{\pi} \text{ forms before index }|S|/4] \geq 1 - (1/2)^{25 \log(n)} \geq 1 - \frac{1}{n^{25}}.
    \]
    This is because we define $\alpha(S)$ to be the median number of elements we need to sample to see the first circuit $C_{\pi}$ form; thus if we independently sample $\alpha(S)$ many elements $25 \log(n)$ times, the probability that a circuit \emph{hasn't} formed is $\leq (1/2)^{25 \log(n)}$.

    For any element $x \in S$, we can then see that
    \[
    \Pr_{\pi \sim S}[x \in C_{\pi} \wedge C_{\pi} \text{ forms before index }|S|/4] \geq \frac{1}{n^{9}} - \frac{1}{n^{25}} \geq \frac{1}{2n^{9}}.
    \]

    Now, by pigeonhole principle, we know that there must exist a choice of $0 \leq g_x < |S|/4$ such that 
    \[
    \Pr_{\pi\sim S}[x \in C_{\pi} \wedge C_{\pi} \text{ forms at index }g_x+1] \geq \frac{1/(2n^9)}{|S|/4} \geq n^{-10}.
    \]
    This implies that
    \[
    \frac{\left | \left \{T \in \binom{S}{g_x+1}: \mathrm{Ind}(T-\{x\}) = 1, x \in \mathrm{span}(T-\{x\}), x\in T \right \} \right |}{\binom{|S|}{g_x+1}} \geq n^{-10}.
    \]
    Consider $T'=T-\{x\}$, we have
    \[
    \frac{\left | \left \{T' \in \binom{S-\{x\}}{g_x}: \mathrm{Ind}(T') = 1, x \in \mathrm{span}(T') \right \} \right |}{\binom{|S|}{g_x+1}} \geq n^{-10}.
    \]
    Since ${|S|\choose g_x+1}\geq {|S|-1\choose g_x}$, we conclude that
    \[
    \frac{\left | \left \{T' \in \binom{S - \{x \}}{g_x}: \mathrm{Ind}(T') = 1, x \in \mathrm{span}(T') \right \} \right |}{\binom{|S| -1}{g_x}} \geq n^{-10}.
    \]
\end{proof}

Importantly, building on this claim, we now get the following lemma which shows that, even under mild subsampling, the density of ``good'' sets remains high:

\begin{lemma} \label{lem:goodSetAfterSubsample}
    Let $S$ be a set such that $\frac{\alpha(S)}{|S|} \leq \frac{1}{100 \log(n)}$ and for every element $x \in S$, $p_{x, \M|_{S}} \geq n^{-9}$. Now, fix an element $x \in S$, and sample a set $S' \subseteq S - \{x \}$ with $m=\lceil(1-1/\log n) |S|\rceil$ many elements. 
    
    Then, for any element $x \in S$, there is a choice of a value $g_x \leq |S|/4$ such that
\[
\Pr_{S'} \left [ \frac{\left | \left \{T \in \binom{S - \{x \}}{g_x}: \mathrm{Ind}(T) = 1, x \in \mathrm{span}(T) \right \} \cap \binom{S'}{g_x}\right |}{\binom{m}{g_x}} \geq \frac{1}{2n^{10}} \right ] = \Omega(1).
\]
\end{lemma}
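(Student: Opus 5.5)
This follows by combining \cref{clm:goodSetHighDensity} with \cref{thm:preserveDensitySubsample}, applied inside the universe $S - \{x\}$. Fix $x \in S$, and let $g_x \le |S|/4$ be the value supplied by \cref{clm:goodSetHighDensity}. Define the family
\[
\mathcal{F}_x = \left\{T \in \binom{S - \{x\}}{g_x} : \mathrm{Ind}(T) = 1,\ x \in \mathrm{span}(T)\right\}.
\]
The claim gives $p := |\mathcal{F}_x| / \binom{|S|-1}{g_x} \ge n^{-10}$. We then apply \cref{thm:preserveDensitySubsample} with universe $S - \{x\}$ of size $N = |S| - 1$, set size $i = g_x$, and sample size $m = \lceil (1 - 1/\log n)|S| \rceil$. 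We need $i \le m \le N$. The first holds because $g_x \le |S|/4$. The second holds when $|S|/\log n \ge 1$, which is implied by the hypothesis $\alpha(S)/|S| \le 1/(100\log n)$ together with $\alpha(S) \ge 1$. One should also check that the ceiling does not push $m$ above $|S| - 1$; if it does, replace $m$ by $\min(m, |S|-1)$.

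The theorem then yields
\[
\Pr_{S'}\left[\frac{|\mathcal{F}_x \cap \binom{S'}{g_x}|}{\binom{m}{g_x}} \ge \frac{p}{2}\right] \ge \frac14 \cdot 2^{-2\frac{N-m}{N-g_x}\log(1/p)}.
\]
Since $p \ge n^{-10}$, the event on the left implies the desired density bound $\ge \frac{1}{2n^{10}}$. It remains to bound the exponent by a constant.

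\textbf{Bounding the exponent.} We have $N - m \le |S| - 1 - (1 - 1/\log n)|S| \le |S|/\log n$. Also $N - g_x \ge |S| - 1 - |S|/4 \ge |S|/2$ for $|S| \ge 4$, so the ratio is at most $2/\log n$. Since $\log(1/p) \le 10\log n$, the exponent is at most $2 \cdot (2/\log n) \cdot 10\log n = 40$. Hence the probability is at least $\frac14 2^{-40} = \Omega(1)$.

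\textbf{Main obstacle.} The only care needed is the boundary bookkeeping: the integrality of $m$, and ensuring $|S|$ is large enough that $N - g_x \ge |S|/2$ and $m \le |S| - 1$. Both follow because the $\alpha$-condition forces $|S| \ge 100\log n$. One small point is that $\log(1/p)$ should use the exact density of $\mathcal{F}_x$ rather than its lower bound. This is harmless, because a larger $p$ only makes $\log(1/p)$ smaller, and the threshold $p/2$ still dominates $1/(2n^{10})$.
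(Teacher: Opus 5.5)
Your proposal is correct and is essentially the paper's proof: you take $g_x$ from \cref{clm:goodSetHighDensity}, apply \cref{thm:preserveDensitySubsample} on the universe $S-\{x\}$ with $i=g_x$, and show that the exponent $2\frac{|S|-1-m}{|S|-1-g_x}\log(1/p)$ is $O(1)$. Your explicit checks ($m\le |S|-1$ and $|S|\ge 4$, both via $|S|\ge 100\log n$, plus the concrete bound of $40$ on the exponent) make precise the bookkeeping that the paper summarizes as $O(1/\log n)\cdot O(\log n)=O(1)$.
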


\begin{proof}
    Let $S$ and $x \in S$ be given as such. For this element $x$, we then let $g_x \leq |S|/4$ be given as guaranteed by \cref{clm:goodSetHighDensity}. In particular, this guarantees that 
    \[
\frac{\left | \left \{T \in \binom{S - \{x \}}{g_x}: \mathrm{Ind}(T) = 1, x \in \mathrm{span}(T) \right \} \right |}{\binom{|S| -1}{g_x}} \geq n^{-10}.
\]

Now, in the language of \cref{thm:preserveDensitySubsample}, we let $\mathcal{F} = \left \{T \in \binom{S - \{x \}}{g_x}: \mathrm{Ind}(T) = 1, x \in \mathrm{span}(T) \right \}$, $m=\lceil(1-1/\log n) |S|\rceil$, the universe size is $|S| -1$, and let $i = g_x \leq |S|/4$. We likewise set \[
p = \frac{|\mathcal{F}|}{\binom{|S|-1}{i}} \geq n^{-10}.
\]
\cref{thm:preserveDensitySubsample} implies that when we subsample $S'$ from $S$ with exactly $m$ elements, that 
\[
\Pr_{S'} \left [ \frac{\left | \left \{T \in \binom{S - \{x \}}{g_x}: \mathrm{Ind}(T) = 1, x \in \mathrm{span}(T) \right \} \cap \binom{S'}{g_x}\right |}{\binom{m}{g_x}} \geq \frac{1}{2n^{10}} \right ] \geq \frac{1}{4} \cdot 2^{- 2\frac{
|S|-1-m}{|S|-1-i} \log(1/p)}.
\]
Since $m=\lceil(1-1/\log n) |S|\rceil$ and $i\leq |S|/4$, we have $\frac{|S|-1-m}{|S|-1-i} = O(1/\log n)$. Moreover, since $\log(1/p)=\log(n^{10})=O(\log n)$, the exponent above is $O(1)$. Therefore, the lower bound is at least $\Omega(1)$.
\end{proof}

Fix a subset $S'\subseteq S$ of size $m=\lceil(1-1/\log n) |S|\rceil$. We say an element $x\in S$ is \emph{good} with respect to $S'$ if $x\notin S'$ and
\[
\frac{\left | \left \{T \in \binom{S - \{x \}}{g_x}: \mathrm{Ind}(T) = 1, x \in \mathrm{span}(T) \right \} \cap \binom{S'}{g_x}\right |}{\binom{m}{g_x}} \geq \frac{1}{2n^{10}}.
\]

We show that, during the execution of \cref{alg:RecoverRedundantElements}, good elements are likely to be included in the set of redundant elements $R$.
\begin{claim} \label{clm:goodElementsSpanned}
Suppose $S'\subseteq S$ is the subset chosen in \cref{alg:RecoverRedundantElements}. Then every element $x$ that is good w.r.t. $S'$ is included in $R$ with probability at least $1-2^{-n^2}$.
\end{claim}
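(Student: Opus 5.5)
Fix $S'$ and a good element $x$. The plan is to show that a single random permutation $\pi$ of $S'$ in \cref{alg:RecoverRedundantElements} produces a prefix $P_\pi$ spanning $x$ with probability at least $\frac{1}{2n^{10}}$. Independence across the $n^{15}$ parallel experiments then gives failure probability at most $(1-\frac{1}{2n^{10}})^{n^{15}} \leq e^{-n^5/2}\leq 2^{-n^2}$.

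For the single-experiment bound, I would look at the first $g_x$ elements of $\pi$, namely $T=\{\pi(1),\dots,\pi(g_x)\}$. Since $\pi$ is a uniformly random permutation of $S'$, the set $T$ is uniform over $\binom{S'}{g_x}$. Goodness of $x$ says exactly that, with probability at least $\frac{1}{2n^{10}}$, $T$ lies in $\mathcal{F}$: it is independent and satisfies $x\in\mathrm{span}(T)$. (Note that $g_x\le |S|/4\le m$, so this prefix exists.)

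The key step is to show that this event forces $\Ind(P_\pi+\{x\})=0$. Because $T$ is independent, all prefixes up to index $g_x$ are independent, so the first dependent prefix index $t$ satisfies $t>g_x$. Hence $P_\pi=\{\pi(1),\dots,\pi(t-1)\}\supseteq T$ and $P_\pi$ is independent. If $t$ does not exist because $S'$ itself is independent, take $P_\pi=S'$ as the maximal independent prefix, matching the description "maximal independent prefix"; it still contains $T$. By monotonicity of span, $x\in\mathrm{span}(T)\subseteq\mathrm{span}(P_\pi)$. Since $x\notin S'$, we have $x\notin P_\pi$, so $P_\pi+\{x\}$ is dependent. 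The algorithm then queries this set and adds $x$ to $R$.

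This step is the only place needing care. We must ensure that the event is phrased in terms of the random prefix $T$ rather than the circuit $C_\pi$, and that the edge case where no dependent prefix exists is handled. After that, the argument is just the routine independence and amplification calculation.
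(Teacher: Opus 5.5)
Your proposal is correct and follows the paper's own proof: you lower-bound the single-permutation probability that $x\in\mathrm{span}(P_\pi)$ by the probability that the length-$g_x$ prefix $T_\pi$ is independent and spans $x$, which is exactly the goodness density (at least $\frac{1}{2n^{10}}$), and then amplify over the $n^{15}$ independent permutations. You work directly with the containment $\{\Ind(T_\pi)=1,\ x\in\mathrm{span}(T_\pi)\}\subseteq\{T_\pi\subseteq P_\pi,\ x\in\mathrm{span}(P_\pi)\}$ and also handle the independent-$S'$ edge case, which is slightly cleaner than the paper's chain; that chain passes through an unneeded conditional probability.
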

\begin{proof}
We show that
\[
\Pr_\pi [x\in \mathrm{span}(P_\pi)] \geq \frac{1}{2n^{10}}.
\]
The claim then follows since whenever $x\in \mathrm{span}(P_\pi)$, $x$ is added to $R$, and the probability that $x\notin \mathrm{span}(P_\pi)$ over all $n^{15}$ random permutations is at most
\[
\left(1-\frac{1}{2n^{10}}\right)^{n^{15}} \leq 2^{-n^2}.
\]

Let $T_\pi$ denote the set of first $g_x$ elements in the permutation $\pi$. Since $P_\pi$ is the maximal independent prefix of $\pi$, we have
\[
\Pr_\pi [x\in \mathrm{span}(P_\pi)]\geq \Pr_\pi[x\in \mathrm{span}(T_\pi)\mid \Ind(T_\pi)=1] \geq \Pr_\pi[x\in \mathrm{span}(T_\pi)\wedge \Ind(T_\pi)=1].
\]
Moreover,
\[
\Pr_\pi[x\in \mathrm{span}(T_\pi)\wedge \Ind(T_\pi)=1] = \Pr_{T\sim {S'\choose g_x}}[x\in \mathrm{span}(T)\wedge \Ind(T)=1].
\]
Since $x$ is good w.r.t. $S'$, we have $x\notin S'$, and the RHS is exactly equal to
\[
\frac{\left | \left \{T \in \binom{S - \{x \}}{g_x}: \mathrm{Ind}(T) = 1, x \in \mathrm{span}(T) \right \} \cap \binom{S'}{g_x}\right |}{\binom{m}{g_x}} \geq \frac{1}{2n^{10}}.
\]
This completes the proof.
\end{proof}

Next, we show that, in expectation over all subsets $S'\subseteq S$ of size $m$, there are many good elements w.r.t. to $S'$.

\begin{lemma}\label{lem:expectedNumberGoodElements}
    Let $S$ be a set such that $\frac{\alpha(S)}{|S|} \leq \frac{1}{100 \log(n)}$ and for every element $x \in S$, $p_{x, \M|_{S}} \geq n^{-9}$. Now, sample a subset $S' \subseteq S$ with $m=\lceil(1 - 1 / \log (n))|S| \rceil$ many elements. Let $X(S')$ be the random variable that denotes the number of good elements in $S$ w.r.t. $S'$. Then, there is an absolute constant $C > 0$ such that
    \[
    \E_{S'}[X(S')] \geq C\cdot |S- S'|.
    \]
\end{lemma}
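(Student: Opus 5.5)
We use linearity of expectation over the elements of $S$, writing $X(S') = \sum_{x \in S} \mathbf{1}[x \text{ good w.r.t. } S']$. Then
\[
\E_{S'}[X(S')] = \sum_{x\in S} \Pr_{S'}[x\notin S'] \cdot \Pr_{S'}[x \text{ good} \mid x\notin S'].
\]
Since $S'$ is a uniformly random $m$-subset of $S$, we have $\Pr[x \notin S'] = (|S|-m)/|S|$. This quantity is the same for every $x$, and $\sum_{x} (|S|-m)/|S| = |S|-m = |S - S'|$, which is deterministic.

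The key observation is that, conditioned on $x \notin S'$, the set $S'$ is distributed as a uniformly random $m$-subset of $S - \{x\}$. This is exactly the sampling distribution in \cref{lem:goodSetAfterSubsample}. Moreover, conditioned on $x\notin S'$, the goodness condition is precisely the event bounded in that lemma, using the same $g_x$ supplied by \cref{clm:goodSetHighDensity}. Hence $\Pr[x \text{ good} \mid x\notin S'] \geq c$ for an absolute constant $c>0$. Concretely, \cref{thm:preserveDensitySubsample} gives
\[
c = \tfrac14 \cdot 2^{-2\frac{|S|-1-m}{|S|-1-g_x}\log(n^{10})}.
\]
With $|S|-1-m \le |S|/\log n$ and $|S|-1-g_x \ge |S|/2$ (for $|S|$ not tiny), the exponent is at most $2\cdot\frac{2}{\log n}\cdot 10\log n = 40$, so $c \ge 2^{-42}$. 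It is important that this bound is uniform in $x$.

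Summing gives $\E[X(S')] \ge c\,(|S|-m) = c\,|S-S'|$, so we may take $C=c$. The only point needing care, and the main obstacle in an otherwise routine argument, is this uniformity of the constant across all $x$, together with the conditional-distribution identification. For the latter, note that \cref{lem:goodSetAfterSubsample} is stated for $S'$ drawn from $S-\{x\}$ with $m$ elements. This is legitimate since $m \le |S|-1$ whenever $|S|/\log n \ge 1$. The degenerate case of small $|S|$ (where $m=|S|$ and $S-S'=\emptyset$) makes the statement trivial.
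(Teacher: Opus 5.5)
Your proof is correct and follows the paper's argument: linearity of expectation, $\Pr[x\notin S'] = (|S|-m)/|S|$, and the conditional bound $\Pr[x \text{ good}\mid x\notin S']\ge C$ from \cref{lem:goodSetAfterSubsample}, applied uniformly over $x$. Beyond the paper, you also make explicit the constant ($C \ge 2^{-42}$), the fact that conditioned on $x\notin S'$ the set $S'$ is a uniform $m$-subset of $S-\{x\}$, and the degenerate case $m=|S|$, all of which the paper leaves implicit.
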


\begin{proof}
By \cref{lem:goodSetAfterSubsample}, for an element $x\in S$,
\[
\Pr_{S'}[x\text{ is good w.r.t. }S'\mid x\not\in S'] \geq C
\]
for some constant $C$. Thus, we have
\[
\Pr_{S'}[x\text{ is good w.r.t. }S'] \geq C\cdot \frac{|S- S'|}{|S|}.
\]
Then, we have
\[
\E_{S'}[X(S')] \geq |S|\cdot C\cdot \frac{|S- S'|}{|S|}\geq C\cdot |S- S'|.
\]
\end{proof}

Now we are ready to prove \cref{thm:redundantElements}.

\begin{proof} [Proof of \cref{thm:redundantElements}]
By \cref{lem:expectedNumberGoodElements} and the fact that $X(S')\leq |S\setminus S'|$, Markov's inequality implies that
\[
\Pr_{S'}[X(S')>C/2\cdot |S\setminus S'|] \geq C'
\]
for some constant $C'$. By \cref{clm:goodElementsSpanned} together with a union bound over all elements, we have that in a single trial of \cref{alg:RecoverRedundantElements} all good elements w.r.t. $S'$ are included in $R$ with probability $1-n\cdot 2^{-n^2}$. Therefore, 
\[
\Pr_{S'}[|R|\geq C/2\cdot |S\setminus S'|] \geq C'/2.
\]

Consequently, if we run $n^2$ independent parallel instances of \cref{alg:RecoverRedundantElements}, then with probability at least $1 - (1 - C'/2)^{n^2} \geq 1 - 2^{-n}$, at least one instance returns a set $R$ of redundant elements of size $|R|\geq C/2 \cdot |S\setminus S'|=\Omega(|S|/\log (n))$. This concludes the theorem.
\end{proof}

\section{A Near-Optimal Basis Finding Algorithm}\label{sec:optalgorithm}

With the above established, we can now present the algorithm for finding a basis. The algorithm first applies the decomposition procedure to iteratively peel off sets from the matroid until either it encounters a set $S_k$ with $\alpha(S_k)/|S_k|\leq 1/100\log (n)$, or fewer than half of the elements remain in the matroid. In the former case, the algorithm finds an independent set of size $\Omega(n/\log (n))$ and contracts on it. In the latter case, the algorithm uses the new redundant elements recovery procedure to identify a set of $\Omega(n/\log (n))$ redundant elements, which are then deleted from the matroid.

We remark that for a given set $S$, we do not know the exact value of $\alpha(S)$. Nevertheless, we can approximate $\alpha(S)$ within a multiplicative factor of $2$ in a single round using polynomially many independence queries (see Definition 4.6 and Claim 4.8 of \cite{KPS25}), which suffices for our purposes.

\begin{algorithm}[H]
\caption{FindBasis$(\M=(E,\I))$}\label{alg:findBasis}
    $\M\gets \text{RemoveSmallCircuits}(\M)$ \tcp{Algorithm 7 in \cite{KPS25}}
    $n \gets |E|,k \gets 0$ \\
    \While{$\alpha(S_k)/|S_k|\leq 1/100\log (n), |S_k|\leq n/4$ and $|\M| \geq n/2$}{
    $k\gets k +1$\\
    $S_k \leftarrow \mathrm{FindGreedilyOptimal}(\M)$. \tcp{\cref{alg:FindGreedilyOptimal}}
    $\M \leftarrow \M \setminus S_{k}$. \\
    }
    \If{$\alpha(S_k)/|S_k|\geq 1/100\log (n)$}{
    Use \cref{lem:decomposition}, \cref{item:findBigIS} to find a set of at least
    \[
    \frac{\alpha(S_k)}{10|S_k|} \cdot \left | \M - S_1 - \dots - S_{k-1}\right | = \Omega(n/\log (n))
    \]
    independent elements. Denote this set by $R_{\mathrm{ind}}$. \\
    Recursively call $\text{FindBasis}(\M/R_{\mathrm{ind}})$.
    }
    \Else{
    \For{$i \in [k]$ in parallel}{
        \For{$\ell \in [n^2]$ in parallel}{
        Let $R_{i, \ell} = \mathrm{RecoverRedundantElements}(S_i)$.
        }
        Let $R_i = \mathrm{argmax}_{\ell \in [n^2]}|R_{i, \ell}|$.
    }
    Let $R_{\mathrm{redundant}} = \bigcup_{i \in [k]} R_i$. \\
    Recursively call $\text{FindBasis}(\M- R_{\mathrm{redundant}})$.
    }
\end{algorithm}

\begin{theorem}\label{thm:formalBasisFinding}
There is a randomized algorithm that, given any matroid $\M$ on $n$ elements, finds a basis of $\M$ with probability $\geq 2/3$ in $O(n^{1/3}\log^{1/3} (n))$ adaptive rounds, using only polynomially many independence queries per round.
\end{theorem}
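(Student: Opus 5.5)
We will prove \cref{thm:formalBasisFinding} by analyzing \cref{alg:findBasis}, showing that one invocation (excluding recursion) uses $O(n^{1/3}/\log^{2/3} n)$ rounds and makes $\Omega(n/\log n)$ progress. We then solve the resulting recurrence. Throughout, $n$ denotes the current number of elements, which only decreases; for correctness we only need $\log$ of the current size, and the $\log n$ factors are monotone, so this is harmless.

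\textbf{Correctness and cost of one invocation.} By \cref{lem:decomposition}, with probability $\geq 1-2^{-n/2}$ the while loop runs for $k = O(n^{1/3}/\log^{2/3} n)$ iterations. Each iteration costs $O(1)$ rounds: one round for FindGreedilyOptimal, and one round for the factor-2 estimate of $\alpha(S_k)$ from \cite{KPS25}. The constant $100$ in the thresholds absorbs this factor of $2$. The post-processing step is a single extra round. The loop exits for one of two reasons.

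\emph{Case 1: $\alpha(S_k)/|S_k| \gtrsim 1/(100\log n)$.} Since $|\M - S_1 - \dots - S_{k-1}| \geq n/2$, \cref{lem:decomposition}, \cref{item:findBigIS} says that a random $\ell$-subset with $\ell = \Omega(n/\log n)$ is independent with probability $\geq 1/4$. Sampling $\mathrm{poly}(n)$ such subsets in one round and querying them yields an independent set $R_{\mathrm{ind}}$ of size $\Omega(n/\log n)$ with probability $1 - 2^{-n}$. Contracting on $R_{\mathrm{ind}}$ is valid: a basis of $\M/R_{\mathrm{ind}}$ together with $R_{\mathrm{ind}}$ is a basis of $\M$.

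\emph{Case 2: the loop stopped because at most $n/2$ elements remain, or because $|S_k| > n/4$.} Then the peeled sets satisfying the $\alpha$-bound have total size $\Omega(n)$. In the $|S_k| > n/4$ subcase with small $\alpha$, $S_k$ alone already has size $\Omega(n)$ and is a valid input. \cref{thm:redundantElements} applies to each such $S_i$, using property 3 of \cref{lem:decomposition} and the $\alpha$ bound. It gives $R_i \subseteq S_i \setminus S_i'$ with $|R_i| = \Omega(|S_i|/\log n)$ and every $x \in R_i$ in $\mathrm{span}(S_i')$.

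The subtle point is that deleting all $R_i$ simultaneously preserves rank. This holds because the certificate for each $x \in R_i$ lies in $S_i' \subseteq S_i \setminus R_i$, and the sets $S_i$ are disjoint. Hence $\mathrm{span}(E \setminus R_{\mathrm{redundant}}) \supseteq R_{\mathrm{redundant}}$, so the rank of $\M - R_{\mathrm{redundant}}$ equals the rank of $\M$. The total number deleted is $\sum_i \Omega(|S_i|/\log n) = \Omega(n/\log n)$.

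We also need that RemoveSmallCircuits does not affect correctness; this is as in \cite{KPS25}.

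\textbf{Recurrence.} Each invocation costs $c\, n^{1/3}/\log^{2/3} n$ rounds and shrinks the ground set to at most $n(1 - c'/\log n)$. In the contraction case the ground set shrinks by $|R_{\mathrm{ind}}|$; in the deletion case by $|R_{\mathrm{redundant}}|$. Group the invocations into phases during which the size lies in $(N/2, N]$. A phase consists of $O(\log N)$ invocations, each costing $O(N^{1/3}/\log^{2/3} N)$ rounds, so a phase costs $O(N^{1/3}\log^{1/3} N)$. Summing over the geometric scales $N = n, n/2, \dots$ gives a geometric series dominated by the first term, for a total of $O(n^{1/3}\log^{1/3} n)$.

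Once the current size falls below a constant (or below $\log$-scale sizes where the bounds degrade), we finish with the greedy/KUW algorithm in $O(\sqrt{n'})$ rounds, which is negligible.

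\textbf{Failure probability.} There are $O(n)$ invocations in total. Each fails with probability $2^{-\Omega(n')}$, where $n'$ is its current size. For tiny $n'$ these bounds are weak, so we stop the recursion at $n' = \Theta(\log n)$ and then run the deterministic KUW algorithm, which costs only $O(\sqrt{\log n})$ rounds. A union bound then gives overall success probability $\geq 2/3$.

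The main obstacle I expect is this bookkeeping. The first part is making sure the $\alpha$-estimation error and the stopping conditions of \cref{alg:findBasis} line up with the hypotheses of \cref{thm:redundantElements} and \cref{lem:decomposition}, especially the subcase $|S_k| > n/4$. The second part is handling failure probabilities at small sizes so that the union bound still closes.
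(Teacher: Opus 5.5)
Your proposal is correct and follows the paper's proof essentially step for step. It uses the same two-case analysis: \cref{lem:decomposition} to contract an $\Omega(n/\log n)$-size independent set, and \cref{thm:redundantElements} to delete $\Omega(n/\log n)$ redundant elements, which is valid simultaneously because each certificate lies in $S_i' \subseteq S_i \setminus R_i$ and the $S_i$ are disjoint. It then solves the same recurrence $T(n)=T(n-\Omega(n/\log n))+O(n^{1/3}/\log^{2/3}n)$. Your phase-by-phase solution of that recurrence, and your cutoff at $n'=\Theta(\log n)$ before switching to KUW, make explicit bookkeeping the paper glosses over: its union bound uses per-invocation failure probabilities $2^{-\Omega(n')}$ in the current size $n'$, which are not small once $n'$ is tiny.
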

\begin{proof}
We first show that \cref{alg:findBasis} correctly computes a basis of $\M$. Consider a single iteration of \cref{alg:findBasis}. In the first case, we find an independent set and contract it. In the second case, for every $i \in [k]$, we have $R_i \subseteq \mathrm{span}(S_i - R_i)$. Therefore, for $R_{\mathrm{redundant}} = \bigcup_{i \in [k]} R_i$, we have $R_{\mathrm{redundant}} \subseteq \mathrm{span}(E-R_{\mathrm{redundant}})$, and hence removing $R_{\mathrm{redundant}}$ does not alter the existence of a basis in $E-R_{\mathrm{redundant}}$.

Now we analyze the round complexity of \cref{alg:findBasis}. In each iteration, \cref{lem:decomposition} implies that the iterative peeling process requires at most $O(n^{1/3} / \log^{2/3}(n))$ rounds, and the stated conditions hold with probability $\geq 1 - 2^{-n/2}$. After that, in the first case, we find a set $S_k$ with $\alpha(S_k)/|S_k|\geq 1/100\log (n)$. Then by \cref{lem:decomposition}, \cref{item:findBigIS}, a random sample of
\[
\ell = \frac{\alpha(S_k)}{10|S_k|} \cdot |\M - S_1 - \dots -S_{k-1}|
\]
elements from $\M - S_1 - \dots -S_{k-1}$ is independent with probability at least $1/4$. Since the iterative peeling process terminates when fewer than half of the elements remain, we have $|\M - S_1 - \dots -S_{k-1}|\geq n/2$. Combining this with $\alpha(S_k)/|S_k| \geq 1/(100\log n)$ gives $\ell = \Omega(n/\log (n))$. Therefore, after $n^2$ (parallel) independent samples of $\ell$ elements, at least one trial produces an independent set of size $\ell = \Omega(n/\log (n))$ with probability at least $1-(1/4)^{n^2}\geq 1-2^{-n}$. Moreover, this step requires only one additional round.

In the second case, the decomposition algorithm recovers sets $S_1,\dots,S_k$ with $\sum_{i\in [k]} |S_i|\geq n/4$. We then apply the redundant elements recovery algorithm to all sets $S_1,\dots,S_k$ in parallel. For each $i\in [k]$, by \cref{thm:redundantElements}, we recover a set $R_i$ of size $\Omega(|S_i|/\log (n))$. Thus, the total number of redundant elements collected is 
\[
\sum_{i\in [k]} |R_i|=\sum_{i\in [k]} \Omega(|S_i|/\log (n)) = \Omega(n/\log (n)).
\]
Moreover, this step requires only one additional round overall, since all instances are executed in parallel, and this step succeeds with probability $\geq 1 - 2^{-n}$.

To conclude, each iteration of \cref{alg:findBasis} takes $O(n^{1/3} / \log^{2/3}(n))$ rounds, and either finds an independent set of size $\Omega(n/\log (n))$, or identifies $\Omega(n/\log (n))$ redundant elements. Let $T(n)$ denote the total number of adaptive rounds required to find a basis on any $n$-element matroid. Then we have the recurrence
\[
T(n)= T(n-\Omega(n/\log (n))) + O(n^{1/3} / \log^{2/3}(n)).
\]
Solving it yields $T(n)=O(n^{1/3}\log^{1/3} n)$. Finally, each iteration performs only polynomially many independence queries, and hence the overall query complexity is polynomial in $n$. 

Each round of the algorithm has success probability $\geq 1 - 2\cdot2^{-n/2}$. Thus, taking a union bound over the $O(n^{1/3}\log^{1/3}(n))$ many rounds, we obtain our desired theorem. 
\end{proof}

\section{Conclusion}

In this work, we present an $O(n^{1/3} \log^{1/3}n)$ round parallel algorithm for computing a matroid basis when given access only to an independence oracle. This effectively resolves an open question of Karp, Upfal, and Wigderson \cite{KUW85}. Nevertheless, there are still directions for future work.

\paragraph{Deterministic Algorithms} Our algorithm relies very heavily on randomness, contrary to the work of \cite{KUW85} which is deterministic. Note that because the number of matroids grows as $2^{2^n}$, it is not clear that one can even \emph{non-uniformly} derandomize our algorithms. We view it as an interesting open question to design deterministic algorithms for finding matroid bases with $n^{1/2 - \Omega(1)}$ many rounds of polynomially many independence queries. 

\paragraph{Work Efficiency} Our algorithms make $n^{O(1)}$ independence queries across all rounds. However, there is a simple greedy algorithm which makes $n$ queries (and uses $n$ rounds), which succeeds in finding a matroid basis. Likewise, the $O(\sqrt{n})$ round algorithm from the work of \cite{KUW85} can be made \emph{nearly}-work efficient with only a minor blow-up in the number of rounds. Formally, by using binary search, one can get an $O(\sqrt{n}\log(n))$ round algorithm for finding a matroid basis which makes a total of $O(n \log(n))$ many independence queries. Can one achieve a round complexity of $n^{1/2 - \Omega(1)}$ with only $\widetilde{O}(n)$ many queries?

\bibliographystyle{alpha}
\bibliography{ref}

\appendix

\section{Proof of \cref{thm:matroidIntersection}}

As in \cite{KPS25,KPS26}, we use the following lemma of \cite{BT25}:

\begin{lemma}[\cite{BT25}]
Let $\M_1=(E,\I_1),\M_2=(E,\I_2)$ be 2 matroids. Let $n=|E|$ and $r$ be the size of the largest independent set of $\M_1,\M_2$. Then
\begin{itemize}
    \item There is an $\widetilde O\left(\frac{nT(n)}{\varepsilon\Delta}\right)$ rounds independence-query algorithm that finds a common independent set $S\in I_1\cap I_2$ of size $|S|\geq r-(\varepsilon r+\Delta)$, given that there is a $T(n)$ round independence-query algorithm that finds a basis of a matroid on $n$ elements.
    \item Given $S\in I_1 \cap I_2$, in a single round of independence query, one can compute an $S'\in I_1 \cap I_2$ of size $|S'|=|S|+1$ or decide that $S$ is of maximum possible size.
\end{itemize}
\end{lemma}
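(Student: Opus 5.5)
We prove the second item via the standard exchange-graph augmentation for matroid intersection, and the first item via a parallel augmenting-path framework in which basis computations are the only subroutine. The paper quotes both items from \cite{BT25} and uses them as black boxes. My plan is to reconstruct their structure and then check only that our basis-finding subroutine plugs in.

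\textbf{Second item.} Given $S\in\I_1\cap\I_2$, build the exchange graph $G(S)$ on $E$.
\begin{itemize}
\item Source vertices are the $x\notin S$ with $S+x\in\I_1$. Sink vertices are the $x\notin S$ with $S+x\in\I_2$.
\item For $y\in S$ and $x\notin S$, there is an edge $y\to x$ when $S-y+x\in\I_1$, and an edge $x\to y$ when $S-y+x\in\I_2$.
\end{itemize}
Every membership test above is a single independence query on a set determined by $S$ alone. Hence all $O(n^2)$ queries can be issued in one round, and $G(S)$ is then known completely.

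Next, compute a shortest source-to-sink path $P$ offline, with no further queries. The classical lemma of Edmonds/Lawler states that for a shortest (chordless) path, $S\triangle P\in\I_1\cap\I_2$. If no such path exists, the same theory shows that $S$ is maximum. This completes the second item.

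\textbf{First item.} I would follow the Blikstad/Tu-style approach and run phases of the augmenting-path (Hopcroft--Karp-like) algorithm. In each phase:
\begin{enumerate}
\item Compute BFS distance layers of $G(S)$.
\item Augment along a maximal collection of disjoint shortest paths simultaneously.
\end{enumerate}
The key step is that finding a maximal set of augmentations within a layer reduces to computing bases of suitable contracted or restricted matroids $\M_i/\!\cdot$ built from the layers. Each such basis can be obtained with $T(n)$ rounds of our basis-finding subroutine.

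The number of BFS layers is bounded by $O(n/\Delta)$ by the usual distance argument. This argument says: if $|S|<r-\Delta$, then there is an augmenting path of length $O(r/\Delta)$. Stopping once the solution is within $\varepsilon r+\Delta$ of $r$ bounds the number of phases.

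\textbf{Main obstacle.} I expect the main difficulty to be implementing "maximal set of augmentations per layer" using only basis computations, with rounds $\widetilde O(T(n))$ per layer. I would rely on \cite{BT25}'s refinement argument for this.

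Finally, I would plug $T(n)=O(n^{1/3}\log^{1/3} n)$ into the round count and balance $\varepsilon$, $\Delta$. The resulting $\widetilde O(n^{7/9})$ bound follows once the leftover deficit $\varepsilon r+\Delta$ is closed using the single-round augmentations of the second item.
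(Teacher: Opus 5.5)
Your second item is fine. Every query $S+x$ and $S-y+x$ depends only on $S$, so one round builds the exchange graph, and a shortest source--sink path either augments or certifies maximality. Note that the paper does not prove this lemma; it imports both items from \cite{BT25}.

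The first item has a genuine gap, exactly at the step you call the main obstacle. You assume a whole Hopcroft--Karp phase can be run with $\widetilde O(1)$ basis computations, i.e.\ in $\widetilde O(T(n))$ rounds, and you defer this to a ``refinement argument'' in \cite{BT25}. No such argument is there.

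As written, the phase is not even valid. A maximal family of vertex-disjoint shortest paths cannot in general be augmented simultaneously. For example, take two length-zero paths $x_1,x_2$ with $S+x_1,S+x_2\in\I_1\cap\I_2$; it can happen that $S+x_1+x_2\notin\I_1$. So you need augmenting sets in the sense of \cite{CLS+19}. Making such a set maximal requires repeatedly pruning and re-extending layers against each other, and there is no known bound of $O(1)$ parallel basis calls for this.

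Your own accounting shows the claim is far from routine. With $O(r/\Delta)$ phases (Cunningham's distance bound) at $\widetilde O(T(n))$ rounds each, you would reach additive error $\Delta$ in $\widetilde O(rT(n)/\Delta)$ rounds, with no factor $n$ and no $1/\varepsilon$. Balancing against the one-round augmentations of the second item would give $\widetilde O(\sqrt{r\,T(n)})=\widetilde O(n^{2/3})$-round matroid intersection. That is strictly better than the $\widetilde O(n^{7/9})$ the paper derives from this very lemma. Relatedly, your sketch never explains where the factors $n$ and $1/\varepsilon$ in $\widetilde O(nT(n)/(\varepsilon\Delta))$ come from.

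Those factors point to a different algorithm. In \cite{BT25} the first item comes from a batch-update auction algorithm, not from augmenting paths. Elements carry prices on a grid of $O(1/\varepsilon)$ levels, and outbid elements raise their prices together in batches. Each batch needs only maximum-weight basis computations in $\M_1$ and $\M_2$. With $O(1/\varepsilon)$ weight classes, these reduce to two parallel layers of ordinary basis computations: first bases of the prefix unions, then bases of each class in the contraction by the preceding prefix. So each batch costs $O(T(n))$ rounds.

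In essence, the round bound is a potential count. There are at most $n\cdot O(1/\varepsilon)$ price increases in total. A batch with fewer than $\Delta$ increases lets you stop within $\varepsilon r+\Delta$ of optimal. Hence there are $O(n/(\varepsilon\Delta))$ batches. This potential argument is the idea your proposal is missing, and with it no maximal-augmenting-set subroutine is needed at all.
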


Now, we show \cref{thm:matroidIntersection}.

\begin{proof}[Proof of \cref{thm:matroidIntersection}]
We set $\varepsilon = n^{4/9}r^{-2/3}$ and $\Delta = \varepsilon r= n^{4/9}r^{1/3}$. We can first find a common independent set $S\in I_1\cap I_2$ of size $|S|\geq r-(\varepsilon r +\Delta)$ in
\[
\widetilde O\left( \frac{n\cdot n^{1/3}}{\varepsilon \Delta} \right) = \widetilde O(n^{4/9} r^{1/3})
\]
rounds and then augment it to optimal in $O(\varepsilon r+\Delta)=O(n^{4/9}r^{1/3})$ rounds. As $r\leq n$, the total number of rounds needed is $\widetilde O(n^{7/9})$.
\end{proof}

\section{Proof of \cref{thm:MonotoneSubmodularOpt}}

The work of \cite{BRS19b} relies on the following key definition:

\begin{definition}[\cite{BRS19b}]
Given a matroid $\M=(E,\I)$, we say that $(a_1,\dots,a_{\rank(\M)})$ is a \emph{random feasible sequence} if for all $i\in [\rank(\M)]$, $a_i$ is an element chosen uniformly at random from $\{a\in E\setminus \{a_1,\dots,a_{i-1}\}:\{a_1,\dots,a_{i-1},a\}\in \I\}$.
\end{definition}

By using \cref{thm:1-3roundsintro}, we trivially obtain the following algorithm for generating feasible sequences:

\begin{lemma}\label{lem:generate-random-sequence}
Let $\M=(E,\I)$ be a matroid on $n$ elements. There is an $\tilde O(n^{1/3})$ round (polynomial independence query) algorithm that generates a random feasible sequence in $\M$.
\end{lemma}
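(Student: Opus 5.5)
The plan is to generate the random feasible sequence greedily, but in batches, using the basis-finding algorithm of \cref{thm:1-3roundsintro} as a black box on contracted matroids. The key observation is that a random feasible sequence is determined step by step by conditional distributions that depend only on the current prefix. Specifically, given $a_1,\dots,a_{i-1}$, the next element $a_i$ is uniform over the set $\{a \notin \{a_1,\dots,a_{i-1}\} : \{a_1,\dots,a_{i-1},a\}\in\I\}$. This set is exactly the set of non-loops of the contracted matroid $\M/\{a_1,\dots,a_{i-1}\}$, and one round of $n$ independence queries identifies it. So the naive procedure takes $\rank(\M)$ rounds, and the task is to compress it.

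The first step is to reinterpret the sequence. Sample a uniformly random permutation $\sigma$ of $E$. I would check the following by induction on $i$: running the greedy "pick the first element in $\sigma$ order among the currently addable elements" is not the same as sampling uniformly among addable elements. The reason is that elements skipped earlier remain candidates later. So I would instead use a fresh-randomness coupling. At step $i$, the next element is uniform over the non-loops of $\M/A_{i-1}$, where $A_{i-1}=\{a_1,\dots,a_{i-1}\}$.

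To parallelize this, the plan uses the structure of \cref{alg:findBasis}. Each recursive phase either contracts an independent set or deletes redundant elements. Deletions are harmless for the distribution only if they remove elements that would never be chosen, namely elements in $\mathrm{span}$ of the current prefix. So I would not delete. Instead, in each phase I would compute, via \cref{thm:1-3roundsintro} applied to $\M/A$, enough structure to simulate many consecutive uniform-nonloop draws at once. Concretely, in one round I can query $\Ind(A\cup P)$ for all candidate short extensions $P$ sampled along random orders. I can then simulate the sequential uniform process for a block of steps, as long as the set of non-loops changes predictably. A dependence only arises when a chosen element spans a previously addable element.

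I expect the main obstacle to be exactly this simulation. The uniform-over-addable process is adaptive, and the set of addable elements shrinks in a way that depends on spans. My first attempt would be this: sample a long random order of the current non-loops, and use prefix queries as in \cref{alg:RecoverRedundantElements} to detect the first index where the candidate sequence becomes dependent. Accepting the independent prefix is a valid sample of the process up to that point, via rejection or resampling of elements already spanned (a spanned element is uniformly replaced among the remaining ones, which preserves uniformity). This makes progress equal to the independent prefix length, which is $\approx\alpha$ of the contracted matroid. To bound the number of phases by $\tilde O(n^{1/3})$, I would mirror the contraction/deletion dichotomy of \cref{thm:formalBasisFinding}, with span-detected elements treated as "deleted" from the candidate pool. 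This gives a recurrence like $T(n)=T(n-\Omega(n/\log n))+O(n^{1/3}/\log^{2/3}n)$ and the claimed $\tilde O(n^{1/3})$ rounds. Verifying that the removed elements are exactly those in the span of the committed prefix, so that the output distribution is unaffected, is the delicate point I would check most carefully.
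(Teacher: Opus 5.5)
Your proposal has a genuine gap, and it comes from rejecting the one reduction that works. Your claim that greedy over a uniformly random permutation $\sigma$ fails because ``elements skipped earlier remain candidates later'' is false. An element $e_j$ is skipped only if $e_j\in\mathrm{span}$ of the greedy picks made before it. That set only grows, so $e_j$ stays spanned and is never addable again. Hence, conditioned on the prefix of $\sigma$ ending at $a_{i-1}$, every element of $\{a\notin A_{i-1}: A_{i-1}+a\in\I\}$ lies in the remaining, uniformly ordered suffix. The next greedy pick is the first of them in that suffix, i.e.\ uniform over that set.

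This is exactly what the paper does, following Algorithm 4 and Lemma 15 of \cite{BRS19b}:
\begin{enumerate}
\item Sample one permutation $e_1,\dots,e_n$.
\item Compute all prefix ranks $r_j=\rank(\{e_1,\dots,e_j\})$ by running \cref{thm:1-3roundsintro} on the $n$ restrictions $\M|_{\{e_1,\dots,e_j\}}$ in parallel. A union bound, or parallel repetition keeping the largest independent output, makes all $n$ calls correct.
\item Output, in order, the $e_j$ with $r_j-r_{j-1}=1$.
\end{enumerate}
This costs $\tilde O(n^{1/3})$ rounds with polynomially many queries, and needs nothing about the internals of \cref{alg:findBasis}.

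Your replacement is distributionally sound but has no valid round bound. Accepting the maximal independent prefix of a random order of the non-loops of $\M/A$, then restarting, does keep each accepted element uniform over the current addable set. The problem is that one round of prefix queries only advances the process to the first dependence. To continue past a spanned element you must know which later candidates are spanned by the grown prefix, and that depends on the realized skips. Covering all skip patterns in one round takes superpolynomially many queries.

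So each phase commits only about $\alpha$ of the current contracted matroid, which can force on the order of $\sqrt n$ phases. For example, take a direct sum of $n/2$ parallel pairs: with $m$ unresolved pairs, a phase commits only $\Theta(\sqrt m)$ of them (birthday bound).

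Invoking the contraction/deletion dichotomy of \cref{thm:formalBasisFinding} does not help. The sets $R_i$ found by \cref{alg:RecoverRedundantElements} are spanned by $S_i\setminus R_i$, not by the committed prefix $A$. They are generally still non-loops of $\M/A$, so removing them biases the output, as you yourself note when you decide not to delete. With that deletion unavailable, nothing in your plan guarantees that $\Omega(n/\log n)$ elements become committed or spanned by $A$ per $\tilde O(n^{1/3})$ rounds. The recurrence $T(n)=T(n-\Omega(n/\log n))+O(n^{1/3}/\log^{2/3}n)$ is therefore asserted rather than derived.
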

\begin{proof}
As in \cite{KPS26}, we follow Algorithm 4 of \cite{BRS19b}. First, we sample a random permutation $\{e_1,\dots,e_n\}$ of the ground set $E$, and compute $r_i=\rank(\{e_1,\dots,e_i\})$ for every $i\in [n]$. This step takes $\tilde O(n^{1/3})$ rounds using $n$ parallel calls to our basis-finding algorithm from \cref{thm:1-3roundsintro}. Then for every $i\in [\rank(\M)]$, we let $a_i$ be the $i$th element $e_j$ such that $r_j-r_{j-1}=1$. The resulting sequence $(a_1,\dots,a_{\rank(\M)})$ is a random feasible sequence, essentially by definition,. as shown in Lemma 15 of \cite{BRS19b}.
\end{proof}

Importantly, \cite{BRS19b} shows the following blackbox reduction from submodular optimization to generating feasible sequences:

\begin{lemma}[\cite{BRS19b}]\label{lem:random-sequence}
Let $\M$ be a matroid on $n$ elements. Given that there is a $T(n)$ round independence-query algorithm that generates a random feasible sequence in $\M$, for every $\epsilon>0$, there is an $\tilde O(T(n)\cdot \epsilon ^{-3})$ round algorithm that obtains an $1-1/e-O(\epsilon)$ approximation for maximizing a monotone submodular function under the matroid constraint $\M$ with high probability.
\end{lemma}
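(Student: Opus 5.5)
The plan is to reconstruct the adaptive-sequencing framework of \cite{BRS19b} on top of the multilinear relaxation, using the random feasible sequence generator as the only matroid-dependent primitive. Let $F$ be the multilinear extension of $f$. All evaluations of $F$ and of marginals $\partial_e F$ will be estimated by polynomially many parallel samples, to additive error $\epsilon\cdot\mathrm{OPT}/\mathrm{poly}(n)$. I would run a continuous-greedy process with $1/\epsilon$ epochs. In epoch $j$ we build a basis $B_j$ of $\M$ and move $x \leftarrow x + \epsilon \mathbf{1}_{B_j}$. The final point $x=\epsilon\sum_j \mathbf{1}_{B_j}$ lies in the matroid polytope. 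Standard continuous-greedy analysis gives $F(x)\ge(1-1/e-O(\epsilon))\mathrm{OPT}$, provided that in every epoch the basis $B_j$ satisfies, in expectation, $\sum_{e\in B_j}\partial_e F(x) \ge (1-O(\epsilon))\max_{B}\sum_{e\in B}\partial_eF(x) - O(\epsilon)\mathrm{OPT}$.

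The work is in building such an approximately max-weight basis, with respect to the weights $w_e=\partial_eF(x)$ restricted to the greedy trajectory, in few rounds. I would use geometrically decreasing thresholds $\tau = \mathrm{OPT}, (1-\epsilon)\mathrm{OPT},\dots$ down to $\epsilon\,\mathrm{OPT}/n$, which gives $O(\log(n)/\epsilon)$ thresholds. At each threshold, maintain the current independent set $S$ and let $X=\{e: S+e\in\I,\ w_e(S)\ge\tau\}$, where $w_e(S)$ is the marginal with $S$ partially added. Then repeat the following step:
\begin{enumerate}
\item Generate a random feasible sequence $a_1,\dots,a_r$ of $(\M/S)|_X$ in $T(n)$ rounds.
\item In one further parallel round, evaluate for every prefix $i$ the set $X_i$ of elements still feasible with $S\cup\{a_1,\dots,a_i\}$ that still have marginal $\ge\tau$.
\item Let $i^\star$ be the smallest $i$ with $|X_i|\le(1-\epsilon)|X|$, and add $a_1,\dots,a_{i^\star}$ to $S$.
\end{enumerate}
Each iteration shrinks $X$ by a $(1-\epsilon)$ factor, so each threshold ends after $O(\log(n)/\epsilon)$ iterations. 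This gives $O(\log^2 n/\epsilon^2)$ iterations per epoch and $\tilde O(T(n)\epsilon^{-3})$ rounds in total.

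The key quality step, and the main obstacle, is to show that each added element $a_i$ with $i\le i^\star$ has expected marginal at least $(1-O(\epsilon))\tau$. Because the sequence is \emph{random feasible}, $a_i$ is uniform among the elements feasible at step $i$. Since $i<i^\star$ (or $i=i^\star$), at least a $(1-\epsilon)$ fraction of those elements lie in $X_{i-1}$ and have marginal $\ge\tau$. Hence $\E[w_{a_i}]\ge(1-\epsilon)\tau$, even though $a_i$ is chosen before its marginal is checked. I expect the conditioning to need care, since $i^\star$ depends on the whole sequence. I would handle it by bounding the contribution of each fixed position $i$ on the event $\{i\le i^\star\}$, which is determined by the prefix before $a_i$ is revealed.

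With that bound, I would compare against the optimal basis via a matroid exchange bijection between the chosen basis and $\mathrm{OPT}$. Each optimal element $o$ is mapped to the element added while $o$ was still feasible. At that time $o$ had marginal below $\tau/(1-\epsilon)$, or it would have survived in $X$ at the previous threshold. This yields the required per-epoch inequality. Two things remain: Chernoff bounds on the sampled estimates, and a union bound over all iterations for the high-probability claim.

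Finally, the fractional point $x$ must be rounded to an independent set without loss. I would use swap rounding on the decomposition $x=\epsilon\sum_j\mathbf{1}_{B_j}$, merging bases pairwise in a balanced tree of depth $O(\log(1/\epsilon))$. Each merge needs basis exchanges. These can be found by computing a basis of an auxiliary contraction with the same $T(n)$-round primitive, or they can be replaced by pipage-free randomized swap rounding whose exchanges are located by independence queries in parallel. In either case the rounding costs only $\tilde O(T(n))$ additional rounds, so it does not affect the stated bound.
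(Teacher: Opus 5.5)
Your algorithm is the adaptive-sequencing scheme of \cite{BRS19b}, which the paper uses as a black box without proof. Your quality step is sound: on the event $\{i\le i^\star\}$, which is determined by $a_1,\dots,a_{i-1}$, the element $a_i$ is uniform over a feasible set of which more than a $(1-\epsilon)$ fraction lies in $X_{i-1}$.

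\textbf{The gap is in your first paragraph.} With only $1/\epsilon$ epochs, an (approximately) max-weight basis for the \emph{fixed} gradient $\nabla F(x)$ in every epoch does not give $1-1/e-O(\epsilon)$. The textbook continuous-greedy analysis needs a polynomially small step size precisely to control the second-order loss of $x\mapsto x+\epsilon\mathbf{1}_B$, and that would cost $\mathrm{poly}(n)$ epochs. Here is a counterexample. Take a rank-$k$ uniform matroid with $k\gg\epsilon^{-2}$ and a set $O$ of $k$ elements covering distinct unit-weight items. Add $1/\epsilon$ disjoint groups $A_t$ of $k$ elements each, where all elements of $A_t$ cover one common item of weight $v_t\in(1,2)$, with $v_1>v_2>\cdots$. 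In epoch $t$ the exact max-weight basis for $\nabla F(x)$ is $A_t$, since the gradients on earlier groups have dropped to $v_s(1-\epsilon)^{k-1}<1$. So your condition holds exactly in every epoch, yet $F(x)<\sum_t v_t<2/\epsilon\ll k\le\mathrm{OPT}$. Nor does your exchange argument produce that condition. It compares the trajectory marginals $w_{a_i}(S_{i-1})$ with trajectory marginals of elements of $O$, and both can be far below the fixed gradients $\epsilon\,\partial_eF(x)$.

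\textbf{What you need instead} is the per-epoch inequality of the accelerated continuous greedy (Badanidiyuru--Vondr\'ak), on which \cite{BRS19b} builds:
\[
\E\bigl[F(x+\epsilon\mathbf{1}_{B_j})-F(x)\bigr]\ \ge\ (1-O(\epsilon))\,\epsilon\,\E\bigl[\mathrm{OPT}-F(x+\epsilon\mathbf{1}_{B_j})\bigr]-O(\epsilon^2)\,\mathrm{OPT}.
\]
Your ingredients give it directly:
\begin{enumerate}
\item Set $w_e(S)=F(x+\epsilon\mathbf{1}_{S+e})-F(x+\epsilon\mathbf{1}_S)=\epsilon\,\partial_eF(x+\epsilon\mathbf{1}_S)$. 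The left side then telescopes to $\E\sum_i w_{a_i}(S_{i-1})$.
\item Your expectation bound and the exchange bijection lower-bound this by $(1-\epsilon)^2\,\E\sum_{o\in O}w_o(\cdot)$, with each $o$'s marginal taken at its partner's insertion time, up to the threshold-floor loss.
\item By submodularity of $F$, each such term is at least $\epsilon\,\partial_oF(x+\epsilon\mathbf{1}_{B_j})$.
\item For every $y$, $\sum_{o\in O}\partial_oF(y)\ge\mathrm{OPT}-F(y)$.
\end{enumerate}
Iterating $\mathrm{OPT}-F(x_{j+1})\le(\mathrm{OPT}-F(x_j))/(1+(1-O(\epsilon))\epsilon)$ over $1/\epsilon$ epochs then yields $1-1/e-O(\epsilon)$.

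\textbf{Your rounding paragraph is asserted rather than argued.} In swap rounding, each merge of two bases is a chain of up to $k$ exchanges, each depending on the previous coin flip. Locating exchanges by parallel independence queries therefore gives one exchange per round. You also do not show that a single basis computation in an auxiliary contraction produces a distribution with $\E[f(R)]\ge F(x)$. You need either to argue that these matroid queries are not charged in the model, or to give a rounding procedure that genuinely runs in $\tilde O(T(n))$ rounds.
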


As a consequence, we obtain \cref{thm:MonotoneSubmodularOpt}.

\section{Partition Matroids}\label{sec:partitionMatroids}

We start by defining partition matroids.

\begin{definition}[Partition Matroids] A partition matroid $\M=(E,\I)$ is defined by a ground set $E$ being partitioned into disjoint sets $A_1,\dots,A_m$ and $m$ integers $b_1,\dots,b_m$ where $0\leq b_i\leq |A_i|$. A set $S\subseteq E$ is independent iff $|S\cap A_i|\leq b_i$ for every $0\leq i\leq m$. We refer to $A_i$ as a \emph{part}, and $b_i$ as the \emph{budget} of the part.
\end{definition}

Recall the procedure $\text{RecoverSinglePart}(\M,\pi)$ from \cite{KPS25}. It takes as input a partition matroid $\M=(E,\I)$, and a permutation $\pi:[n]\to E$. The procedure runs in a single round and recovers the part of the partition matroid $\M$ that triggers the first circuit $C_{\pi}$. In particular, it returns a pair $(T,\ell)$, where $T$ is the identified part and $\ell$ is its budget. See Algorithm 1 and Claim 3.2 of \cite{KPS25} for the details.

Now we describe the basis finding algorithm utilizing this procedure. It first applies the procedure $\text{RemoveSmallParts}(\M)$ (see Algorithm 2 in \cite{KPS25}) to remove parts of budgets at most $50$. Then the algorithm proceeds in iterations. In each iteration, it uses $\text{poly}(n)$ random permutations to identify parts of the matroid. In the end of the iteration, it contracts on an independent set from the discovered parts and removes these parts from the matroid.

\begin{algorithm}[H]
\caption{FindBasisPartitionMatroids$(\M=(E,\I))$}\label{alg:findBasisPartitionMatroids}
    $\M\gets \text{RemoveSmallParts}(\M)$ \tcp{Algorithm 2 in \cite{KPS25}}
    $n \gets |E|,k \gets 0$ \\
    $B\gets \emptyset$\\
    \While{$|S_k|\leq n/4$ and $|\M|\geq n/2$}{
    $k\gets k +1$\\
    $\mathcal{A}\gets \emptyset,I\gets \emptyset$\\
    \For{$i\in [n^{15}]$} {
        Draw a random permutation $\pi$\\
        Query $\Ind(\{\pi(1),\dots,\pi(|\M|/10)\})$\\
        \If{$\Ind(\{\pi(1),\dots,\pi(|\M|/10)\})=1$} {
            $I\gets \{\pi(1),\dots,\pi(|\M|/10)\}$
        }
        $T,\ell \gets \text{RecoverSinglePart}(\M,\pi)$\\
        $\mathcal{A}\gets \mathcal{A}\cup \{(T,\ell)\}$
    }
    \If{$I\neq \emptyset$}{
        Recursively call  $\text{FindBasisPartitionMatroids}(\M/I)$ and terminate.\\ \label{line:contract}
    }
    \For{$(T,\ell)\in \mathcal{A}$} {
        $S_k\gets S_k\cup T$\\
        Pick an arbitrary $P\in {T\choose \ell}$, $I\gets I\cup P$.
    }
    $\M \leftarrow (\M/I)\setminus S_{k}$\\
    }
    Recursively call $\text{FindBasisPartitionMatroids}(\M)$. \label{line:end}
\end{algorithm}

Consider the $i$th iteration of the algorithm and let $S_i$ be the set found. The following claim, together with an union bound, implies that $q_{S_i}\geq 1-n^{-7}$ (recall that $q_{S_i} = \Pr_{\pi}[C_\pi\subseteq S_i]$).

\begin{claim} [Claim 3.5 of \cite{KPS25}]
In the $i$th iteration of \cref{alg:findBasisPartitionMatroids}, let $A_1,\dots,A_{\ell}$ be the remaining parts. For every $j\in [\ell]$, let $p_j$ denote the probability that the procedure $\mathrm{RecoverSinglePart}(\M,\pi)$, when given a random permutation $\pi$, returns $A_j$. Then with probability at least $1-2^{-n}$, we have for any $j\in[\ell]$ with $p_j\geq n^{-8}$, $A_j\subseteq S_i$.
\end{claim}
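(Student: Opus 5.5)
We want to show that with probability at least $1-2^{-n}$, every part $A_j$ with $p_j\geq n^{-8}$ is contained in $S_i$. Here $S_i$ is the union of the parts returned by $\mathrm{RecoverSinglePart}$ over the $n^{15}$ random permutations drawn in iteration $i$. The plan is a per-part hitting argument followed by a union bound.

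\textbf{Step 1 (containment mechanism).} Every pair $(T,\ell)$ collected in $\mathcal{A}$ has $T$ added to $S_i$ in the loop over $\mathcal{A}$. So $A_j\subseteq S_i$ as soon as at least one of the $n^{15}$ independent calls $\mathrm{RecoverSinglePart}(\M,\pi)$ returns $A_j$. I will use Claim 3.2 of \cite{KPS25}, assumed as stated earlier, which says this procedure correctly identifies the part triggering $C_\pi$. Then each call independently returns $A_j$ with probability exactly $p_j$.

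\textbf{Step 2 (single-part bound).} Fix a part $A_j$ with $p_j\geq n^{-8}$. The permutations are drawn independently, so the probability that no call returns $A_j$ is
\[
(1-p_j)^{n^{15}}\leq \left(1-n^{-8}\right)^{n^{15}}\leq e^{-n^{7}}.
\]

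\textbf{Step 3 (union bound).} There are at most $\ell\leq n$ remaining parts, so a union bound gives a total failure probability of at most $n e^{-n^7}\leq 2^{-n}$ for all $n\geq 1$.

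The only step needing care is Step 1, namely that $p_j$ is well defined as the probability over a single fresh random permutation. The $n^{15}$ calls use independent permutations, and the matroid $\M$ is fixed throughout the iteration; it is modified only after the loop. Hence the trials are i.i.d., and the calculation above applies directly. I do not expect a genuine obstacle here.
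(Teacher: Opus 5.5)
Your proof is correct and is the standard argument for this claim, which the paper imports from \cite{KPS25} without proof: each of the $n^{15}$ independent calls returns $A_j$ with probability $p_j \geq n^{-8}$, so $A_j$ is missed with probability at most $e^{-n^{7}}$, and a union bound over at most $n$ parts finishes. The appeal to Claim 3.2 is unnecessary, since $p_j$ is by definition the per-call probability of returning $A_j$.
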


Next we show that if $\alpha(S_i)/|S_i|\geq 1/2$, the algorithm finds an independent set of size $\Omega(n)$ and terminates at \cref{line:contract}.

\begin{claim}[Claim 3.6 of \cite{KPS25}]
If $\alpha(S_i)/|S_i|\geq 1/2$, then the algorithm terminates at \cref{line:contract} in this iteration with probability $1-2^{-n}$.
\end{claim}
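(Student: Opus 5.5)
We argue that whenever $\alpha(S_i)/|S_i|\geq 1/2$, a single random prefix of length $|\M|/10$ is independent with constant probability. The $n^{15}$ parallel trials then make $I\neq\emptyset$ with probability $1-2^{-n}$, so the algorithm reaches \cref{line:contract}.

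\textbf{Reduction to the parts of $S_i$.} The parts in $S_i$ are exactly those returned by some trial of $\mathrm{RecoverSinglePart}$. By the preceding claim, every part with $p_j\geq n^{-8}$ lies in $S_i$, so $q_{S_i}\geq 1-n^{-7}$. Since $\alpha(S_i)\geq|S_i|/2$, a uniformly random set of $|S_i|/2-1$ elements of $S_i$ is independent with probability $>1/2$. For a partition matroid this says that sampling $|S_i|/2-1$ elements of $S_i$ rarely overflows any budget. In particular, each part $A_j\subseteq S_i$ has budget $b_j$ at least roughly $|A_j|/2$ minus lower-order fluctuations, and we may use $b_j\geq 50$ after RemoveSmallParts.

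\textbf{Handling the parts inside $S_i$.} Take a random prefix $U$ of size $|\M|/10$. Then $|U\cap A_j|$ is hypergeometric with mean $|A_j|/10$, whereas the independent half-sample of $S_i$ draws about $|A_j|/2$ per part and still fits. I would couple or compare the two: the prefix takes a $1/10$-rate sample of $S_i$, a $1/5$ fraction of the half-sample rate. Conditioned on $|U\cap S_i|\leq|S_i|/2-1$ (true with overwhelming probability), $U\cap S_i$ is a uniform subset of smaller size. By downward closure, independence is monotone decreasing in sample size, so $U\cap S_i$ is independent with probability $>1/2$.

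\textbf{Handling the parts outside $S_i$.} If $U\cap(E\setminus S_i)$ contained a circuit, i.e.\ some outside part overflowed at rate $1/10$, then running a random permutation would, with noticeable probability, produce $C_\pi\not\subseteq S_i$. To make this rigorous, I would use that independence of the two restrictions factorizes across disjoint parts and is monotone. Observe that $\Pr[\Ind(U)=1]\geq \Pr[\Ind(U\cap S_i)=1]\cdot\Pr[\Ind(U\setminus S_i)=1]$ by positive correlation (Harris/FKG: both are decreasing events, valid for uniform fixed-size samples after conditioning on the split sizes). Then show $\Pr[\Ind(U\setminus S_i)=1]\geq 1/2$. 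If not, the first circuit of a random permutation would form outside $S_i$ with constant probability, because the prefix of length $|\M|/10$ would already contain an outside circuit while, by the previous paragraph, $S_i$'s portion is likely still independent. This contradicts $q_{S_i}\geq 1-n^{-7}$.

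\textbf{Main obstacle.} The main obstacle is this last comparison: the first circuit appearing outside $S_i$ requires that the outside circuit form before any inside circuit, not merely that both sides be dependent at length $|\M|/10$. I would handle it by looking at the first index $t\leq|\M|/10$ at which the outside portion becomes dependent and using the half-sample margin, so that the inside portion is still independent at $t$ with probability $\geq 1/2$. This gives $\Pr[C_\pi\not\subseteq S_i]\geq \tfrac12\cdot\Pr[\Ind(U\setminus S_i)=0]$, which forces $\Pr[\Ind(U\setminus S_i)=0]\leq 2n^{-7}$. Combining the two sides gives $\Pr[\Ind(U)=1]\geq 1/4$, and the $n^{15}$ trials finish the claim.
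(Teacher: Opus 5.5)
The paper does not prove this claim; it imports Claim~3.6 from \cite{KPS25}. I compare your proposal against the short argument the claim rests on. Your overall plan is the right one: show that one fresh prefix $U$ of length $L=|\M|/10$ is independent with constant probability, then amplify over the $n^{15}$ trials. But one step is false as written.

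\textbf{The false step.} The inequality $\Pr[\Ind(U)=1]\ge\Pr[\Ind(U\cap S_i)=1]\cdot\Pr[\Ind(U\setminus S_i)=1]$ does not follow from Harris/FKG. Harris needs a product measure. The indicator vector of a uniform fixed-size sample is instead negatively associated, so two decreasing events on disjoint coordinates are \emph{negatively} correlated. Concretely, let $s=|U\cap S_i|$. Conditioning on $s$ gives $\Pr[\Ind(U)=1\mid s]=f(s)\,g(L-s)$, where $f$ and $g$ are the (nonincreasing) independence probabilities of uniform samples of $S_i$ and of $E\setminus S_i$. Since $f(s)$ is nonincreasing in $s$ while $g(L-s)$ is nondecreasing, the one-variable Chebyshev inequality for oppositely monotone functions gives
\[
\Pr[\Ind(U)=1]\le\Pr[\Ind(U\cap S_i)=1]\cdot\Pr[\Ind(U\setminus S_i)=1],
\]
which is the reverse of what you use. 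Conditioning on the split sizes does make the two pieces independent, but undoing that conditioning is exactly where the negative correlation enters.

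\textbf{The repair, and a simpler route.} A union bound fixes this: $\Pr[\Ind(U)=1]\ge\Pr[\Ind(U\cap S_i)=1]-\Pr[\Ind(U\setminus S_i)=0]$, combined with your $O(n^{-7})$ bound on the last term. But the whole outside analysis is unnecessary, including the stopping-time comparison you flag as the main obstacle. If $U$ is dependent, the first circuit $C_\pi$ lies inside $U$. So on the event $\{\Ind(U\cap S_i)=1,\ \Ind(U)=0\}$ one must have $C_\pi\not\subseteq S_i$. Hence
\[
\Pr[\Ind(U)=1]\ \ge\ \Pr[\Ind(U\cap S_i)=1]-(1-q_{S_i})\ \ge\ \tfrac45\cdot\tfrac12-n^{-7}.
\]
The last step uses two facts:
\begin{itemize}
\item Markov's inequality on $|U\cap S_i|$, whose mean is $|S_i|/10$, gives $\Pr[|U\cap S_i|<|S_i|/2]\ge 4/5$.
\item By the definition of $\alpha$, a uniform $k$-subset of $S_i$ with $k<|S_i|/2\le\alpha(S_i)$ is independent with probability $>1/2$.
\end{itemize}
This needs no correlation inequality, no timing argument, and not even the partition structure. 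The same reasoning yields \cref{clm:largeAlpha} for general matroids.

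\textbf{Smaller points.}
\begin{itemize}
\item The event $|U\cap S_i|<|S_i|/2$ is not ``overwhelmingly'' likely when $|S_i|$ is small, but Markov's $4/5$ suffices.
\item The heuristic that each budget is roughly half its part size is never used and can be dropped.
\item $S_i$ is built from the same permutations as the tested prefixes. You should note that the displayed bound holds for a fresh prefix as soon as \emph{some} $S$ with $q_S\ge1-n^{-7}$ and $\alpha(S)\ge|S|/2$ exists. It is therefore a property of $\M$ alone, and this dependence is harmless.
\end{itemize}
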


Let $S_1,\dots,S_k$ be the sets produced by the algorithm before termination. We obtain the same characterization of how the $\alpha$-value evolve as in \cref{clm:peelingSetsAlpha}, and the proof is identical.

\begin{claim}
Let $\mathcal{M}$ be a matroid, and let $S_1, \dots, S_k$ be a sequence of sets computed in accordance with \cref{alg:findBasisPartitionMatroids}. For any $i<j<k$ where $|S_i|\leq 2|S_j|$, we have
\[
\alpha(S_j) \geq \frac{\alpha(S_i)|S_j|}{|S_i|} + \sqrt{\frac{\alpha(S_i)|S_j|}{|S_i|} \cdot \frac{\log n}{16}}.
\]
\end{claim}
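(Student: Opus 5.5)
The plan is to reuse the argument of \cref{clm:peelingSetsAlpha} essentially verbatim, replacing the general-matroid facts with their partition-matroid analogues. The only properties of the peeled sets that the earlier argument needs are:
\begin{enumerate}
    \item each $S_i$ captures the first circuit with overwhelming probability at the time it is peeled, i.e.\ $q_{S_i}\geq 1-n^{-7}$ (up to a negligible error) in the matroid remaining at iteration $i$;
    \item $\alpha(S_i)\geq 50$, which makes the hypergeometric means at least $25$;
    \item $|S_i|/m\leq 1/2$ and $\alpha(S_i)/|S_i|\leq 1/2$ for $i<k$, where $m\geq n/2$ is the current size of the matroid.
\end{enumerate}

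The first property follows from Claim 3.5 of \cite{KPS25} plus a union bound over parts with $p_j<n^{-8}$. The second holds because $\text{RemoveSmallParts}$ removes every part of budget at most $50$. The third comes from the loop condition and from Claim 3.6 of \cite{KPS25}: if $\alpha(S_i)/|S_i|\geq 1/2$, the algorithm contracts and terminates, so this cannot happen for $i<k$. The intervening contractions $\M/I$ only remove whole discovered parts together with their budgets. Hence the parts not yet discovered, and in particular those making up $S_j$, keep their structure, so $\alpha(S_j)$ is the same in the matroid at iteration $i$ as at iteration $j$.

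With these in hand, suppose for contradiction that $\alpha(S_j)$ falls below the claimed bound. Set $\ell=\frac{\alpha(S_i)}{|S_i|}m$ and let $U$ be a uniformly random $\ell$-subset of the matroid at iteration $i$. Put $X_i=|U\cap S_i|$ and $X_j=|U\cap S_j|$.
\begin{enumerate}
    \item Apply \cref{lem:hyp} at $x=\E[X_i]-1$ to get $\Pr[X_i<\alpha(S_i)]\geq \frac{1}{16(n+1)^2}$.
    \item Apply \cref{lem:hyp} at $x=\lceil \E[X_j]+\sqrt{\E[X_j]\log n/16}\rceil$, using $\E[X_j]\geq 25$, to get $\Pr[X_j\geq \alpha(S_j)]\geq \frac{1}{n(n+1)^2}$.
    \item Combine the two via the positive correlation of the events, as in the original proof, giving probability at least $4/n^6$ for the joint event.
\end{enumerate}

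Conditioned on the counts, the elements of $U$ inside $S_i$ and inside $S_j$ are uniform samples of those sizes. By the definition of $\alpha$, with probability at least $1/4$ the sample inside $S_i$ contains no circuit while the sample inside $S_j$ does. In that case the first circuit of the random permutation lies outside $S_i$. This happens with probability at least $n^{-6}$, contradicting $q_{S_i}>1-n^{-6}$.

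The main obstacle is the conditioning step. One must check that, conditioned on both $X_i$ and $X_j$, the two within-set samples are independent and uniform. The events ``no circuit in $S_i$'' and ``circuit in $S_j$'' each then have probability at least $1/2$ given the counts, by monotonicity of independence in the sample size. Their independence gives the factor $1/4$.

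The other thing to verify carefully is that $q_{S_i}$ in the partition setting is measured in the same matroid in which the first circuit is taken. The contraction by $I$ could in principle change the circuits of the remaining elements. Since partition-matroid contraction by $I$ only lowers budgets of parts intersecting $I$, and those parts lie in the removed $S_k$, this causes no issue.
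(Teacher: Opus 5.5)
Your proposal is correct and follows the paper's route: the paper simply states that the proof is identical to that of \cref{clm:peelingSetsAlpha}, and your argument is that proof. You also spell out the partition-specific inputs the paper leaves implicit, namely $q_{S_i}\ge 1-n^{-7}$ via Claim 3.5 of \cite{KPS25} and a union bound, $\alpha(S_i)\ge 50$ from RemoveSmallParts, $\alpha(S_i)/|S_i|\le 1/2$ (with high probability) from Claim 3.6, and the fact that contracting by $I$ only alters parts that are immediately deleted, so $\alpha(S_j)$ and the circuit structure are unchanged; all of these checks are sound.
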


Analogous to \cref{lem:quadraticGrowth} and \cref{lem:peel-round}, we obtain the following lemmas. The proofs proceeds in the same way, except that the assumption $\alpha(S_i)/|S_i|\leq 1/100\log (n)$ there is replaced by a weaker condition $\alpha(S_i)/|S_i|\leq 1/2$ here.
\begin{lemma}
Let $\mathcal{M}$ be a matroid, and let $S_1, \dots, S_k$ be a sequence of sets computed in accordance with \cref{alg:findBasisPartitionMatroids}. Now, let $\ell\in [\log n]$ be an integer, let $T = \{ i \in [k]: |S_i| \in [2^{\ell}, 2^{\ell+1} -1]\}$, let $\gamma = |T|$, and let $a_1, \dots ,a_{\gamma}$ denote the indices in $T$. Then, 
\[
    \gamma = O\left(\sqrt{2^{\ell} /\log n}\right).
\]
\end{lemma}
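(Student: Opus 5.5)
The plan is to reuse the skeleton of the proof of \cref{lem:quadraticGrowth}, with two changes: the stopping condition now only gives $\alpha(S_i)/|S_i|\le 1/2$, and the bound must hold for \emph{every} $\ell$, including small $2^\ell$, with no additive $O(\log n)$ loss.

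\textbf{Recurrence.} Fix $\ell$ and write $L=\log n$. Index the sets in the size class by $a_1<\dots<a_\gamma$ and put $X_i=\frac{\alpha(S_{a_i})}{|S_{a_i}|}\,2^\ell$. Consecutive sets in the class satisfy $|S_{a_i}|\le 2|S_{a_{i+1}}|$, so the claim preceding this lemma (the analogue of \cref{clm:peelingSetsAlpha}) gives, after multiplying through by $2^\ell/|S_{a_{i+1}}|$ exactly as before,
\[
X_{i+1}\ \ge\ X_i+\sqrt{X_i L/32}.
\]
Two boundary facts hold. Since budgets at most $50$ were removed, $\alpha\ge 50$, and so $X_1\ge 50\cdot 2^\ell/2^{\ell+1}=25$. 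Every non-final set satisfies $\alpha/|S|\le 1/2$, so $X_{\gamma-1}\le 2^{\ell-1}$; the last set contributes only an additive $1$.

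\textbf{Change of variables and the main phase.} Set $Y_i=\sqrt{X_i}$, so the recurrence becomes $Y_{i+1}^2\ge Y_i^2+Y_i\sqrt{L/32}$. Split according to whether the step at most doubles $Y$.
\begin{itemize}
\item If $Y_{i+1}\le 2Y_i$, then $Y_{i+1}-Y_i\ge \frac{Y_i\sqrt{L/32}}{Y_{i+1}+Y_i}\ge \sqrt{L}/(3\sqrt{32})$.
\item If $Y_{i+1}> 2Y_i$, then $Y_{i+1}-Y_i\ge Y_i$.
\end{itemize}
In the regime $X_i\ge L/32$ we have $Y_i\ge \sqrt{L/32}$, so in both cases $Y_{i+1}-Y_i=\Omega(\sqrt L)$. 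Since $Y\le 2^{(\ell-1)/2}$ throughout, the number of steps spent in this regime is $O(\sqrt{2^\ell/L})$.

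\textbf{The small regime $25\le X_i<L/32$.} This is the main obstacle, because here the bound is supposed to be $O(1)$ (or even $0$ steps when $2^\ell\ll L$). The recurrence gives $X_{i+1}\ge \sqrt{X_i L/32}$, so $\log\frac{L/32}{X}$ at least halves at every step.
\begin{itemize}
\item If $2^{\ell-1}\le \sqrt{25L/32}$, then already $X_2$ exceeds $2^{\ell-1}$, hence $\gamma\le 2$.
\item In general, the number of steps in this regime is about $\log\frac{\log(L/800)}{\log(L/2^{\ell})}$. This is $O(1)$ whenever $2^\ell\le L^{1-\Omega(1)}$ (so that the ending gap is still $\Omega(\log L)$, the same order as the starting gap), or whenever $2^\ell\ge L$ (so that it is absorbed into $O(\sqrt{2^\ell/L})$).
\end{itemize}
The uncovered window is $L^{1-o(1)}\le 2^\ell\le L$, where the halving argument yields only $O(\log\log L)$ steps; I do not yet see how to close it with $O(1)$.

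\textbf{Strengthening the increment.} To close this window, I would first try to prove a stronger one-step growth when $X_i$ is small. The idea is to redo the hypergeometric estimate in the proof of \cref{clm:peelingSetsAlpha} with a Poisson-type lower tail, $\Pr[X\ge\mu+d]\gtrsim (\mu/(ed))^{d}$. Deviations $d$ with $d\log(d/\mu)\le cL$ then still occur with probability $n^{-O(1)}$. Taking $d$ of order $L/\log(L/\mu)$ gives $X_{i+1}\ge X_i+cL/\log(L/X_i)$. With this increment, $X$ reaches order $L$ after $O(1)$ steps, unless the constraint $X\le 2^{\ell-1}$ stops it first, in which case $\gamma=O(1)$ anyway.

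\textbf{Assembly and the source of the gap.} Combining the two regimes gives $\gamma=O(1)+O(\sqrt{2^\ell/L})$. Since $\gamma\ge 1$ is only being bounded for nonempty classes, the constant is absorbed into the stated form $\gamma=O(\sqrt{2^\ell/\log n})$. The only place the earlier argument lost was this small-$X$ regime, so that is where the work is concentrated.
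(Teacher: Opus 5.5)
Your recurrence, the endpoint bounds $X_1\ge 25$ and $X_i\le 2^{\ell-1}$, and the main-phase count are essentially the paper's argument. The paper simply asserts that $X_1\ge 1$ together with $X_{i+1}\ge X_i+\sqrt{X_iL/32}$ gives $X_\gamma=\Omega(\gamma^2L)$.

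Your worry about the regime $X\lesssim L$ is legitimate. Starting from $X_1=O(1)$, the recurrence needs about $\log\log L$ halvings of $\log(L/X)$ before the $\Omega(\sqrt L)$ increments of $\sqrt X$ take over. So what the recurrence honestly yields is $\gamma\le O(\sqrt{2^\ell/L})+O(\log\log L)$, and the paper's one-line step tacitly drops that additive term. Your claim that the case $2^\ell\ge L$ is absorbed is also wrong: the halving phase then runs to completion, and $\log\log L$ is not $O(\sqrt{2^\ell/L})$ for $L\le 2^\ell\le L(\log\log L)^2$. Your halving-plus-main-phase analysis already proves this weaker form. It is all that \cref{lem:roundPartitionMatroids} needs, since summed over the $O(\log n)$ size classes the additive term costs only $O(\log n\cdot\log\log\log n)$.

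The genuine gap is in your attempt to eliminate the additive term.

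First, the Poisson-tail strengthening is only sketched. It would require re-proving the $\alpha$-growth claim with a new hypergeometric upper-tail estimate. Even granting $X_{i+1}\ge X_i+cL/\log(L/X_i)$ does not give $O(1)$ steps: with $g_i=\log(L/X_i)$ you get only $g_{i+1}\le\log(g_i/c)$. So the recurrence guarantees $X=\Theta(L)$ only after about $\log^* L$ steps.

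Second, the final absorption is a non sequitur. When $2^\ell=o(L)$, the target $\sqrt{2^\ell/L}$ is $o(1)$, so your own bound $\gamma\le 2$ in that range does not give $\gamma=O(\sqrt{2^\ell/L})$. A nonempty class is exactly the problematic case, not a rescue.

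To absorb even an additive constant you would need a separate fact, such as every peeled set having $|S_i|=\Omega(\log n)$. This does hold here. $S_i$ is formed only if the length-$|\M|/10$ prefix is dependent except with probability $n^{-14}$. Whenever $S_i$ avoids a dependent prefix, the first circuit lies outside $S_i$. Since $S_i$ avoids the prefix with probability at least $(13/15)^{|S_i|}$, the bound $q_{S_i}\ge 1-n^{-7}$ forces $|S_i|=\Omega(\log n)$. Even so, this does not absorb the $\log\log L$ term.

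The sound route is to state and carry the additive lower-order term rather than try to remove it.
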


\begin{lemma}\label{lem:roundPartitionMatroids}
Let $\mathcal{M}$ be a matroid, and let $S_1, \dots ,S_k$ be a sequence of sets computed in accordance with \cref{alg:findBasisPartitionMatroids}. We have $k=O(n^{1/3}/\log ^{1/3}(n))$.
\end{lemma}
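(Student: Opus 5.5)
The plan mirrors the proof of \cref{lem:peel-round}, with the per-scale bound from the preceding lemma (the analogue of \cref{lem:quadraticGrowth}) as the only input. That lemma gives, for each dyadic scale $\ell$, that the number of peeled sets with $|S_i| \in [2^{\ell}, 2^{\ell+1}-1]$ is $O(\sqrt{2^{\ell}/\log n})$. The weakening relative to \cref{lem:quadraticGrowth} comes from the terminal condition: the recurrence $X_{i+1} \geq X_i + \sqrt{X_i \log n/32}$ gives $X_\gamma = \Omega(\gamma^2 \log n)$, but now only $X_\gamma \leq 2^{\ell}$ holds (from $\alpha(S_i)/|S_i| \leq 1/2$) instead of $2^{\ell}/(100\log n)$. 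So we get $\gamma^2 \log n = O(2^{\ell})$, which is exactly the stated bound, and we take it as given.

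Next comes the threshold split. Choose $\tau = n^{2/3}\log^{1/3} n$ and split the sets $S_1,\dots,S_k$ by size. The sets are disjoint subsets of a ground set of size $n$, because each iteration removes $S_k$ from the matroid before continuing. Hence at most $n/\tau = n^{1/3}/\log^{1/3} n$ of them have size exceeding $\tau$.

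For the small sets we sum the per-scale bound over $\ell \leq \ell^*$, where $\ell^*$ is the largest integer with $2^{\ell^*} \leq \tau$. This gives $\sum_{\ell \leq \ell^*} O(\sqrt{2^{\ell}/\log n})$, a geometric series dominated by its last term, so it is $O(\sqrt{\tau/\log n})$. Since $\sqrt{\tau/\log n} = \sqrt{n^{2/3}\log^{-2/3} n} = n^{1/3}/\log^{1/3} n$, the two contributions balance. Adding them yields $k = O(n^{1/3}/\log^{1/3} n)$.

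The only potentially delicate point is ensuring the preceding lemma applies to every scale, including the degenerate ones. For the sizes at small scales, the number of sets at scale $\ell$ is at most $O(\sqrt{2^\ell})$ anyway, and these are absorbed into the geometric sum. We also need the initial value $X_1 \geq 1$, which holds because $\alpha \geq 50$ after \text{RemoveSmallParts}. Otherwise the argument is a direct calculation, and the choice of $\tau$ to balance the two terms is the main point.
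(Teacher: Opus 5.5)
Your proposal is correct and follows the paper's route. The paper likewise reruns the proof of \cref{lem:peel-round} using the per-scale bound $O(\sqrt{2^{\ell}/\log n})$, which comes from replacing the cap $\alpha(S_i)/|S_i|\le 1/(100\log n)$ with $\alpha(S_i)/|S_i|\le 1/2$. Your threshold $\tau = n^{2/3}\log^{1/3} n$ is exactly the choice that balances the disjointness count $n/\tau$ against the geometric sum $O(\sqrt{\tau/\log n})$.
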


Now we are ready to prove the main theorem.

\begin{theorem}
There is a randomized algorithm that, given any partition matroid $\M$ on $n$ elements, finds a basis of $\M$ with high probability in $O(n^{1/3}/\log^{1/3} (n))$ adaptive rounds, using only polynomially many independence queries per round.
\end{theorem}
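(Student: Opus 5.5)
The plan mirrors the proof of \cref{thm:formalBasisFinding}, specialized to \cref{alg:findBasisPartitionMatroids}. For partition matroids we can replace the $\Omega(n/\log n)$ progress per phase by $\Omega(n)$ progress per phase. Combined with the sharper bound $k=O(n^{1/3}/\log^{1/3} n)$ of \cref{lem:roundPartitionMatroids}, the recurrence then solves to $O(n^{1/3}/\log^{1/3}n)$.

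\emph{Correctness.} Every contraction is on an independent set, and contracting preserves the property that a basis of the contracted matroid plus the contracted set is a basis. In the inner loop, for each discovered part $T$ with budget $\ell$, we contract on $\ell$ elements $P\subseteq T$. After this contraction the remaining elements of $T$ are loops (their part budget is exhausted), so deleting $S_k$ loses no rank. The parts removed by $\text{RemoveSmallParts}$ are handled as in \cite{KPS25}. Hence the union of all contracted sets is a basis.

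\emph{Rounds per phase.} Each iteration of the while loop uses $O(1)$ rounds, since $\text{RecoverSinglePart}$ runs in one round and all $n^{15}$ permutations are processed in parallel. By \cref{lem:roundPartitionMatroids} a phase has $k=O(n^{1/3}/\log^{1/3} n)$ iterations before it stops. I would argue this from the same growth claim: while $\alpha(S_i)/|S_i|<1/2$, the $\alpha$-ratio within each dyadic size class grows as in \cref{lem:quadraticGrowth}. If instead $\alpha(S_i)/|S_i|\ge 1/2$, then by Claim 3.6 of \cite{KPS25} the phase terminates at \cref{line:contract}.

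\emph{Progress per phase.} A phase ends in one of three ways.
\begin{enumerate}[label=(\roman*)]
\item It ends at \cref{line:contract}. Then it contracts an independent set of size $|\M|/10\ge n/20$, since $|\M|\ge n/2$.
\item Some $|S_k|>n/4$. Then more than $n/4$ elements are eliminated, by contraction or by being deleted as loops.
\item $|\M|<n/2$. Then at least $n/2$ elements have been removed.
\end{enumerate}
In every case the ground set shrinks by $\Omega(n)$ elements. The recurrence is therefore $T(n)\le T((1-c)n)+O(n^{1/3}/\log^{1/3}n)$, a geometric sum bounded by $O(n^{1/3}/\log^{1/3}n)$. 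The failure probabilities are $2^{-n}$-type per step, so a union bound over polynomially many steps finishes the argument. Recursion on matroids of constant size is trivial.

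The main obstacle is the accounting that $S_k$ and the contracted set are handled consistently in case (ii), together with verifying that the growth claim applies to the sets of this algorithm. Specifically, the argument requires $q_{S_i}\ge 1-n^{-7}$ for each $S_i$ (from Claim 3.5 of \cite{KPS25} plus a union bound) and the budget lower bound $\ge 50$ that plays the role of $\alpha(S_i)\ge 50$. I would check these hypotheses carefully and otherwise invoke the stated lemmas as black boxes.
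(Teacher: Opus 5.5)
Your proposal is correct and follows the paper's proof. Each iteration takes $O(1)$ rounds, and \cref{lem:roundPartitionMatroids} bounds a phase by $O(n^{1/3}/\log^{1/3} n)$ iterations; every phase then makes $\Omega(n)$ progress, so the geometric recurrence $T(n)=T(n-\Omega(n))+O(n^{1/3}/\log^{1/3} n)$ gives the bound. Your cases (ii) and (iii) together are the paper's observation that at \cref{line:end} the removed sets $S_1,\dots,S_k$ have total size at least $n/4$, and your case (i) is its contraction of $|\M|/10\ge n/20$ elements.
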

\begin{proof}
It is straightforward to verify that \cref{alg:findBasisPartitionMatroids} correctly finds a basis of the matroid. For the round complexity, note that by \cref{lem:roundPartitionMatroids}, the algorithm performs at most $O(n^{1/3}/\log^{1/3}(n))$ iterations. Since each iteration uses only $O(1)$ rounds, the total number of rounds spent before recursion and termination is $O(n^{1/3}/\log^{1/3}(n))$. If the algorithm terminates at \cref{line:contract}, then it finds and contracts an independent set of size $|\M|/10\geq n/20$. On the other hand, if it terminates at \cref{line:end}, then the sets $S_1,\dots,S_k$ found have total size at least $n/4$, and they are all deleted from the matroid. Therefore, in either case, the algorithm makes $\Omega(n)$ progress towards finding a basis.

Let $T(n)$ denote the total number of adaptive rounds required to find a basis on any $n$-element partition matroid. Then we have the recurrence
\[
T(n)= T(n-\Omega(n)) + O(n^{1/3} / \log^{1/3}(n)).
\]
Solving it yields $T(n)=O(n^{1/3}/\log^{1/3} n)$. Finally, each iteration performs only polynomially many independence queries, and hence the overall query complexity is polynomial in $n$. This completes the proof.
\end{proof}

\section{Deterministic Algorithms for Linear Matroids}

First, we use the fact that the \emph{number} of linear matroids is bounded:

\begin{claim}[\cite{nelson2016almost}]\label{clm:linearMatroidBound}
    For $n \geq 12$, there are at most $2^{n^3/4}$ linear matroids on $n$ elements.
\end{claim}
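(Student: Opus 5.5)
The plan is to count linear matroids through the \emph{zero patterns} of a fixed family of polynomials. The key tool is the theorem of Rónyai, Babai and Ganapathy. It says that for any field $\F$, the number of distinct zero/nonzero patterns realized by $N$ polynomials of degree at most $d$ in $v$ variables is at most $\binom{Nd+v}{v}$. This bound does not depend on $\F$, which is essential because a linear matroid may be representable only over some unknown field.

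\textbf{Step 1: reduction to zero patterns.} Fix the rank $r$ of the matroid, with $0\le r\le n$. If $\M$ is linear over $\F$, it has a representation by an $r\times n$ matrix over $\F$. After permuting columns (at most $n!$ choices, or $\binom{n}{r}$ choices of which columns form a basis) and row-reducing, we may assume the matrix has the form $[I_r \mid A]$ with $A$ an $r\times(n-r)$ matrix. Treat the $v=r(n-r)$ entries of $A$ as variables. The matroid is then determined by which of the $N=\binom{n}{r}$ maximal minors of $[I_r\mid A]$ vanish. Each such minor is (up to sign) a minor of $A$, hence a polynomial of degree at most $d=\min(r,n-r)$ in the entries of $A$. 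So for fixed $r$ and fixed basis position, the number of linear matroids is at most the number of zero patterns of these $N$ polynomials. Here zero patterns are counted over all fields simultaneously, which the field-independent bound handles.

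\textbf{Step 2: estimate.} By Rónyai–Babai–Ganapathy, the count is at most $\binom{Nd+v}{v}\le \left(e(Nd+v)/v\right)^{v}$. Taking logarithms, the exponent is at most
\[
r(n-r)\left(\log\binom{n}{r}+\log\left(\frac{e\,d}{r(n-r)}+ \frac{e}{\binom{n}{r}}\right)\right),
\]
and the bracket is at most $n-\Omega(\log n)$. Since $r(n-r)\le n^2/4$, the total for fixed $r$ and fixed basis position is $2^{n^3/4-\Omega(n^2\log n)}$ when $r$ is near $n/2$. When $r$ is far from $n/2$, both $r(n-r)$ and $\log\binom{n}{r}$ are smaller, which gives even more slack. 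Finally, multiply by the $\binom{n}{r}$ choices of basis position and sum over the $n+1$ values of $r$. These factors contribute only $2^{O(n)}$, which the $\Omega(n^2\log n)$ slack absorbs.

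\textbf{Main obstacle.} The hard part is the constant bookkeeping needed to get exactly $n^3/4$ for all $n\ge 12$, rather than $n^3/4+O(n^2\log n)$. I would first try to use the normalization $[I_r\mid A]$ so that $v=r(n-r)$ rather than $rn$. I would also use the bound $\binom{n}{r}\le 2^n/\sqrt{n}$ for the middle ranks, and check small $n$ directly if necessary. If this still falls short, I would defer to the precise counting in \cite{nelson2016almost}, since only a bound of the form $2^{O(n^3)}$ is needed for the subsequent non-uniform derandomization.
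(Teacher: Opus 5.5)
The paper gives no argument of its own here; it simply cites \cite{nelson2016almost}. Your zero-pattern route is the natural one: maximal minors of $[I_r\mid A]$, the Rónyai--Babai--Ganapathy bound, then summing over bases and ranks. The gap is the step where zero patterns are ``counted over all fields simultaneously, which the field-independent bound handles''; this does not follow. RBG is a statement about one field at a time: for each fixed $\F$, the points of $\F^v$ realize at most $\binom{Nd+v}{v}$ patterns. That this number does not depend on $\F$ says nothing about the \emph{union} of the pattern sets over all fields. The union is what you need, because linear matroids on $[n]$ are not all representable over one field: the Fano matroid needs characteristic $2$, and the non-Fano matroid needs characteristic $\neq 2$.

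The union really can be larger than the per-field bound. For example, $f_1=x$ and $f_2=x-2$ in $\mathbb{Z}[x]$ have zero sets $\emptyset,\{1\},\{2\}$ over $\mathbb{Q}$ and $\{1,2\}$ over $\F_2$, which is four patterns against $\binom{3}{1}=3$. With $f_i=x-a_i$ and the $a_i$ chosen by the Chinese remainder theorem, all $2^N$ patterns occur. Summing the per-field bound over characteristics also fails, since there are infinitely many characteristics. So the argument has to use something specific to your polynomials, namely their small integer coefficients.

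One way to close the gap is to run the RBG argument over $\mathbb{Z}$ with a lattice potential.
\begin{itemize}
\item For each counted matroid $M$, fix one field $F_M$ and a point $u_M$ realizing it. Let $f_J$ be the maximal minor on columns $J$, and set $g_M=\prod_{J\in\mathcal{B}(M)}f_J\in\mathbb{Z}[x]_{\le Nd}\cong\mathbb{Z}^K$, where $K=\binom{Nd+v}{v}$.
\item Process the matroids in order of nonincreasing number of bases. Evaluation at $u_M$ kills the $\mathbb{Z}$-span $L$ of the earlier $g$'s, because each earlier $M'$ has a basis that is not a basis of $M$. It does not kill $g_M$.
\item If $\mathrm{char}\,F_M=0$, evaluation also kills $\mathbb{Q}L$, so $\rank L$ increases. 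If $\mathrm{char}\,F_M=p>0$, either $\rank L$ increases, or $g_M\in\mathbb{Q}L\setminus L$ and then $[L+\mathbb{Z}g_M:L]\ge 2$.
\item Track $\log_2\mathrm{covol}(L)$, which is $\ge 0$ since Gram determinants of integer lattices are positive integers. It rises by at most $\log_2\|g_M\|_2\le N\log_2(d!)$ at each of the at most $K$ rank-increasing steps. It falls by at least $1$ at every other step.
\end{itemize}
This gives at most $K\bigl(1+N\log_2(d!)\bigr)$ matroids per rank and basis position. The extra factor is $2^{n+O(\log n)}$, which your $\Omega(n^2\log n)$ slack absorbs.

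The constants you were worried about are then harmless. Uniformly in $r$ we have $Nd/v=N/\max(r,n-r)\le\frac{2}{n}\binom{n}{\lfloor n/2\rfloor}$, so each exponent is at most $\frac{n^2}{4}\bigl(n-\frac32\log_2 n+O(1)\bigr)$. Already at $n=12$ the full sum, extra factor included, is below $2^{340}<2^{432}$, and the gap widens as $n$ grows.
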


\begin{proof}[Proof of \cref{thm:1-3roundsLinearIntro}]
    We consider the following basic algorithm: we run $n^4$ instances of \cref{alg:findBasis} independently on the matroid $\M$. Denote the resulting outputs of the algorithms by $\mathcal{A}_{1, \mathcal{R}}(\M),\dots \mathcal{A}_{n^4, \mathcal{R}}(\M)$, where we use  $\mathcal{R}$ to denote the random bits used by the algorithm.
    
    By \cref{thm:formalBasisFinding}, we know that each trial of \cref{alg:findBasis} succeeds in finding a basis with probability $\geq 1/2$. Thus, with probability $\geq 1 - 2^{-n^4}$ over the randomness $\mathcal{R}$, at least one of $\mathcal{A}_{1, \mathcal{R}}(\M),\dots \mathcal{A}_{n^4, \mathcal{R}}(\M)$ is a basis.

    Because there are only $\leq 2^{n^3/4}$ many linear matroids on $n$ elements (see \cref{clm:linearMatroidBound}), we then know that, 
    \[
    \Pr_{\mathcal{R}}[\forall \text{ linear matroids }\M: \text{ at least one of }\mathcal{A}_{1, \mathcal{R}}(\M),\dots \mathcal{A}_{n^4, \mathcal{R}}(\M) \text{ is a basis}] 
    \]
    \[
    \geq 1 - 2^{-n^4} \cdot 2^{n^3/4} \geq 1/2.
    \]
    This implies that there \emph{exists} a single choice of $\mathcal{R}$ such that for every linear matroid $\M$, at least one of $\mathcal{A}_{1, \mathcal{R}}(\M),\dots \mathcal{A}_{n^4, \mathcal{R}}(\M) $ is a basis of $\M$.

    Finally, observe that, conditioned on one of $\mathcal{A}_{1, \mathcal{R}}(\M),\dots \mathcal{A}_{n^4, \mathcal{R}}(\M) $ being a basis, there is a simple one round algorithm for identifying the basis. One simply queries the independent oracle with each of $\mathcal{A}_{1, \mathcal{R}}(\M),\dots \mathcal{A}_{n^4, \mathcal{R}}(\M) $, and keeps whichever output is both independent and of the largest size. This concludes the theorem. 
\end{proof}

\end{document}